\algnewcommand{\LeftComment}[1]{\Statex \(\triangleright\) #1}
\algnewcommand\algorithmicparfor{\textbf{parfor}}
\algnewcommand\algorithmicpardo{\textbf{do}}
\algnewcommand\algorithmicendparfor{\textbf{end\ parfor}}
\renewcommand{\ALG@name}{Algorithm}
\algnewcommand{\IIf}[2]{\State{#1}\algorithmicif\ #2\ \algorithmicthen}
\algnewcommand{\EndIIf}{\unskip\ \algorithmicend\ \algorithmicif}
\newcommand{\N}{\ensuremath{\mathbb N}}
\newcommand{\F}{\ensuremath{\mathbb F}}
\newcommand{\PTS}[1]{\ensuremath{{\mathcal{T}}\left(#1\right)}}
\newcommand{\LO}[1]{\ensuremath{{o\mathopen{}\left({#1}\right)\mathclose{}}}\xspace}
\newcommand{\BO}[1]{\ensuremath{{O\mathopen{}\left({#1}\right)\mathclose{}}}\xspace}
\title{Optimal Threshold Padlock Systems}
\author{
Jannik Dreier\footnote{Universit\'e de Lorraine, CNRS, Inria, LORIA, F-54000 Nancy, France}
\and
Jean-Guillaume Dumas\footnote{Universit\'e Grenoble Alpes, IMAG-LJK, CNRS UMR 5224, Grenoble, France}\linebreak
\and
Pascal Lafourcade\footnote{Universit\'e Clermont Auvergne, LIMOS, CNRS UMR 6158,  Aubi\`ere, France}
\and
L\'eo Robert\footnotemark[3]
}
\declaretheorem{theorem}
\declaretheorem[sibling=theorem]{lemma}
\declaretheorem[sibling=theorem]{proposition}
\declaretheorem[sibling=theorem]{definition}
\declaretheorem[sibling=theorem]{definitions}
\declaretheorem[sibling=theorem]{remark}
\begin{document}
\maketitle

\begin{abstract}
In 1968, Liu described the problem of securing documents in a shared
secret project. In an example, at least six out of eleven
participating scientists need to be present to open the lock securing
the secret documents. Shamir proposed a mathematical solution to this
physical problem in 1979, by designing an efficient $k$-out-of-$n$
secret sharing scheme based on Lagrange's interpolation. Liu and
Shamir also claimed that the minimal solution using physical locks
is clearly impractical and exponential in the number of participants.
In this paper we relax some implicit assumptions in their claim and
propose an optimal physical solution to the problem of Liu that uses
physical padlocks, but the number of padlocks is not greater than the
number of participants.  Then, we show that no device can do better
for $k$-out-of-$n$ threshold padlock systems as soon as
$k\geq{\sqrt{2n}}$, which holds true in particular for Liu's example.
More generally, we derive bounds required to implement any threshold
system and prove a lower bound of $\BO{\log(n)}$ padlocks
for any threshold larger than $2$. For instance we propose an optimal
scheme reaching that bound for $2$-out-of-$n$ threshold
systems and requiring less than $2\log_2(n)$ padlocks.
We also discuss more complex access structures, 
a wrapping technique, and other
sublinear realizations like an algorithm to generate $3$-out-of-$n$
systems with $2.5\sqrt{n}$ padlocks.
Finally we give an algorithm building $k$-out-of-$n$ threshold padlock
systems with only $\BO{\log(n)^{k-1}}$ padlocks.
Apart from the physical world, our results also show that it is
possible to implement secret sharing over small fields.

\end{abstract}

\section{Introduction}

In 1979, in his paper on secret
sharing~\cite{DBLP:journals/cacm/Shamir79}, A.~Shamir presented the
following threshold problem introduced by C.~L.~Liu in~\cite[Example
1-11]{nla.cat-vn2101463}: \emph{\small Eleven scientists are working
on a secret project. They wish to lock up the documents in a cabinet
so that the cabinet can be opened if and only if six or more of the
scientists are present. What is the smallest number of locks needed?
What is the smallest number of keys to the locks each scientist must carry?}
Liu and Shamir answered this \emph{physical} problem using mathematics as
follows: \emph{\small It is not hard to show that the minimal solution uses
  462 locks and 252 keys per scientist. These numbers are clearly
  impractical, and they become exponentially worse when the number of
  scientists increases.}
This is why Shamir proposed to use polynomial and Lagrange's interpolation
to solve Liu's question. His clever idea is to hide the secret in the
constant term of a polynomial of degree $k{-}1$. Then he distributes one
point of the chosen polynomial to each of the $n$ participants. As
soon as $k$ participants share their points, they can recover the
secret using Lagrange's interpolation and algorithms in
$O(n\log^2 n)$ operations~\cite{10.5555/578775,10.5555/270146}. A few
years later, verifiable secret sharing was introduced by Chor et
al. in~\cite{10.1109/SFCS.1985.64} and improved
in~\cite{10.1109/SFCS.1987.4}. The idea is to offer the possibility to
verify if the points are valid.

We show that Liu's problem is solvable using far less locks.
Liu and Shamir claim stems from the restriction that there
should be a lock for each combination of $6$ scientists,
$462=\binom{11}{6}$, and that every scientist needs the keys for every
combination of scientists that includes him. This is
$252=\binom{10}{5}$ keys. Liu-Shamir's minimality result thus assumes
that the only physical arrangements of locks that allow threshold
systems are those where the opening of any lock opens the cabinet.

{\bf Contributions:}
\begin{enumerate}
\item As a warm up, we relax Liu and Shamir's assumption
and design a \emph{physical $k$-out-of-$n$ threshold padlock system}.
We have
build a prototype of this physical device.
Our system only requires one padlock and one key per
participant, which is practical, when compared to the previous
exponential solution.
 \item Then, we establish \emph{lower bounds} on the number of padlocks necessary
for any abstract threshold system.

Specifically, we show that for a $2$-out-of-$n$ configuration,
less than $2\lceil\log_2(n)\rceil$ padlocks are sufficient, provided that keys can
be duplicated. In fact, there is an optimal solution for this type of
configuration, with ${\mathcal O}(\log(n))$ padlocks, and we also show
that this optimum can be realized, using our physical system as one
building block.

Differently, for $k$-out-of-$n$ configurations with $k\geq{3}$, it is
more complicated to solve the problem with fewer than $n$ padlocks.
We first prove that this is impossible for $k\geq{\sqrt{2n}}$ and thus
that our physical device is optimal in these cases.
For instance, this answers Liu's question: the minimal number of
padlocks for a $6$-out-of-$11$ configuration is $11$, as
$6>\sqrt{22}$. Our system with $11$ padlocks and only $1$ key per
participant, is thus optimal in this case.
We are then nonetheless able to give algorithms
building systems for $k=3$ with only about $2.5\sqrt{n}$ padlocks and
each participant owns $3$ keys. These realizations use more complex
access structures and associated algorithmic building blocks,
that we provide.

\item We discuss more \emph{complex access structures}, which include for instance ensuring that Alice and Bob
can open the lock with any other third participant, but not
together. Another possibility is for instance that Alice is highly
ranked and can open the padlock by herself but that any others need to
be at least two.
For this we develop a tentative
\emph{padlock algebra} for logic gates and give associated algorithms.
The idea is to combine threshold cryptography and
secret sharing with the theory of block designs, packings and Sperner
families.

\item Finally, we propose a \emph{recursive algorithm} to build larger systems,
that requires only a logarithmic number of padlocks.
Asymptotically, our algorithm requires only $\BO{\log(n)^{k-1}}$
padlocks to realize a $k$-out-of-$n$ threshold padlock system.

\item Lastly, we also show that our physical results do apply to the \emph{numerical world}
by linking the number of padlocks to the size of the finite field used
for secret sharing.
\end{enumerate}

{\bf Outline:} In \cref{sec:existthreshold}, we review existing
threshold mechanisms that use physical padlocks, or visual
cryptography, which not perfectly answer Liu's problem.  In
\cref{sec:our}, we describe our novel physical $k$-out-of-$n$
threshold padlock system device.  In \cref{sec:minimal}, we derive
generic bounds on the number of padlocks required to realize a given
threshold configuration.  We also show the optimality of our device
for $2$-out-of-$n$ systems.  Then, in \cref{sec:algebra}, we discuss
more complex access structures.  We provide for instance solutions on
logic formulae in \cref{ssec:normalforms,app:blocks}, and in
\cref{ssec:knotted} we introduce the use of a sealed wire.  Further
lower bounds, including the optimality of our solution for
sufficiently large $k$, together with smaller realizations, with
strictly less than $n$ padlocks, are given in \cref{sec:sqrt}.  Our
recursive construction is then given in \cref{sec:rec} and the link
with the numerical aspects in \cref{sec:fieldsize}.

\section{Related Work}\label{sec:existthreshold}

Threshold cryptography in general received a lot of attention
recently, since on March 1, 2019 the Computer Security Division (CSD)
at the National Institute of Standards and Technology
(NIST)
published the final version of NISTIR
8214, ``\emph{Threshold Schemes for Cryptographic
  Primitives}''~\cite{nist:threshold}.
This reports explicitly also mentions physical threshold solutions (page
10, line 55): ``\emph{While we focus on secure implementations of
  cryptographic primitives, the actual threshold techniques may also
  include non-cryptographic techniques}.''  We present existing
physical solutions for threshold cryptography, while a survey of
cryptographic threshold schemes by Y.~Desmedt can be found
in~\cite{Desmedt2011}.  We distinguish two classes of solutions: the
first one uses physical keys and padlocks; the second one uses visual
cryptography, as introduced by M.~Naor and A.~Shamir in
1994~\cite{DBLP:conf/eurocrypt/NaorS94}.

\subsection{State of the art, using Padlocks}

A $1$-out-of-$1$ padlock is just one simple
physical padlock. There are many systems for $1$-out-of-$n$ padlocks, both
home made and commercial products.  There also exist commercial
solutions for $n$-out-of-$n$ padlocks, which are used by for example by
electricians to secure an electrical circuit as explained next.

\subsubsection[1-out-of-n padlocks]{$1$-out-of-$n$ locks}\label{sec:1on}
In \cref{fig:lock1ondaisy}, left, a $1$-out-of-$2$ padlocks is done
simply with two physical padlocks.  This approach can be generalized to
$1$-out-of-$n$ as in \cref{fig:lock1ondaisy}, right, and is
called a \emph{daisy chain}. We notice that the bottom left yellow
padlock was badly placed, and it is useless. In this case the owner of
this padlock cannot open the door.  We call this the \emph{daisy chain
  attack}.  For example in \cref{fig:lock1ondaisy}, if the owner
of the bottom padlock opens it and then locks it upper in the chain, then
he excludes all the owners of these padlocks, as they cannot open the
door any more\footnote{A deliberate attack adding an
  additional chain and padlock to the gate, or even welding padlocks
  together, is always possible, and out of scope here: we aim to protect
  against attacks that could be ``excused'' with a wrong use 
  of the system.}.
\begin{figure}[htbp]
\centering
\includegraphics[height=4cm]{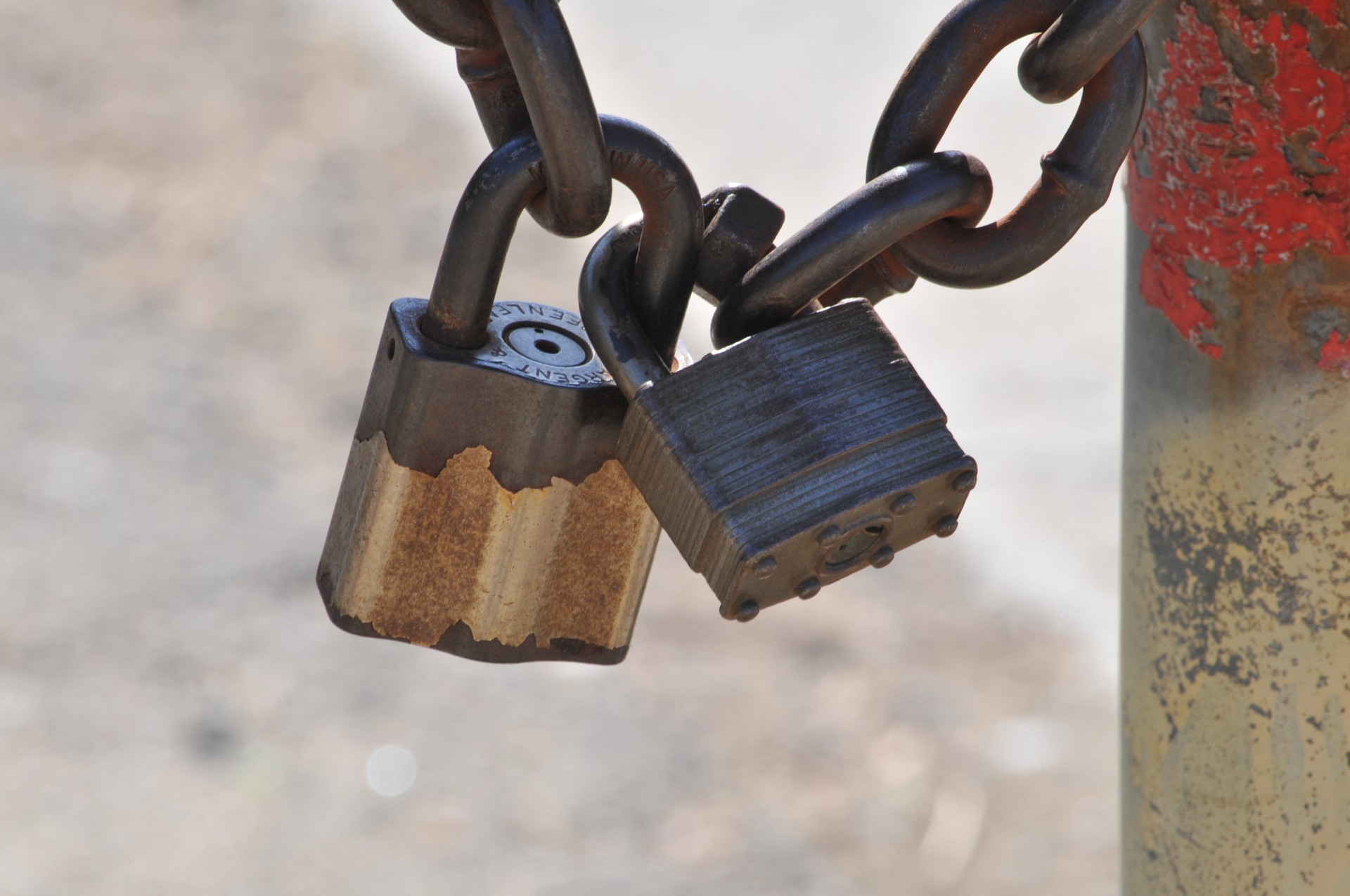}%
~~%
\includegraphics[height=4cm]{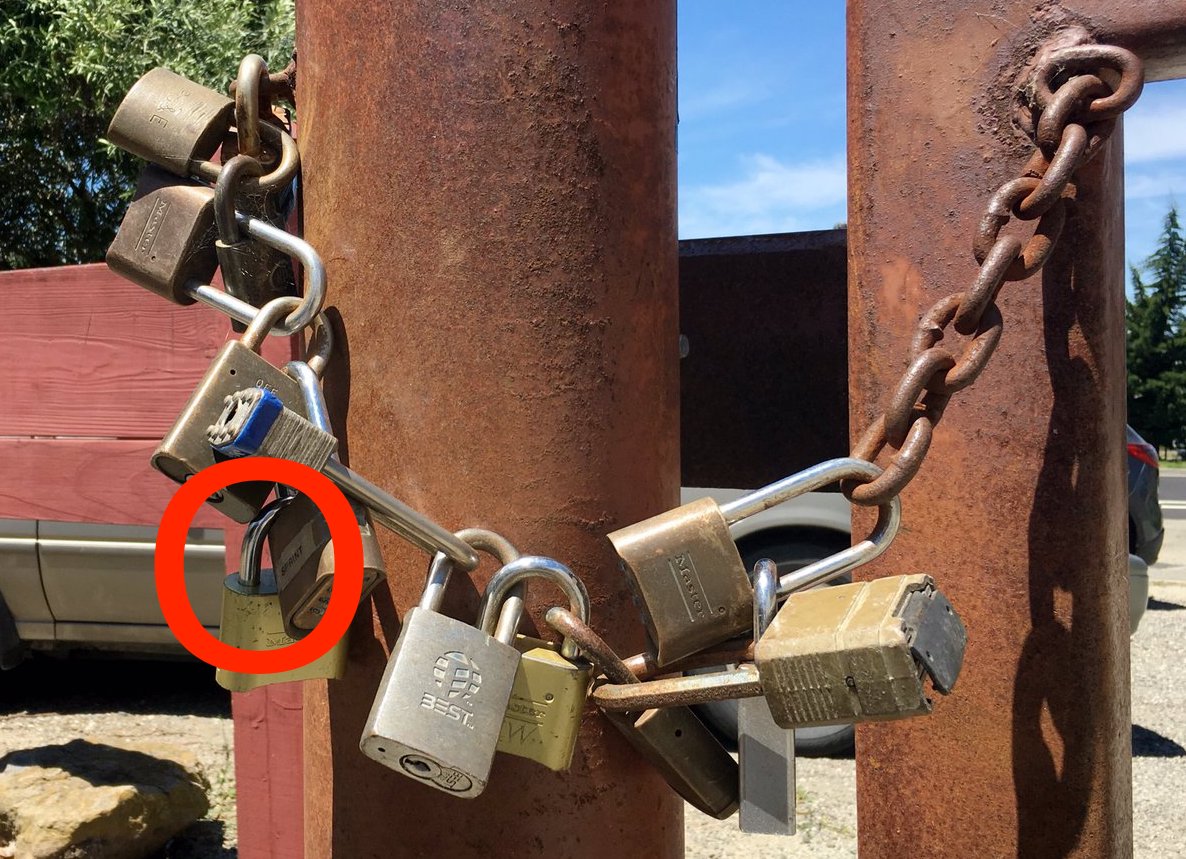}
\caption{Physical $1$-out-of-$n$ padlocks, forming a daisy
  chain. {\bf Left:} The simplest daisy chain with two padlocks. {\bf Right:} A longer daisy chain wiht one useless padlock.}\label{fig:lock1ondaisy}
\end{figure}

In \cref{fig:lock1o6}, we can see two different mechanisms that
perform $1$-out-of-$6$ padlocks to open the gate of a field. The first
one has six padlocks that block the trigger. As soon as one padlock
is opened a latch is removed and then the door can be opened.  It is
the natural extension of the solution of \cref{fig:lock1ondaisy}
that avoids the daisy chain attacks.
Next, the second picture of \cref{fig:lock1o6}, shows a different
solution also implementing a
$1$-out-of-$6$ padlock, and which is also resistant to the daisy chain
attack. In this system, as soon as one padlock is removed, it is possible
to turn the circle and then to pass the stick in the corresponding
hole in order to open the door.

\begin{figure}[htb]
\centering
\includegraphics[height=5cm]{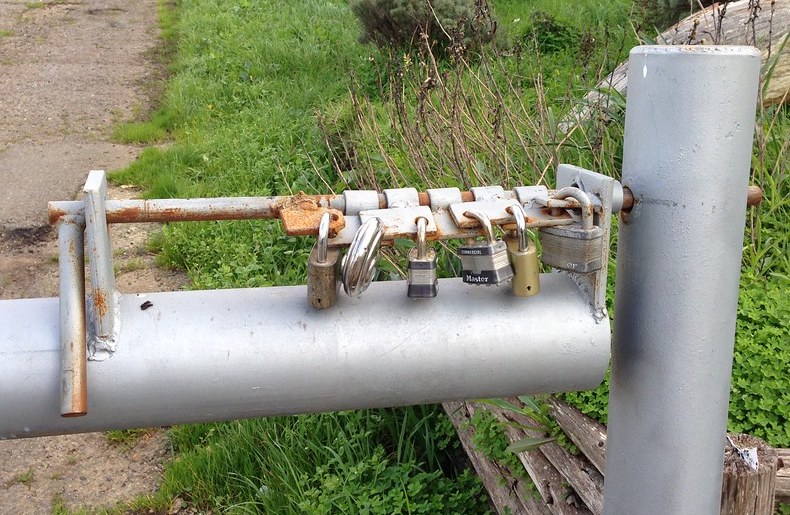}%
~~%
\includegraphics[height=5cm]{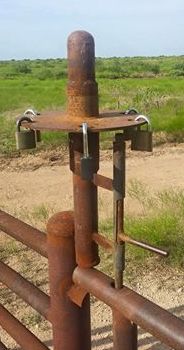}%
\caption{Two ad-hoc physical $1$-out-of-$6$ padlocks.}\label{fig:lock1o6}
\end{figure}
\begin{figure}[htb]
\centering
\includegraphics[height=3cm]{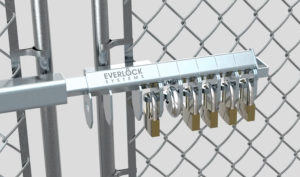}%
~~%
\includegraphics[height=3cm]{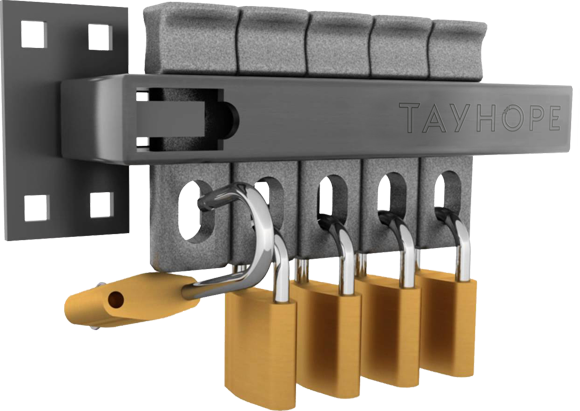}%
\caption{
Physical $1$-out-of-$5$ padlocks, first by Everlock System, then
by Tayhope multi-locking system.}
\label{fig:lock1o10T}
\end{figure}

There are also commercial products for $1$-out-of-$n$ padlocks. 
The first picture of \cref{fig:lock1o10T} shows a commercial
product designed 
by Everlock Systems: the model SLX2~\cite{Everlock:SLX2}.
The second picture of \cref{fig:lock1o10T} shows a commercial product
sold by Tayhope Multi-Locking Systems~\cite{Tayhope:multilock}.
Everlock Systems has multiple
patents on 
their designs~\cite{everlock1,everlock2,everlock3,everlock4} and their
solution is close to the mechanism proposed
on the left side of \cref{fig:lock1o6}. 
Differently, Tayhope mechanism allows the owner of a padlock to remove the metallic
stick which enables the opening of the door, by pushing all the padlocks
on one side.

Now, if one is interested in reducing the number of padlocks, one can
realize a $1$-out-of-$n$ threshold system with a single lock:
duplicate the key of one padlock $n$ times and distribute the key to
all the participants. The obtained system has not all the physical
properties of the daisy chain or the systems of
\cref{fig:lock1o10T} (for instance the latter does not need a
trusted third party to setup the chain or to duplicate the keys), but
is probably more economical.
Overall, we have the following possibilities for $1$-out-of-$n$
systems:
\begin{itemize}
\item A single padlock with $n$ duplicated keys: probably most economical;
\item A daisy chain: if keys cannot be duplicated;
\item Systems like those of \cref{fig:lock1o10T}: they do not
  require a trusted third party for the setup, as each participant can bring their own lock and key(s).
\end{itemize}

\subsubsection[n-out-of-n locks]{$n$-out-of-$n$ locks}
Finally, there are physical $n$-out-of-$n$ mechanisms using padlocks that
are used for example for operations on high-voltage circuits and
transformers.  Two examples of $6$-out-of-$6$ padlocks are given in
\cref{fig:lock6o6}. The idea is that nobody should be able to
turn on the electricity while someone is still working on the
high-voltage transformer. To achieve this, each technician places a
padlock on the main switch before entering the danger zone.  This ensures
that all technicians have to leave the danger zone before electricity
can be restored. The example can easily be extended to a
$n$-out-of-$n$ system.

\begin{figure}[htbp]
\centering
\includegraphics[height=3cm]{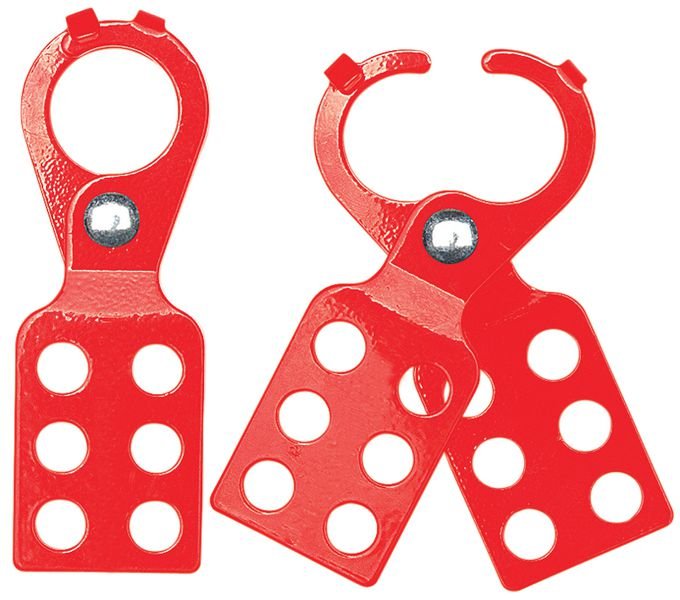}%
~~\includegraphics[height=3cm]{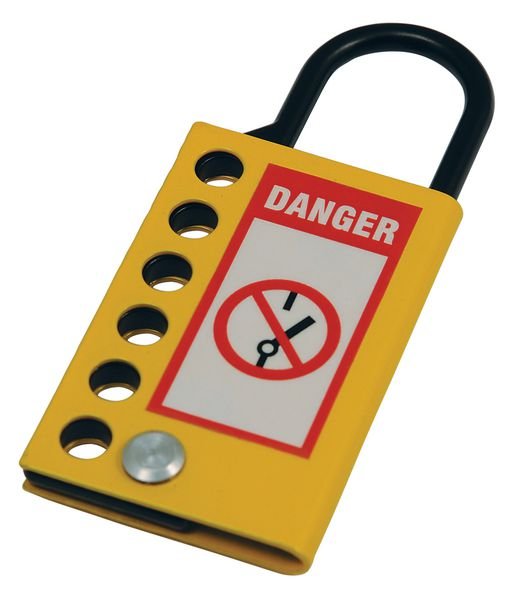}
~~\includegraphics[height=3cm]{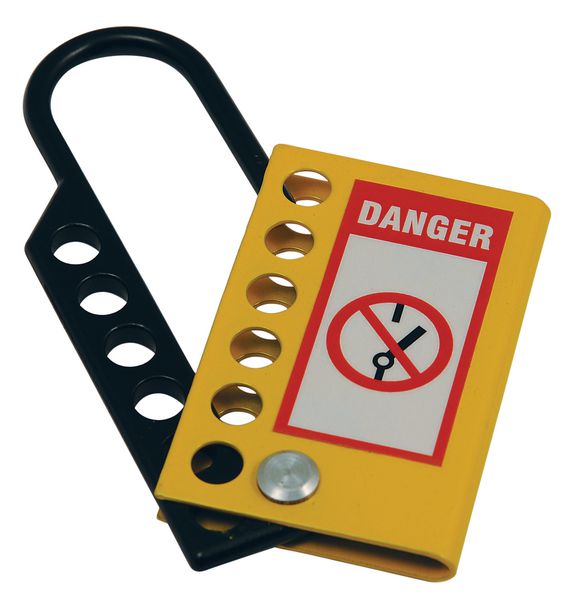}
\caption{Physical 6-out-of-6 padlocks, 
  by Seton (models SLECO and MANM8).}\label{fig:lock6o6}
\end{figure}

\subsection{Using Visual Cryptography}

In 1994, M.~Naor and A.~Shamir proposed the \emph{visual
  cryptography}~\cite{DBLP:conf/eurocrypt/NaorS94,DBLP:conf/spw/NaorS96}
for black and white images.  This was improved
in~\cite{DBLP:journals/ipl/BlundoSN00} for gray images and
in~\cite{HOU20031619} for color images.

The idea is to split a secret
into two images printed on transparent paper in a way such that their
superposition makes the secret appear.  An example is given in
\cref{fig:visual}.

\begin{figure}[htbp]
\centering
    \includegraphics[height=1.8cm]{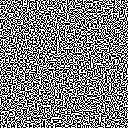}
    \includegraphics[height=1.8cm]{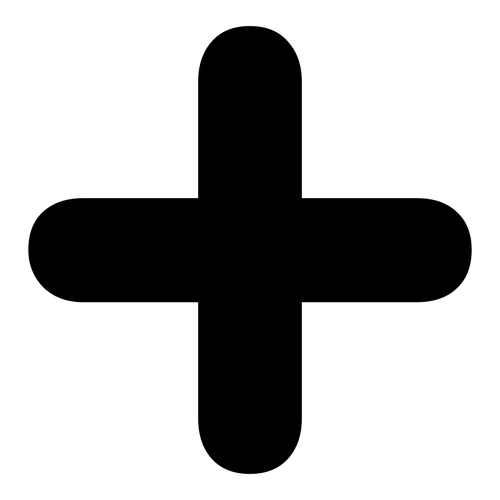}
    \includegraphics[height=1.8cm]{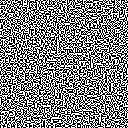}
    \includegraphics[height=1.8cm]{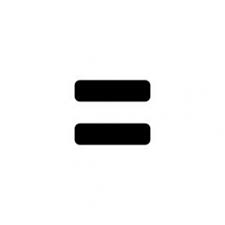}
    \includegraphics[height=1.8cm]{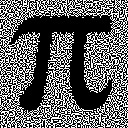}
\caption{Example of visual cryptography, superposing the images let
  the symbol $\pi$ appear.}\label{fig:visual}
\end{figure}

For color images, security cannot be perfectly achieved for more
than 3 colors~\cite{LEUNG2009929}.
In~\cite{10.1023/A:1008280705142}, the authors proposed a
generalization of the approach to $k$-out-of-$n$ images.  This can be
used as a first physical answer to Liu's problem. This solution is not
really practical since it needs a computer to compute the different
images. Moreover in~\cite{10.1007/s10623-005-6342-0}, the authors show
that it is possible to cheat in visual cryptography by introducing
fake shares that change the result. This clearly shows that this
solution is not verifiable, which requires the ability to check that
shares are valid.

\section[A Novel Physical k out n Threshold Lock]{A Novel Physical $k$
  out $n$ Weighted Threshold Lock}\label{sec:our}

As a natural extension of $1$-out-of-$n$ systems, we design a $k$-out-of-$n$ physical threshold lock that uses $n$ padlocks and works as follows.
Each padlock secures one block, with a latch, attached to a sliding bar, which is limited in its sliding movement by the blocks.
If sufficiently many blocks are removed, the sliding bar can be moved far enough to open the barrier.
In Algorithm~\ref{alg:koutofnphys}, we describe our solution in a
generic way. 

We also have built a wooden prototype that can be configured for
different cases, see \cref{fig:locknon} for a 2-out-of-3
configuration. 

\begin{figure}[htb]
  \begin{center}
\includegraphics[height=3.25cm]{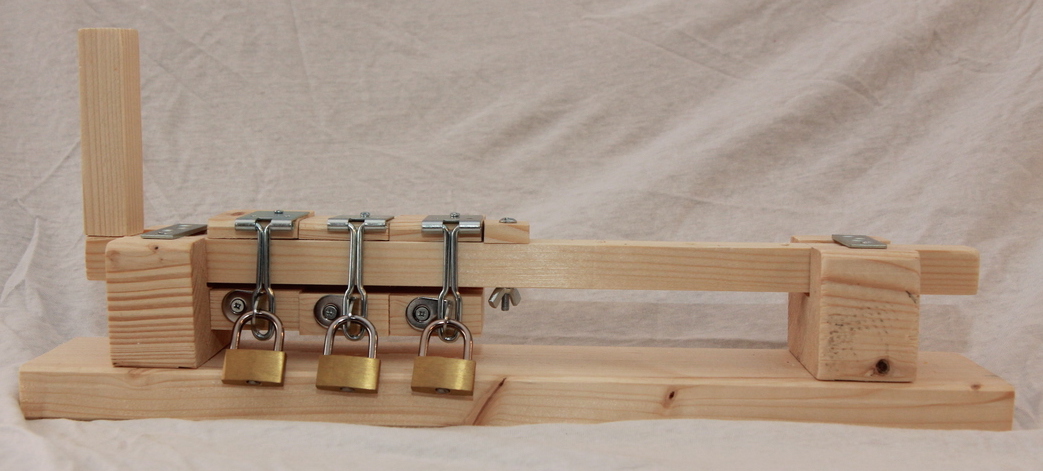}
\\[4pt]
\includegraphics[width=.45\textwidth]{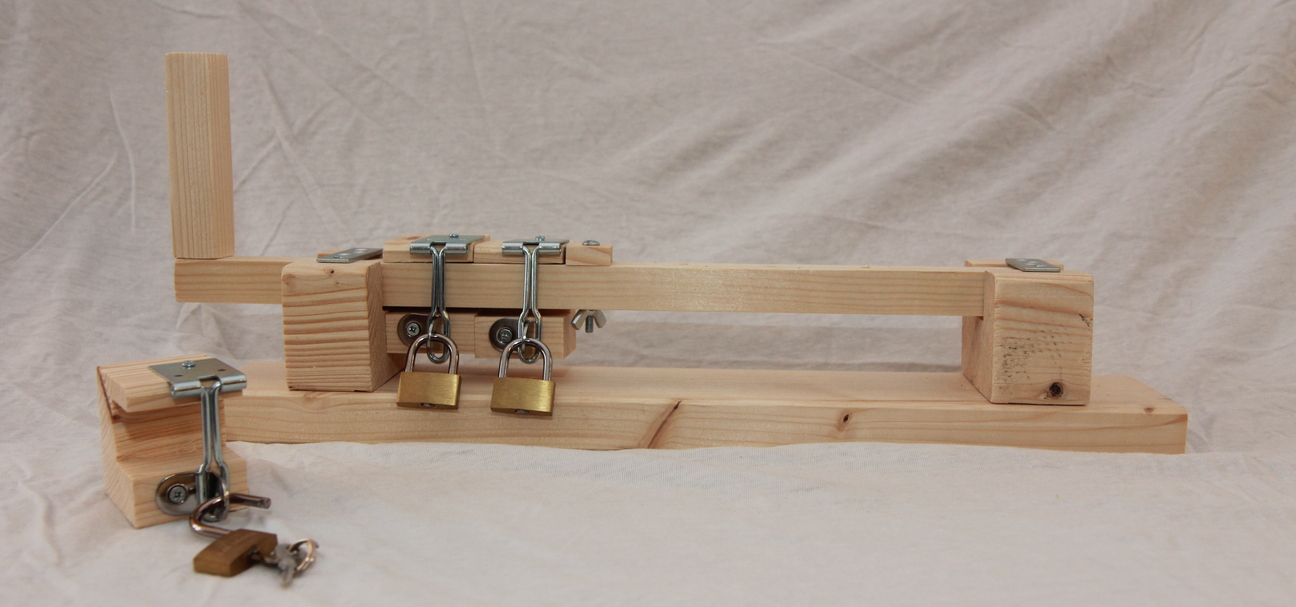}
\includegraphics[width=.45\textwidth]{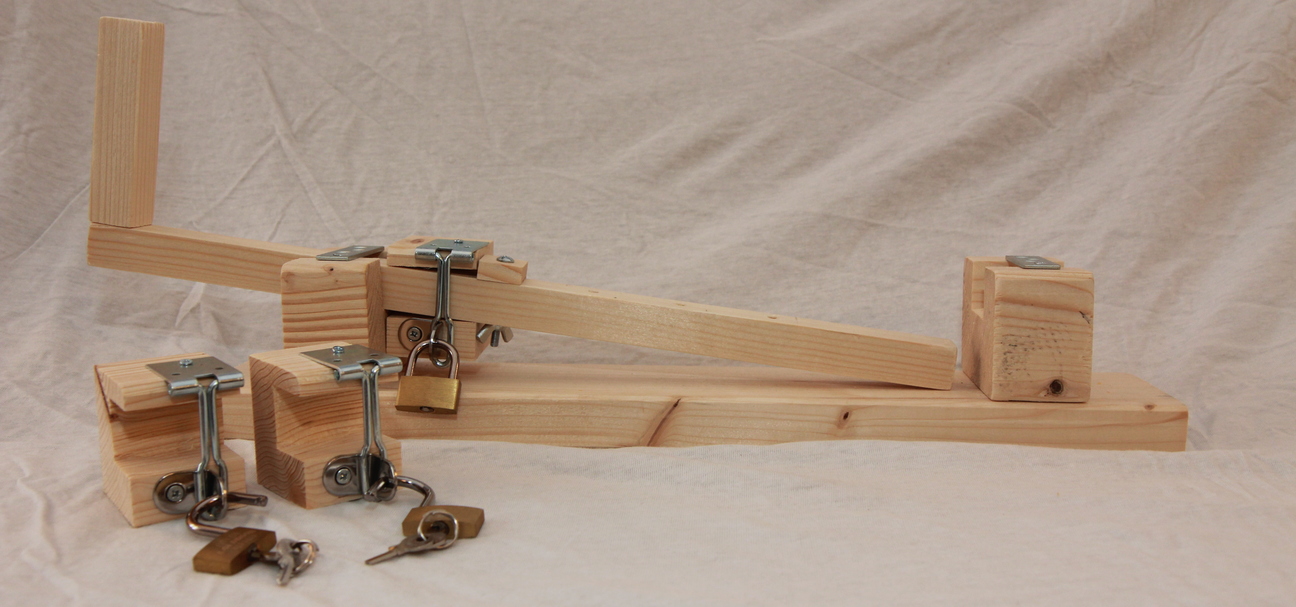}
\end{center}
\caption{Physical $2$-out-of-$3$ lock. {\bf Top:} all three blocks
  attached, padllock closed. {\bf Left:} one block removed, the bar can
  be moved to the left, but not sufficiently far to open. {\bf Right:}
  two padlocks are removed, the barrier is open.}\label{fig:locknon}
\end{figure}

In this example, on the top image, we have $3$ padlocks
attached to support of size $l$ and the blocker is installed just
after them on the initial configuration. The bar is installed in such
a way that it over passes slightly more than the size of one padlock
support on the right.  
In the left image, once one padlock and its support is removed
then the bar can move to the left but not totally be removed. 
Finally, once two padlock supports are removed we can open the system.

\begin{algorithm}[htb]
\caption{$k$-out-of-$n$ physical threshold lock}\label{alg:koutofnphys}
\begin{algorithmic}[1]
\Require $k\leq{n}$, $n$ padlocks, $n$ supports in wood of the same size
$l$, a  bar of wood of size at least $(n+k-1)\times l$, a support of size
$l$, and a blocker.
\Ensure A $k$-out-of-$n$ physical threshold padlock.
\State Distribute one individual key of one padlock to each of the $n$ participants.
\State Lock the $n$ padlocks on the $n$ latches attached to wooden
supports on the long wooden bar.
\State Install the blocker after the $n$ padlocks.
\State Install the bar in the system in order that it over passes by
slightly more than $(k-1)\times l$ (and not more than $kl$) the system general lock.
\end{algorithmic}
\end{algorithm}

\cref{fig:config}, left, shows our prototype in a $2$-out-of-$4$ configuration.
The prototype can be configured for $k$-out-of-$n$ systems for any $k \in \{1, 2, 3\}$ and $n \in \{3, 4, 5, 6\}$.
By moving the wooden block attached to the moving bar (red circle in
\cref{fig:config}, right) one can fix the number of blocks that can be attached, i.e., $n$.
By moving the block on the right (blue circle in
\cref{fig:config}, right) one can fix the number of blocks that
need to be removed before the bar can be opened, i.e., the threshold
$k$: on the rightmost position, removing one block is sufficient to
open the bar.
When moving this block to the left one can increase the number of
blocks that need to be removed before the bar opens.
For convenience, in our prototype everything can be easily adjusted
using screws, but obviously, in a real implementation, they need to be
permanently fixed to ensure security.

\begin{figure}[htbp]
  \begin{center}
\includegraphics[height=3.25cm]{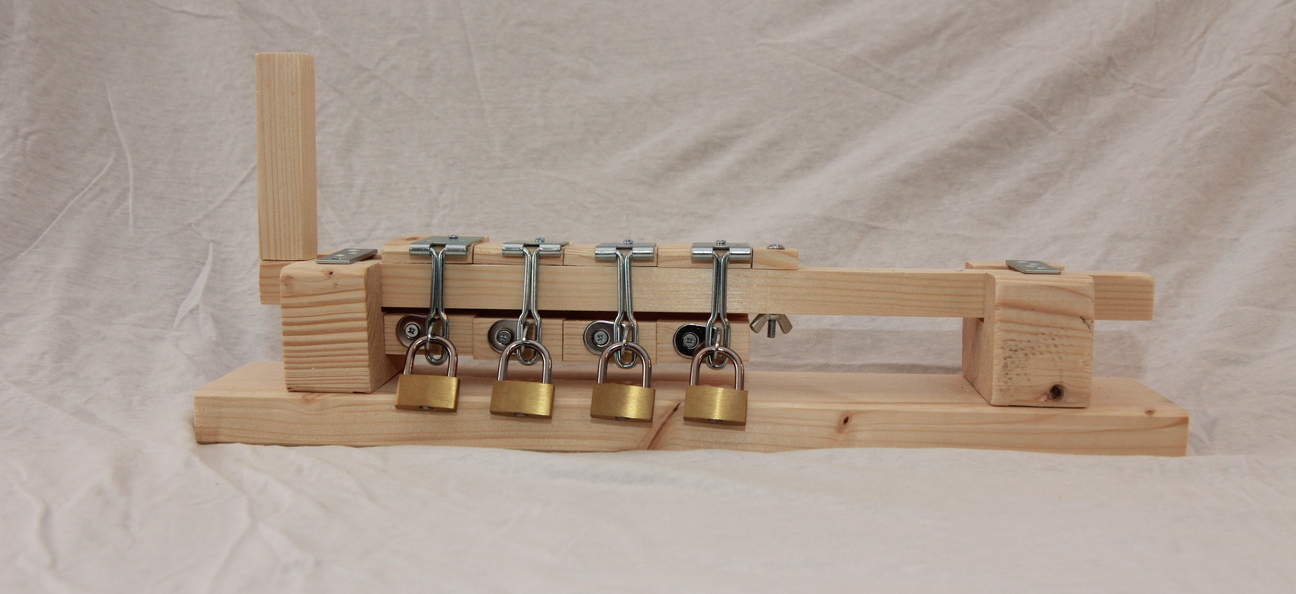}
~~
\includegraphics[height=3.25cm]{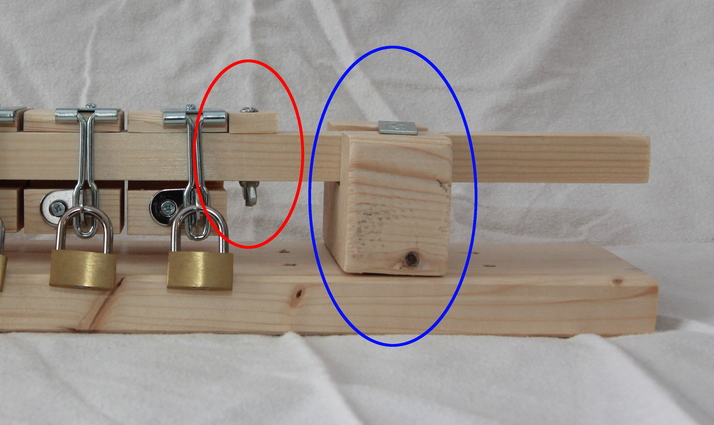}
\end{center}
\caption{{\bf Left:} Physical $2$-out-of-$4$ threshold lock system;
  {\bf Right:} configuring our prototype: the block in the red circle fixes
  the number $n$ (here: 4), the block in the blue circle fixes the
  threshold $k$ (here: 3). One can also see the holes that allow the
  block to be fixed in other positions.}\label{fig:config}
\end{figure}

Our technique can also be used to implement \emph{weights} by using
blocks of different sizes.
\cref{fig:lockweight} shows an example where either one ``master'' key (opening the padlock on a larger block) can be used to open the lock, or any two of the other keys (opening the padlocks on the smaller blocks).
The same idea can also be used to implement a policy where, e.g., either Alice and one other participant, or any three other participants are required to open the lock.
It suffices to give Alice the keys for the larger block, and use a configuration that requires the removal of three small blocks to open.

\begin{figure}[htbp]
  \begin{center}
\includegraphics[height=3.25cm]{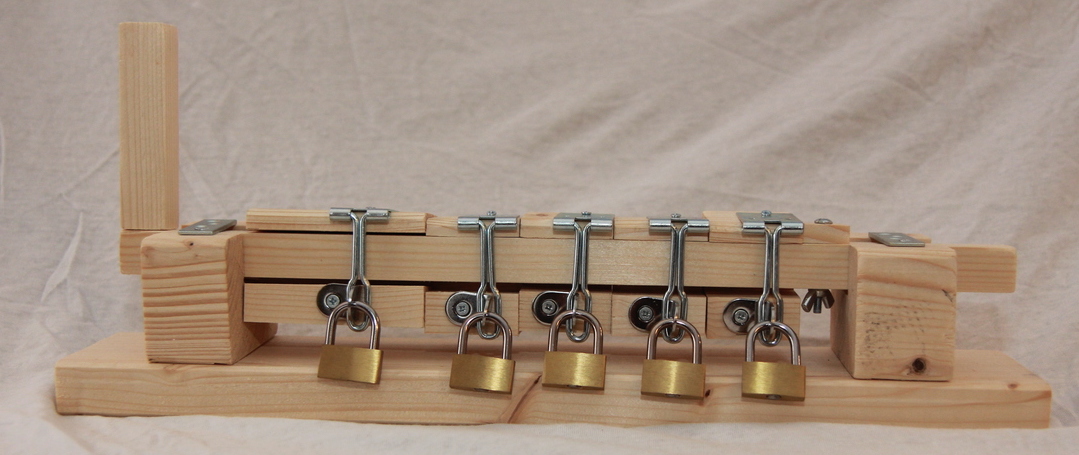}
\\[4pt]
\includegraphics[width=.45\textwidth]{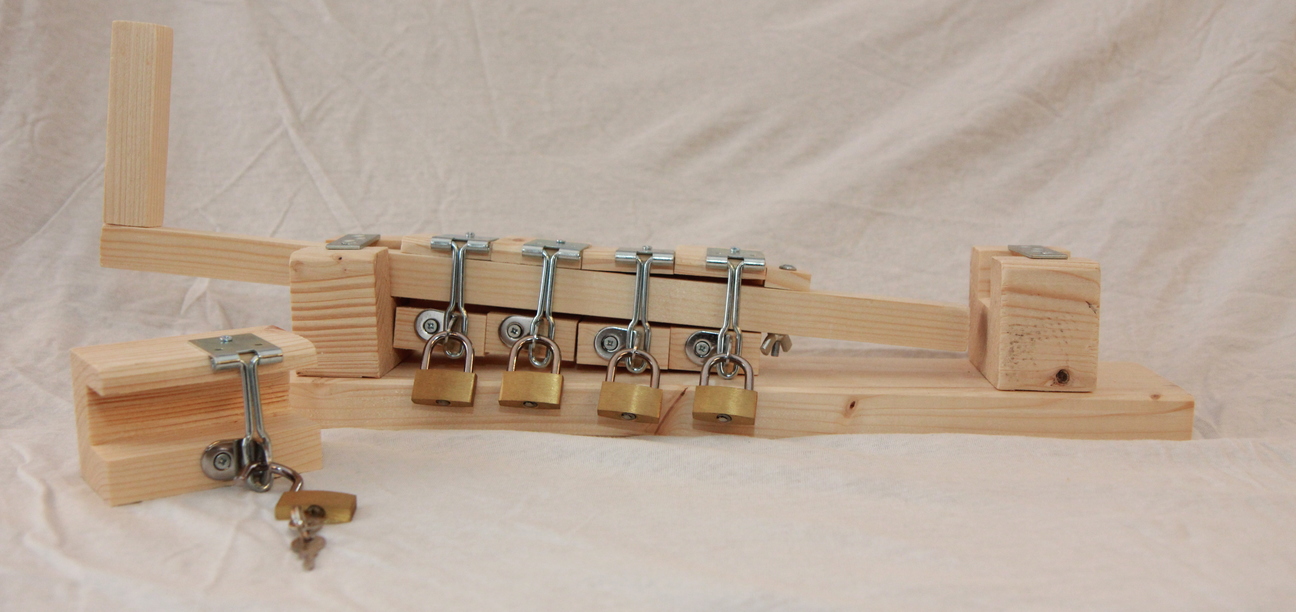}
\includegraphics[width=.45\textwidth]{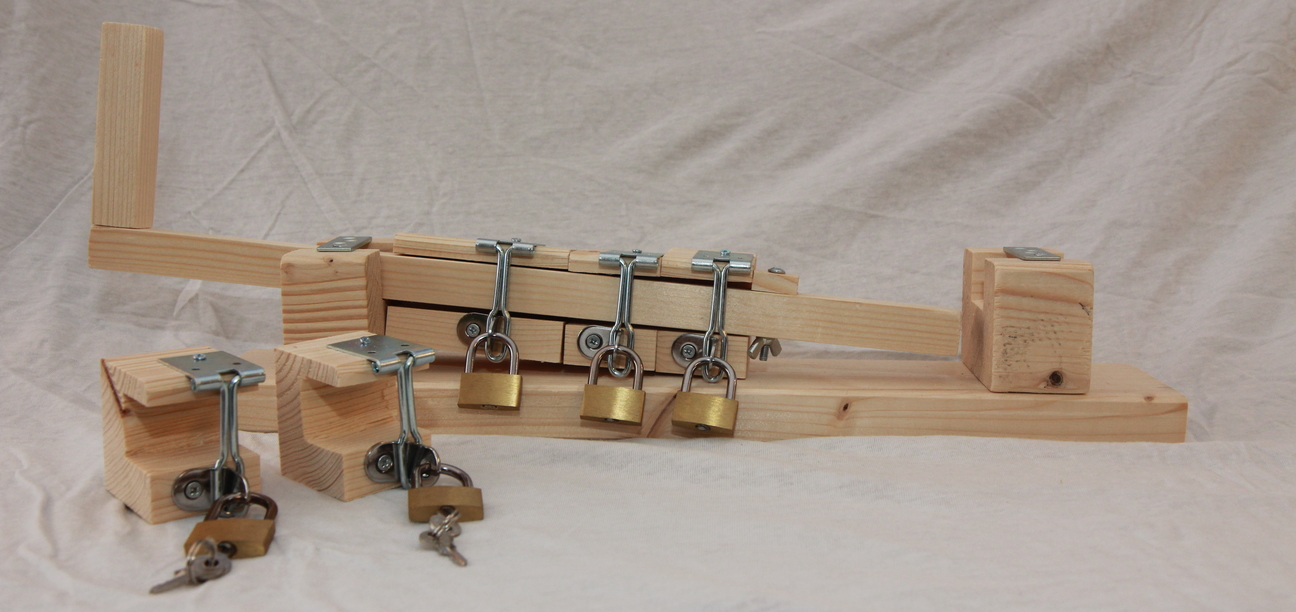}
\end{center}
\caption{Physical lock with weighted keys. {\bf Left:} all blocks are
  attached and locked. {\bf Right:} the ``master'' key was used to
  open the larger block, unlocking the barrier. {\bf Bottom:} two
  ``normal'' keys opening normal blocks also allow to open the
  barrier.}\label{fig:lockweight}
\end{figure}

Our system is \emph{ad-hoc} since once it is set up, each participant can install their own
lock, which avoids having to trust the dealer as in existing
cryptographic solutions.
Note that to avoid problems during the setup phase, we assume that all participants install their locks at the same time, right after the lock has been set up.

Our solution is also \emph{reusable} as it can be locked again, unlike
for example a solution using cryptographic secret sharing to share a
code for a combination lock, where the code would be revealed once and
for all: such a lock thus cannot be effectively locked again without
changing the code. Note that a system with a combination lock would
also require a special procedure or a trusted third party to setup the
combination initially.  Moreover, our system also protects users
against the \emph{daisy chain attack} as only one padlock can be
fitted to the latch of any block.

By construction our solution is \emph{verifiable} since everyone can
check if there is at least one padlock that can be opened with the secret
key that he has received. Comparing to the mathematical solution
proposed in~\cite{10.1109/SFCS.1987.4} consisting in giving extra
information to each participant to convince him that he received a
valid point of the polynomial, our solution does not require any extra
material, nor does it require any trusted third party.  There are thus at least three direct applications of our
physical threshold system:%
\begin{enumerate}
\item Our system can be used to construct a physical verifiable secret
  sharing protocol. As it can also easily be extended to deal with
  weights, we also have a physical equivalent to the cryptographic
  protocol given in~\cite{10.1007/978-3-540-30576-7_32,6138912}.
\item Threshold cryptography has been applied to voting, e.g.,
  in~\cite{Schoenmakers99asimple}. Our system can be used to secure
  physical pen and paper voting, by ensuring that the ballot box can
  only be opened if $k$-out-of-$n$ trustees agree.
\item As a user never has to reveal his physical key, our
mechanism can also be used to design a $k$-out-of-$n$ authentication
mechanism.
\end{enumerate}

\section{Formalization and Generic Bounds on the Number of Padlocks}\label{sec:minimal}

We now establish bounds on the number of padlocks required to realize
a certain threshold. %
We assume that padlocks are more expensive than keys, i.e., we will
try to implement threshold systems with fewer padlocks, even if this
means duplicating some of the keys.  We define a padlock system to be
any arrangement of padlocks protecting something. For the sake of
simplicity, in the following, we consider this to be the possibility
to ``\emph{open a door}''.

\begin{definitions}\label{def:padlocksystem}
  A \emph{padlock} is a device requiring a single \emph{key} to be
  opened (keys can be duplicated).
  A \emph{padlock-system} is a device comprising an arrangement of
  \emph{latches} that prevents a \emph{door} to be opened when some
  padlocks are attached to some of the latches.
\end{definitions}

\begin{definition}\label{def:thresholdsystem}
  A $k$-\emph{threshold padlock system} is a padlock system with an
  arrangement of padlocks and a distribution of keys that allows any
  group of $k$ or more participants to open the door and prevents any
  group of strictly less than $k$ participants to open it.
\end{definition}

\begin{remark}\label{rem:digital}
  While directly applicable to physical padlock systems, this
  definition also applies to some cryptosystems.  For instance
  consider any symmetric or asymmetric cryptosystem with a shared
  (duplicated) decryption key. Closing a padlock could just be
  ciphering with an encryption key; setting a padlock-system could
  just be multiple encryption (even if electronic threshold
  cryptosystems are more complicated) and opening the door is
  deciphering. For this example, the only difference with physical
  system is that the order of encryption must be taken into account
  for decryption.

  Now, most of the \emph{lower} bounds described in this section only
  suppose the existence of a threshold system satisfying the above
  definitions.  Therefore those lower bounds also apply to electronic
  threshold cryptosystem satisfying
  \cref{def:padlocksystem,def:thresholdsystem}.
\end{remark}

\begin{definition} Let $n$ be the number of players and $k\leq n$ be a
  threshold of players required to open the ``door''.  Then
  $\ell_{k,n}$ is the minimal number of padlocks, in any arrangement,
  allowing a $k$-out-of-$n$ threshold opening of the door.  Also, we
  define the \emph{rank} of an arrangement of padlocks and keys as the
  maximal number of keys owned by any player.
\end{definition}

For instance, we have that:
\begin{itemize}
\item $\ell_{1,n}=1$: one padlock with everybody having a copy of the
  same key is sufficient.
\item $\ell_{k,n}\leq n$: by our system described in
  \cref{sec:our}, see~\cref{fig:locknon}.
\end{itemize}

\subsection{Sperner Families}
Using the fact that all subsets of size $k$ of the $n$ participants
can open the door, and no subset of $k-1$ or less can do it, we have
the following results.  First, it is easy to see that with only $k-1$
or fewer different locks, one cannot ensure a threshold of at least
$k$.
\begin{restatable}{lemma}{lemktwo}\label{lem:ktwo}
$\forall k\geq 2, \ell_{k,n}\geq k$.
\end{restatable}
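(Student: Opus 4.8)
The plan is to argue by a direct witness-counting argument, after abstracting the physical system into a purely combinatorial object. First I would fix notation: suppose the system uses $m$ padlocks, and for each participant $i$ let $S_i$ denote the set of padlocks that $i$ can open (i.e.\ the padlocks for which $i$ holds a key). Whether the door opens depends only on the set of padlocks that are actually opened, so there is a function $f$ from subsets of the padlocks to $\{\text{open},\text{closed}\}$ such that a coalition $T$ opens the door exactly when $f\!\left(\bigcup_{i\in T}S_i\right)=\text{open}$. The threshold hypothesis then reads: $f$ returns \emph{open} on $\bigcup_{i\in T}S_i$ whenever $|T|\ge k$, and \emph{closed} whenever $|T|\le k-1$. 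My goal is to show $m\ge k$.

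Next I would single out one coalition of size exactly $k$, which exists because $k\le n$; say $T=\{1,\dots,k\}$. By the threshold property $f\!\left(\bigcup_{i\in T}S_i\right)=\text{open}$, while for every $j\in T$ the smaller coalition $T\setminus\{j\}$ (of size $k-1$) fails, i.e.\ $f\!\left(\bigcup_{i\in T,\,i\neq j}S_i\right)=\text{closed}$. Since $f$ is a \emph{function} of the opened set, these two different outputs force the two input sets to differ; as $\bigcup_{i\in T,\,i\neq j}S_i\subseteq\bigcup_{i\in T}S_i$, the discrepancy must come from player $j$. Concretely, there is a padlock $p_j\in S_j$ that no other member of $T$ can open, i.e.\ $p_j\notin S_i$ for all $i\in T\setminus\{j\}$. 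Note that this step needs no monotonicity assumption on $f$: it uses only that the outcome is determined by which padlocks are open.

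Then I would check that the witnesses $p_1,\dots,p_k$ are pairwise distinct. If $p_j=p_{j'}$ with $j\neq j'$, then this single padlock is opened by $j$ (as $p_j\in S_j$) yet, being equal to $p_{j'}$, cannot be opened by any member of $T\setminus\{j'\}$, which includes $j$ --- a contradiction. Hence the system contains at least the $k$ distinct padlocks $p_1,\dots,p_k$, so $m\ge k$ and therefore $\ell_{k,n}\ge k$.

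I expect the only delicate point to be the modelling step rather than the combinatorics: one has to be convinced that ``opening the door'' is legitimately a function of the set of opened padlocks (so that equal opened-sets give equal outcomes), which is exactly what \cref{def:padlocksystem,def:thresholdsystem} guarantee. Once that abstraction is in place, the argument is an elementary ``each critical member owns a private padlock'' count, and it is worth remarking that the same reasoning --- each of the $k$ players holding a key to a padlock held by none of the other $k-1$ --- is precisely the kind of antichain structure that the Sperner-family machinery in the sharper bounds to follow will exploit.
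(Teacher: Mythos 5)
Your proof is correct, and it reaches the bound by a more direct route than the paper. The paper argues by contradiction: assuming $t=\ell_{k,n}\le k-1$, it greedily assembles a sub-coalition of at most $t$ people who together hold all $t$ keys used by some winning group of $k$, and observes that this sub-coalition of size at most $k-1$ could then open the door by itself, violating the threshold. You instead produce the $k$ padlocks explicitly: for each member $j$ of a winning coalition $T$ of size $k$, the fact that $T\setminus\{j\}$ fails while $T$ succeeds forces a ``private'' padlock $p_j\in S_j$ held by no other member of $T$, and these witnesses are pairwise distinct, so there are at least $k$ padlocks. Both arguments rest on the same underlying observation --- no member of a minimal winning coalition can be redundant, else $k-1$ players would suffice --- but your version avoids the paper's slightly delicate induction over subgroup sizes and yields the inequality as a direct count rather than a reductio. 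You also make explicit the modelling hypothesis that both proofs tacitly need, namely that whether the door opens is a function of the set of opened padlocks (so equal opened-sets give equal outcomes); the paper's phrase ``whatever the arrangement, they had to open some of the $t$ padlocks'' is doing the same work implicitly, and your remark that no monotonicity of that function is required is accurate and worth keeping.
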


\begin{proof}
Suppose for $\ell_{k,n}$ we have an existing threshold system where a minimum
of $k$ people is required to open the door, and moreover any subset of $k$
people can open it.
Suppose $t=\ell_{k,n}\leq k-1$ and consider one group of $k$ people
able to open the door.
For this, whatever the arrangement, they had to open some of the $t$ padlocks,
thus with at most $t$ keys.
This is less keys than the number of people, so there must exist a
subgroup of at most $t$ people owning these $t$ keys and:
\begin{itemize}
\item Any of the $k$ people must own at least one of the $t$ keys, otherwise
  they are not required to open the cabinet and $k-1$ people are enough.
\item By induction on a subgroup of size $1\leq{u}<t$, an $(u+1)$-th person,
  among the remaining $k-u$, must own the key of one padlock not owned
  by the previous $u$, otherwise this person is not required and $k-1$
  people are enough.
\end{itemize}
Now, this subgroup of size at most $t$ is thus able to open the door
by themselves. But $t \leq k-1 < k$, is below the threshold, a
contradiction.
\end{proof}

Second, we see that if the set of keys of a participant is included in another
participant's set of keys, intuitively the first participant is
``useless'' to achieve the threshold.
\begin{restatable}{lemma}{lemnosubsets}\label{lem:nosubsets}
  Let $k\geq 2$, and set up an arrangement of padlocks and a
  distribution of keys with a $k$-out-of-$n$ threshold opening.  No
  participant can own a set of keys that is a subset of another
  participant's set of keys.
\end{restatable}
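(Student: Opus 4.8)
The plan is to argue by contradiction, exploiting the fact that what a group of participants can achieve depends only on the \emph{union} of their individual key sets. Suppose, towards a contradiction, that two distinct participants $A$ and $B$ satisfy $K_A \subseteq K_B$, where $K_p$ denotes the set of keys held by participant $p$.

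First I would record the structural observation underlying the whole argument: in any padlock system, the set of padlocks that a group $G$ is able to open is exactly $\{p : \text{some member of } G \text{ holds a key for } p\}$, which is determined solely by $\bigcup_{p \in G} K_p$. Consequently, whether $G$ can open the door is a function of this union alone, and it is monotone: a group with a larger union of keys can open at least the same padlocks, hence the door whenever a subgroup can. This is the step I expect to require the most care, since it has to be read off the informal definition of a padlock-system (\cref{def:padlocksystem}): padlocks only ever impose constraints on the door, and a participant may always decline to use a key, so extra keys never hurt.

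Next, since $k \leq n$ and $k \geq 2$, I can choose a set $S$ of exactly $k$ participants containing both $A$ and $B$ (there are $n-2 \geq k-2$ remaining participants to fill $S$). By \cref{def:thresholdsystem}, the group $S$ can open the door. Because $K_A \subseteq K_B$ and $B \in S \setminus \{A\}$, we have $\bigcup_{p \in S} K_p = \bigcup_{p \in S \setminus \{A\}} K_p$, so removing $A$ does not shrink the available key set. By the observation above, the smaller group $S \setminus \{A\}$ can therefore open exactly the same padlocks, and hence can open the door.

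Finally, $|S \setminus \{A\}| = k-1 < k$, so a group of fewer than $k$ participants opens the door, contradicting the threshold property of \cref{def:thresholdsystem}. This contradiction shows no such pair $A, B$ can exist, proving the claim. The argument is essentially the degenerate case of the redundancy reasoning already used in \cref{lem:ktwo}: a participant whose keys are subsumed by another's is never needed to reach the threshold.
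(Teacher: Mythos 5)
Your proof is correct and follows essentially the same route as the paper's: place $A$ and $B$ together in a size-$k$ group that can open the door, note that $A$'s keys are redundant given $B$'s, and conclude that the size-$(k-1)$ group without $A$ already opens the door, contradicting the threshold. The only difference is that you make explicit the monotonicity observation (opening ability depends only on the union of keys) which the paper leaves implicit in the phrase ``the keys of $A$ are useless''.
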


\begin{proof}
Let $A$ have a set of keys included in that of $B$. As $k\geq 2$, $A$
and $B$ can be in a size $k$ subset of participants that can open the
door. But then the keys of $A$ are useless since $B$ has all of them.
Therefore there would be a size $k-1$ subset of participants able to
open the door, a contradiction.
\end{proof}

This shows for instance that each participant must have at least one
key.  Further, this means that the sets of keys must form a
family of inclusion-free subsets.  This is called a \emph{Sperner
  family} or a \emph{clutter}~\cite{Sperner:1928:clutter}.  The
padlocks can then be seen as the vertices of a hypergraph, where each
participant is represented by a hyperedge, the set of its owned
keys. The \emph{rank} is then the maximal cardinality of a
hyperedge. Then Sperner's Theorem combined with
\cref{lem:nosubsets}, also gives the following lower
bounds:
\begin{restatable}{corollary}{corsperner}\label{cor:sperner}
$\forall n, t$ and $k\geq{}2$, if $\ell_{k,n}=t$ then
$\binom{t}{\lfloor t/2\rfloor}\geq n$.
\end{restatable}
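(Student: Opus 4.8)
The plan is to combine the structural result from \cref{lem:nosubsets} with the classical Sperner Theorem. By \cref{lem:nosubsets}, in any $k$-out-of-$n$ threshold system with $k\geq 2$, the key sets of the $n$ participants form an antichain (a Sperner family) over the ground set of $t=\ell_{k,n}$ padlocks: no participant's key set is contained in another's. Thus we have $n$ distinct subsets of a $t$-element set, none of which contains another.

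The key step is to invoke Sperner's Theorem, which states that the largest antichain in the Boolean lattice of subsets of a $t$-element set has size exactly $\binom{t}{\lfloor t/2\rfloor}$ (achieved by the middle layer). Since our family of key sets is an antichain of size $n$, it cannot exceed the maximum possible antichain size, giving immediately
\begin{equation*}
n \leq \binom{t}{\lfloor t/2\rfloor},
\end{equation*}
which is the claimed inequality. One subtle point I would want to nail down is that the $n$ key sets are genuinely distinct, not merely pairwise incomparable; but incomparability already forbids equality (a set is a subset of itself), so distinctness is automatic, and an antichain in the strict sense is exactly what we have.

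I do not anticipate a serious obstacle here, since the corollary is essentially a direct application of a named theorem once the antichain property is in hand. The only thing to be careful about is the logical direction: the statement is phrased as an implication (\emph{if} $\ell_{k,n}=t$ \emph{then} $\binom{t}{\lfloor t/2\rfloor}\geq n$), so I would simply assume $\ell_{k,n}=t$ realizes the minimum, apply \cref{lem:nosubsets} to that optimal arrangement to get the antichain, and then apply Sperner's bound. No minimality of $t$ is actually needed for the inequality to hold — any valid arrangement on $t$ padlocks yields an antichain of size $n$ — but stating it for $t=\ell_{k,n}$ is the useful form, since it bounds the optimum from below by inverting the binomial inequality.

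Thus the whole argument is: $(1)$ key sets form a Sperner family by \cref{lem:nosubsets}; $(2)$ Sperner's Theorem caps any such family at $\binom{t}{\lfloor t/2\rfloor}$; $(3)$ since there are $n$ participants, $n\leq\binom{t}{\lfloor t/2\rfloor}$. The main conceptual content was already extracted in the preceding lemmas, so this corollary is a short packaging of those facts through a standard extremal set-theory result.
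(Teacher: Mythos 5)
Your proposal is correct and follows exactly the paper's own argument: apply \cref{lem:nosubsets} to see that the participants' key sets form a Sperner family over the $t$ padlocks, then invoke Sperner's Theorem to bound its size by $\binom{t}{\lfloor t/2\rfloor}$. The additional remarks on distinctness and on the direction of the implication are sound but do not change the substance.
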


\begin{proof}
By Sperner's Theorem~\cite{Sperner:1928:clutter}, the size of any Sperner family with $t$ elements
is upper bounded by $\binom{t}{\lfloor t/2\rfloor}$.
Distributing keys for $t$ padlocks to $n$ participants while
satisfying \cref{lem:nosubsets} thus requires
$\binom{t}{\lfloor t/2\rfloor}\geq n$.
\end{proof}

\begin{restatable}{corollary}{corspernereven}\label{cor:spernereven}
$\forall n\geq 1$ and $\forall t\geq 2$ even, if $k\geq{}3$ and $\ell_{k,n}=t$ then
$\binom{t}{t/2}> n$.
\end{restatable}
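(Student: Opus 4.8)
The plan is to upgrade \cref{cor:sperner} from the non-strict bound $\binom{t}{t/2}\ge n$ to a strict one by eliminating the equality case, exploiting the fact that for even $t$ the extremal Sperner family is \emph{unique}. Since $t$ is even we have $\lfloor t/2\rfloor=t/2$, so \cref{cor:sperner} already yields $\binom{t}{t/2}\ge n$; I would then argue by contradiction and assume equality, $n=\binom{t}{t/2}$. By \cref{lem:nosubsets}, together with the observation that every participant must hold at least one key, the $n$ key-sets are $n$ \emph{distinct} subsets of the $t$ padlocks forming an antichain (a Sperner family), and under the equality assumption this antichain attains the maximum possible size $\binom{t}{t/2}$.

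The key step is to pin down the structure of such an extremal family. I would invoke the classical sharpening of Sperner's Theorem (de~Bruijn--Tengbergen--Kruyswijk): for \emph{even} $t$, the only antichain in the Boolean lattice $2^{[t]}$ of maximum size $\binom{t}{t/2}$ is the middle layer, i.e.\ the collection of \emph{all} $(t/2)$-subsets of $[t]$. Consequently, under the equality assumption, each participant owns exactly $t/2$ keys and every $(t/2)$-subset of the padlocks is realized as some participant's key-set.

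From here the contradiction is immediate. Choose any participant $A$ whose key-set is a $(t/2)$-subset $S$; its complement $[t]\setminus S$ also has cardinality $t/2$, hence is the key-set of some participant $B$, and since $t/2\ge 1$ the sets $S$ and $[t]\setminus S$ are disjoint and nonempty, so $A\neq B$. The pair $\{A,B\}$ then jointly holds keys to all $t$ padlocks, so these two participants alone can open every padlock and therefore open the door. This exhibits a group of size $2<3\le k$ that opens the door, contradicting \cref{def:thresholdsystem}. Hence equality cannot hold, and $\binom{t}{t/2}>n$.

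I expect the main obstacle to be justifying the uniqueness of the maximum antichain rather than the combinatorial bookkeeping, which is elementary. It is worth noting that the even parity of $t$ is genuinely essential: for odd $t$ the extremal antichains are the two middle layers, and in the lower one any two $((t-1)/2)$-sets have union of size at most $t-1<t$, so no pair can open all padlocks. This is precisely the reason the strict inequality may fail for odd $t$, and why the statement is restricted to even $t$.
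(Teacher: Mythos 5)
Your proposal is correct and follows essentially the same route as the paper's proof: rule out the equality case of \cref{cor:sperner} by noting that for even $t$ the unique maximum antichain is the middle layer of all $(t/2)$-subsets, then observe that a complementary (disjoint) pair of such key-sets lets two participants open all $t$ padlocks, contradicting $k\geq 3$. The only difference is that you explicitly name the de~Bruijn--Tengbergen--Kruyswijk uniqueness theorem where the paper simply asserts the uniqueness of the extremal Sperner family.
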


\begin{proof}
By \cref{cor:sperner}, the only other possibility
is $n=\binom{t}{\lfloor t/2\rfloor}=\binom{t}{t/2}$.
But then the unique available Sperner family is that
of all subsets of equal size $t/2$.
In this family there exist pairs of subsets with an empty
intersection. The union of these two subsets is thus of exactly $t$
keys and must be able to open the door.
Therefore the threshold cannot be larger than~$2$.
\end{proof}

\begin{restatable}{lemma}{lemnthree}\label{lem:nthree}
$\ell_{2,3}\geq{3}$ and
$\forall n\geq 4, \ell_{2,n}\geq 4$.
\end{restatable}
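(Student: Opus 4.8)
The plan is to obtain both inequalities directly from the Sperner bound of \cref{cor:sperner}, reducing each to the evaluation of a few central binomial coefficients. The key observation is that \cref{cor:sperner} is a \emph{necessary} condition: if $\ell_{k,n}=t$ then $\binom{t}{\lfloor t/2\rfloor}\geq n$. Hence, to prove a lower bound on $\ell_{2,n}$ it suffices to identify every small value of $t$ for which this inequality is forced to fail, and to conclude that no arrangement with that many padlocks can realise the threshold.

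For the first statement I would specialise to $k=2$, $n=3$ and suppose $\ell_{2,3}=t$. Evaluating $\binom{1}{0}=1$ and $\binom{2}{1}=2$, both are strictly smaller than $3$, so \cref{cor:sperner} excludes $t=1$ and $t=2$; therefore $t\geq 3$, which is exactly $\ell_{2,3}\geq 3$. Note that $t=3$ is not excluded, since $\binom{3}{1}=3\geq 3$, consistent with the bound being tight.

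For the second statement I would argue identically with $k=2$ and arbitrary $n\geq 4$, supposing $\ell_{2,n}=t$. For $t\in\{1,2,3\}$ the relevant coefficients are $\binom{1}{0}=1$, $\binom{2}{1}=2$ and $\binom{3}{1}=3$, each strictly below $4\leq n$, so \cref{cor:sperner} forbids all three values and forces $t\geq 4$. I expect no genuine obstacle here; the only points worth flagging are that the claimed bound $4$ is \emph{uniform} in $n$, so the argument does not attempt to pin down the exact (possibly much larger) value of $\ell_{2,n}$ for big $n$ but merely clears the three smallest candidates, and that it is the general \cref{cor:sperner}, rather than the even-cardinality refinement \cref{cor:spernereven} (stated only for $k\geq 3$), that does the work, since the decisive case $t=3$ is odd.
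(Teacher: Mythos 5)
Your proof is correct: \cref{cor:sperner} holds for all $k\geq 2$, and the three evaluations $\binom{1}{0}=1$, $\binom{2}{1}=2$, $\binom{3}{1}=3$ do exactly what you claim, so both inequalities follow. The route differs slightly from the paper's in the first half. For the claim $\ell_{2,3}\geq 3$ the paper does not invoke Sperner's theorem at all: it rules out $t=2$ by a direct argument (no participant may hold both keys, else she opens the door alone; hence everyone holds at most one key, and with only two distinct keys among $n\geq 3$ people two participants share a key, contradicting \cref{lem:nosubsets}), with $t=1$ already excluded by \cref{lem:ktwo}. For the second half ($t=3$ fails once $n\geq 4$) the paper uses \cref{cor:sperner} exactly as you do. Your version is more uniform and shorter, deriving everything from the single antichain bound; the paper's version is more elementary for the two-padlock case, in that it only needs the pigeonhole principle rather than the full strength of Sperner's theorem, and it incidentally shows \emph{why} two padlocks fail (someone must hold both keys or two people must share one). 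You are also right that the even-$t$ refinement \cref{cor:spernereven} is irrelevant here since it assumes $k\geq 3$ and the decisive value $t=3$ is odd.
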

\begin{proof}
Suppose $\ell_{2,n}=2$.
Then, if a single person has both keys, she can open both padlocks.
Hence, whatever the arrangement of padlocks, she can open the door alone and
$k<2$, a contradiction.
Therefore nobody can have more than one key.
As $k=2$, then two persons are sufficient to open the door.
They cannot have the same key by \cref{lem:nosubsets}.
But with only $2$ distinct keys and $n\geq 3$ people, at least two persons
must have the same key, a contradiction again.
Overall, $2$ padlocks are thus not enough.
For instance, $\ell_{2,3}\geq{3}$.
Finally, for $t=3$, \cref{cor:sperner}, shows that
$\binom{3}{1}=3\geq{n}$, thus $\forall n\geq 4, \ell_{2,n}\geq 4$.
\end{proof}

We have thus now for instance the following results:
\begin{itemize}
\item $\ell_{n,n}=n$ : use \cref{lem:ktwo} for the lower bound
  and our design for the upper bound.
\item $\ell_{2,3}=3$ and $\ell_{2,4}=4$ : use \cref{lem:nthree} for the lower bound
  and our design for the upper bound.
\end{itemize}

\subsection[Using log(n) padlocks for a threshold of 2 with n participants]{Using {${\mathcal O}(\log(n))$}
  Padlocks for a Threshold of {$2$} with
  {$n$} Participants}\label{sec:minimal2n}
Now we propose, in \cref{alg:twooutofn}, an arrangement
for a $2$-out-of-$n$ participants threshold system, using no more than
$n$ padlocks, and strictly less as soon as $n\geq 5$.
Indeed if the threshold is only $2$, then it is possible to reduce the
number of padlocks using our design. The idea is that whenever two
participants have a distinct set of keys then both of them have a
strictly larger set of keys than any of them taken separately.

\begin{algorithm}[htb]
\caption{Two-out-of-$n$ threshold system with shared keys}\label{alg:twooutofn}
\begin{algorithmic}[1]
\Require $n\geq{2}$, and $1\leq{i}\leq{t}\leq{n}$ such that $\binom{t}{i}\geq{n}$.
\Ensure A $2$-out-of-$n$ threshold padlock system with $t$ padlocks.
\If{$t<n$}
\State Set up an $(i+1)$-out-of-$t$ design of \cref{sec:our};
\State Create a total of $i{\cdot}n$ keys by copying the original $t$
keys, such that there are $n$ distinct subsets of $i$
keys;\hfill\Comment{Since $\binom{t}{i}\geq{n}$}
\State Give each participant a distinct $i$-tuple of keys.
\Else\Comment{If $t=n$, set $i=1$ and use directly our device of \cref{sec:our}}
\State Set up a $2$-out-of-$n$ design of \cref{sec:our};
\State Give each participant one of the $n$ keys.
\EndIf
\end{algorithmic}
\end{algorithm}

\begin{figure}[!ht]
  \begin{center}
\includegraphics[width=.45\textwidth]{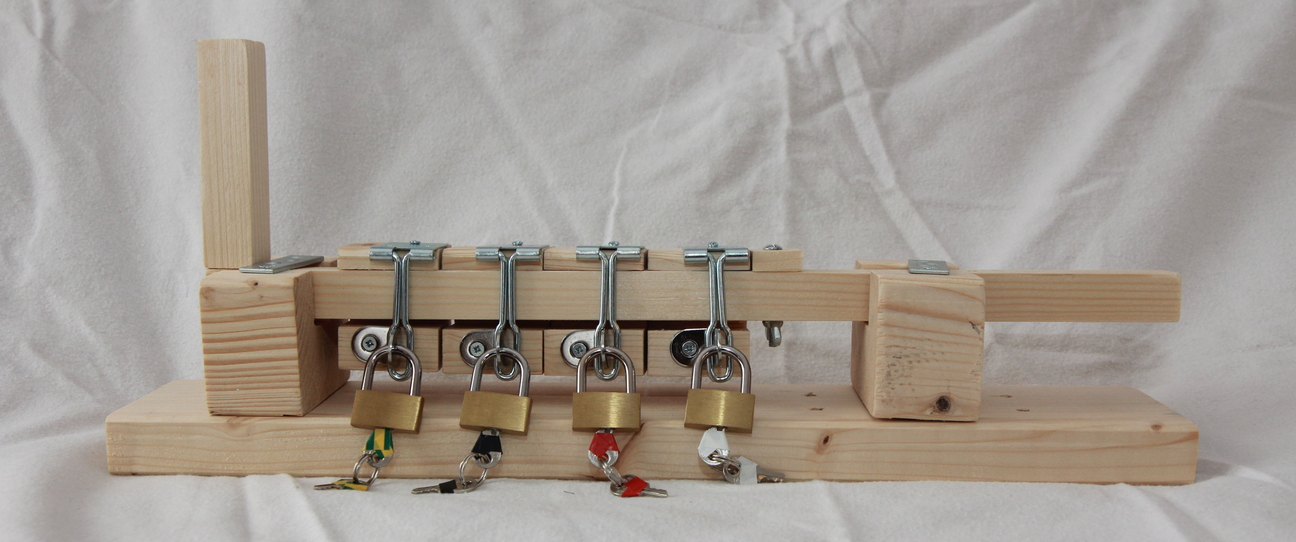}%
~~\includegraphics[width=.45\textwidth]{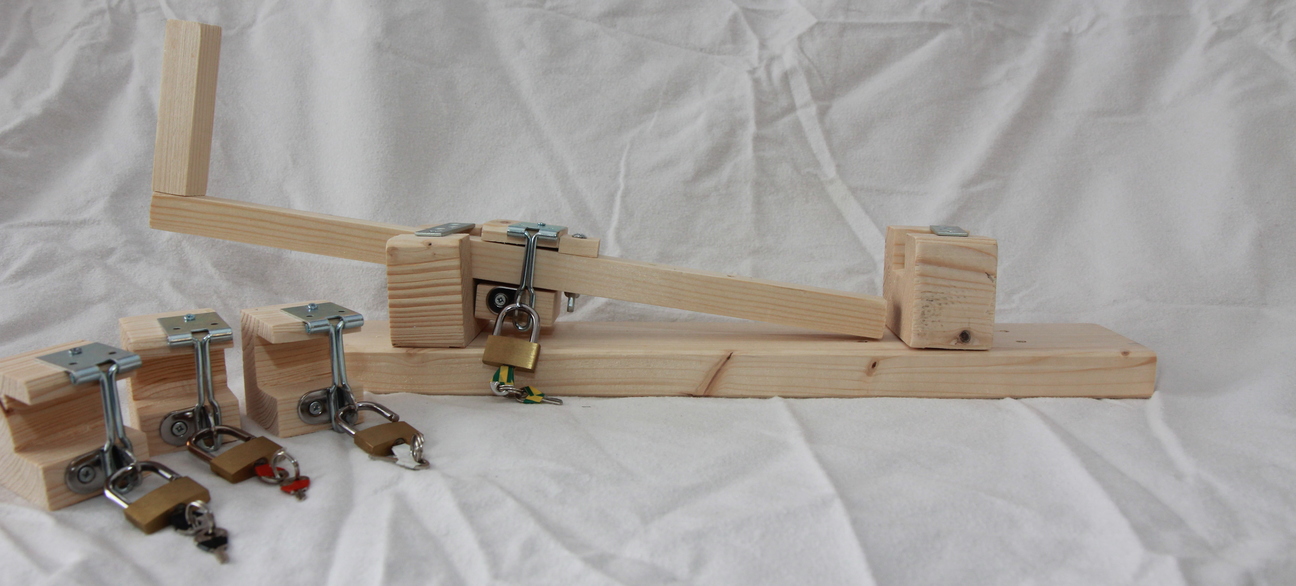}%
\vspace{0.2cm}
\includegraphics[width=.45\textwidth]{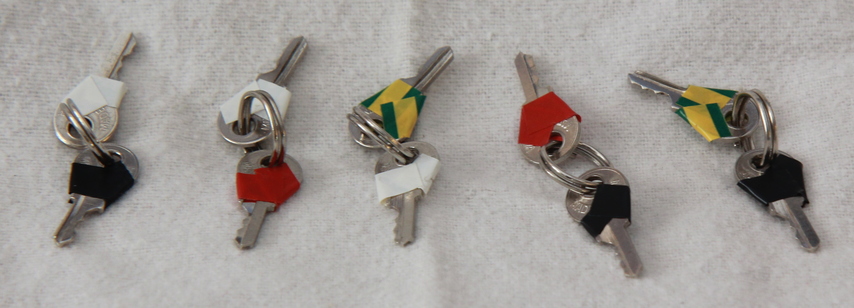}
\end{center}
\caption{Physical $3$-out-of-$4$ threshold padlock system. {\bf Left:}
  all locks closed. {\bf Right:} device opened using $3$ out of
  the $4$ locks. {\bf Bottom:} Example key distribution to
  achieve a $2$-out-of-$5$ threshold system using only $4$ padlocks
  and the $3$-out-of-$4$ device.
}\label{fig:lockthree}
\end{figure}

\cref{fig:lockthree} shows our prototype in a $3$-out-of-$4$
configuration.
Now, using \cref{alg:twooutofn}, this
configuration can also be used to implement a $2$-out-of-$5$ threshold
system with only
four padlocks by copying keys and distributing them in such a way that
each participant has a distinct subset of keys (as stated in
\cref{thm:five}). In this example, any two participants
together will have at least three different keys, which suffices to
open the $3$-out-of-$4$ device. This shows that $\ell_{2,5}=4$.

The correctness and the optimality of this schemme are proven by the
series of results in this section.

First, this scheme settles the small cases:
\begin{restatable}{theorem}{thmfive}\label{thm:five}
$\forall n\leq 5,  \forall k\geq 2, \ell_{k,n}=n$, except
$\ell_{2,5}=4$. We also have $\ell_{2,6}=4$ and $\forall n=7..10,\ell_{2,n}=5$.
\end{restatable}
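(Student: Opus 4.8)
The plan is to break the statement into its individual equalities $\ell_{k,n}=t$ and, for each, match an upper bound against a lower bound. Every upper bound $\ell_{k,n}\le n$ comes for free from the physical device of \cref{sec:our}. The three small-padlock upper bounds are supplied by \cref{alg:twooutofn}: with $i=2$ and $t=4$ one has $\binom{4}{2}=6\ge n$ for $n\le 6$, giving $\ell_{2,5}\le4$ and $\ell_{2,6}\le4$; with $i=2$ and $t=5$ one has $\binom{5}{2}=10\ge n$ for $n\le 10$, giving $\ell_{2,n}\le5$ for $7\le n\le10$. In each instance the distinct $2$-subsets of keys are handed out on top of a $3$-out-of-$t$ design, so a lone participant holds only two keys (too few to open the $3$-out-of-$t$ device) while any two distinct participants hold a union of at least three keys (enough), confirming a threshold of exactly $2$.

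For the lower bounds I would dispatch the routine cases first. The diagonal equalities $\ell_{n,n}=n$ are immediate from \cref{lem:ktwo} ($\ell_{k,n}\ge k$). For threshold $2$, \cref{lem:nthree} gives $\ell_{2,3}\ge3$ and $\ell_{2,n}\ge4$ for $n\ge4$; combining with \cref{cor:sperner} (if $\ell_{k,n}=t$ then $\binom{t}{\lfloor t/2\rfloor}\ge n$), any $t\le4$ forces $n\le\binom{4}{2}=6$, which simultaneously yields $\ell_{2,n}\ge5$ for $7\le n\le10$ and pins $\ell_{2,5}=\ell_{2,6}=4$. The remaining $k\ge3$ cases with $t<n$ fall to \cref{cor:sperner} and its even refinement \cref{cor:spernereven}: for example $\ell_{3,4}=4$ because $t=3$ would demand $\binom{3}{1}=3\ge4$, which is false.

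The hard part is $\ell_{3,5}$ and $\ell_{4,5}$, where I must rule out $t=4$ and where the Sperner corollaries are powerless, since $\binom{4}{2}=6>5$ satisfies \cref{cor:spernereven}. Here I would argue structurally about antichains in $2^{[4]}$: a short LYM/case analysis shows that any inclusion-free family of five subsets of a $4$-element set must consist of five of the six $2$-subsets (no singleton or $3$-subset can coexist with enough $2$-subsets to reach size five). Removing a single $2$-subset from the six leaves at least two of the three complementary pairs intact, so two participants hold disjoint key-sets whose union is all four keys; exactly as in the proof of \cref{cor:spernereven}, this pair alone opens every padlock and hence the door, contradicting any threshold $k\ge3$. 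Thus $t\ge5$, giving $\ell_{3,5}=\ell_{4,5}=5$. Assembling the equalities then proves the theorem, with $\ell_{2,5}=4$ surfacing as the sole exception precisely because for $k=2$ the complementary-pair obstruction is harmless.
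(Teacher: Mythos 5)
Your proof is correct, and it reaches the only genuinely hard cases --- ruling out $t=4$ for $\ell_{3,5}$ and $\ell_{4,5}$ --- by a different route than the paper. The paper's proof of \cref{thm:five} introduces the number $s$ of participants holding a single key, counts the admissible key-sets of the remaining $n-s$ participants via a quantity $\delta_{s,t}$, derives the inequality $2^{(n-s)-1}<3(n-s)-1$, and eliminates the possible values of $s$ one by one; only in the surviving case $n=5$, $s=0$, $t=4$ does it argue that every participant holds exactly two keys and that two complementary pairs of keys must both occur. You bypass that counting entirely: \cref{cor:sperner} alone disposes of $t\le 3$ (and of the whole $n=4$ case), and your classification of five-element antichains in $2^{[4]}$ --- no singleton or $3$-subset can belong to an antichain of size $5$, so the family must be five of the six $2$-subsets --- delivers the same intermediate fact directly, after which the complementary-pair endgame (two participants jointly owning all $t$ keys, hence able to open the door, forcing $k\le 2$) is identical to the paper's and to the proof of \cref{cor:spernereven}. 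One small caution: LYM by itself does \emph{not} exclude a singleton from a $5$-element antichain of $2^{[4]}$ (the LYM sum is only $\frac{1}{4}+\frac{4}{6}=\frac{11}{12}\le 1$), so the case analysis you allude to is genuinely needed; it is one line, since a singleton $\{x\}$ forces the other members into the power set of the remaining three elements, where antichains have at most $\binom{3}{1}=3$ members, and $3$-subsets are handled by complementation. Everything else --- the upper bounds from the device of \cref{sec:our} and from \cref{alg:twooutofn}, and the threshold-$2$ lower bounds assembled from \cref{lem:nthree} and \cref{cor:sperner} --- matches the paper's treatment of the remaining equalities. The net effect is a shorter and more transparent argument for the $n\le 5$ lower bounds at the price of invoking a (standard, easily verified) structural fact about antichains that the paper derives implicitly through its rank analysis.
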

\begin{proof}
  For any $n\leq 3$ and $2\leq k\leq n$ the results where already
  proven in \cref{lem:ktwo,lem:nthree}.  There remains
  $n=4$ and $n=5$, our design providing the upper bound. The proof is
  done by contradiction.

Let $t=\ell_{k,n}$ and suppose $t\leq n-1$.
Let $s$ be the number of participants having a single key.
These participants must have different single keys by
\cref{lem:nosubsets}.
The remaining $n-s$ participants must own at least $2$ keys, but
cannot own any of the first $s$ keys, by \cref{lem:nosubsets}.
Each set of keys of these remaining $n-s$ participants cannot be the
full set of the remaining $t-s$ keys, again by \cref{lem:nosubsets}.
Therefore, at least, the number $\delta_{s,t}$ of distinct subsets of size at least
$2$ and at most $t-s-1$ must be larger than $n-s$ (the requirement is
that the size of the clutter, must be larger than $n-s$, but in this
clutter all the subsets must at least be distinct).
This is:
\begin{equation}\label{eq:distinct}
\delta_{s,t}=\sum_{i=2}^{t-s-1} \binom{t-s}{i} \geq n-s
\end{equation}

But, if $t\leq n-1$ and $n\geq{4}$, then:
\begin{equation}\label{eq:ds}
\delta_{s,t}\leq\sum_{i=2}^{n-s-2}
\binom{n-s-1}{i}=2^{n-s-1}-1-(n-s-1)-(n-s-1)
\end{equation}

Therefore, \Cref{eq:distinct} cannot be satisfied whenever
\Cref{eq:ds} is $<n-s$, that is:
\begin{equation}\label{eq:smallcases}
2^{(n-s)-1} < 3(n-s)-1
\end{equation}

But \Cref{eq:smallcases} is true for $n-s \in
\{1,2,3,4\}$. Yet $n-s > 1$, otherwise $n-1\geq t\geq s$ implies at
most $t=s=n-1$, but then there remains no available key for the $n$-th
participant. Hence, we have $n-s \in \{2,3,4\}$.

For $n=4$, if $s\in\{2,1,0\}$ then \Cref{eq:smallcases} is
satisfied thus we can dismiss those cases. Finally, there remains no
value for $s$ meaning that our hypothesis $t\leq n-1$ is false.

For $n=5$, if $s\in\{3,2,1\}$ then \Cref{eq:smallcases} is
also satisfied so we can dismiss those cases. There remains the case
$s=0$ for $t=4$ (the case $t=3$ is excluded by the fact that
$\binom{3}{2}=3<5$).  The $5$ participants can thus only have $2$ or
$3$ keys each (if one of them has the $4$ keys he can open the door
alone).  If one of the $5$ participants owns $3$ keys $K_1,K_2,K_3$
then the other four must all own the fourth key $K_4$ (otherwise one
of them will own only a subset of the first $3$ keys, contradicting
\cref{lem:nosubsets}).  But then, excluding $K_4$, these four
remaining participants must have distinct non-included subsets of size
$1$ or $2$ of the $3$ keys $K_1,K_2,K_3$, which is impossible.
Therefore the rank of the arrangement is $2$, that is, all $5$
participants can only have $2$ keys each.  There are $\binom{4}{2}=6$
possible pairs. W.l.o.g. suppose that only the pair $K_3,K_4$ is not
among the participants pairs.

Then two participants owns $(K_1,K_3)$ for one and $(K_2,K_4)$ for the
other, so the two of them can open all the padlocks. This means
that~$k\leq{}2$. For $k>2$, we have a contradiction since no value $s$
can be taken, leading to refute the hypothesis $t\leq n-1$. Thus for
$n=5$ and $k\geq 3$ we have $\ell_{k,n}=n$.

The remaining case, $k=2$ is thus actually $2$-out-of-$5$ threshold
with at least $4$ padlocks where every player owns exactly $2$ keys.

This is satisfiable as follows: use a $3$-out-of-$4$ device with our
design with $4$ padlocks. Then provide the $5$ users with distinct
pairs of keys. Not a single user can open $3$ padlocks. But with
distinct pairs of keys all pairs of participants own at least $3$
different keys.

Finally, \cref{alg:twooutofn} gives a solution as soon as
$t$ is such that $\binom{t}{2} \geq n$,
while \cref{cor:sperner} prevents any solution
with $\binom{t}{\lfloor t/2\rfloor} < n$.
But with $t=4$ and
$t=5$, $\binom{t}{2}=\binom{t}{\lfloor{}t/2\rfloor}$.
So the
upper bound of \cref{alg:twooutofn} is also a lower bound.
Now $\binom{4}{2}=6$ and $\binom{5}{2}=10$ give the maximal
respective number of participants.
\end{proof}

Second, we give an asymptotic estimate for larger cases:
\cref{alg:twooutofn} makes it possible to implement a
$2$-out-of-$n$ threshold padlock system with only
$2\lceil\log_2(n)\rceil$ padlocks and $n\lceil\log_2(n)\rceil$ keys:
\begin{restatable}{proposition}{thmtwoi}\label{thm:twoi}
\cref{alg:twooutofn} correctly provides a $2$-out-of-$n$
threshold padlock system and for $n\geq{2}$,
$\ell_{2,n}\leq{2\lceil\log_2(n)\rceil}$.
\end{restatable}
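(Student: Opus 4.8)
The plan is to split the statement into its two halves: first, the correctness of \cref{alg:twooutofn}, namely that its output really is a $2$-out-of-$n$ threshold system; and second, the quantitative bound $\ell_{2,n}\le 2\lceil\log_2(n)\rceil$, obtained by choosing the parameters $i$ and $t$ judiciously.

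For correctness I would analyze the key distribution produced by the first branch ($t<n$); the second branch ($t=n$, $i=1$) is merely the special case where the underlying $(i{+}1)$-out-of-$t$ device of \cref{sec:our} is the plain $2$-out-of-$n$ device and each participant receives a distinct singleton. Two properties must be checked: (i)~no single participant can open the door, and (ii)~any two can. For (i), each participant holds a distinct $i$-subset of the $t$ keys and can therefore open exactly $i$ padlocks, one short of the $i{+}1$ required by the $(i{+}1)$-out-of-$t$ device. For (ii), the crucial observation is that two \emph{distinct} subsets of the same cardinality $i$ must differ in at least one element, so their union has cardinality at least $i{+}1$; hence any two participants jointly open at least $i{+}1$ padlocks and the door opens. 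This is exactly the ``strictly larger set of keys'' intuition quoted before the algorithm, and it is consistent with \cref{lem:nosubsets}, since distinct equal-size sets are automatically inclusion-free. Together (i) and (ii) give threshold exactly $2$.

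For the bound I would instantiate the parameters around $m=\lceil\log_2(n)\rceil$. If $n\le 2m$ nothing clever is needed: the direct device of \cref{sec:our} uses $n$ padlocks and $n\le 2m=2\lceil\log_2(n)\rceil$, which settles these (small) cases through the $t=n$ branch. Otherwise $2m<n$, and I would take $t=2m$ and $i=m$, which satisfies $1\le i\le t\le n$. The precondition $\binom{t}{i}\ge n$ then reduces to the central-binomial estimate $\binom{2m}{m}\ge 2^{m}\ge n$, the last inequality holding because $m\ge\log_2(n)$. Feeding these parameters into \cref{alg:twooutofn} yields a valid $2$-out-of-$n$ system with $t=2m=2\lceil\log_2(n)\rceil$ padlocks, whence $\ell_{2,n}\le 2\lceil\log_2(n)\rceil$.

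The only genuinely non-trivial step is the central-binomial bound $\binom{2m}{m}\ge 2^{m}$, which I would dispatch with a one-line induction: the ratio of consecutive central coefficients is $\binom{2m+2}{m+1}/\binom{2m}{m}=2(2m+1)/(m+1)\ge 2$, so the sequence grows at least as fast as $2^{m}$ starting from the base case $\binom{2}{1}=2$. Everything else is bookkeeping, namely confirming that the parameter constraints of \cref{alg:twooutofn} hold and that the small-$n$ regime where $2\lceil\log_2(n)\rceil\ge n$ is absorbed cleanly by the trivial $t=n$ branch.
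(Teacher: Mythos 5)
Your proposal is correct and follows essentially the same route as the paper: the correctness argument (distinct $i$-subsets of $t$ keys, so one participant opens only $i$ padlocks while any two open at least $i+1$) is identical, and the bound comes from the same choice $i=\lceil\log_2 n\rceil$, $t=2i$ with $\binom{2i}{i}\ge 2^i\ge n$, the paper merely deriving the binomial inequality from $\binom{t}{i}\ge (t/i)^i$ where you use an induction on the ratio of consecutive central coefficients. Your explicit handling of the small-$n$ regime where $2\lceil\log_2 n\rceil> n$ (so that the algorithm's precondition $t\le n$ would fail) is a detail the paper glosses over, and is a welcome addition.
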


\begin{proof}
Consider an $(i+1)$-out-of-$t$ threshold system with $t$ padlocks
for $\binom{t}{i}\geq{n}$.
Distribute $i$ keys for each participant, such that all the hyperedges
are distinct. This is possible as $\binom{t}{i}\geq{n}$.
No single participant can open the device, but any two participants
have different hyperedges of size $i$ and thus have at least $i+1$
distinct keys. This is enough to open the door
and~\cref{alg:twooutofn} is correct.
Finally,
$\binom{t}{i}\geq{(t/i)^i}$, so we can for instance
set $i=\lceil\log_2(n)\rceil$, so that each participant gets
that many keys, and setup $t=2\lceil\log_2(n)\rceil$ padlocks,
as $2^{\lceil\log_2(n)\rceil}\geq{n}$.
\end{proof}

For instance, the first case where triples are better than couples in
\cref{alg:twooutofn} is for $n=16$.
As $\binom{6}{2}=15$ and $\binom{7}{2}=21$, with pairs the Algorithm
would use $7$ padlocks, where $6$ are enough:
setup a $4$-out-of-$6$ device and give distinct triples of copies of
the $6$ keys to each of the $16$ participants.
This is possible as $\binom{6}{3}=20\geq{16}$. Then any pair of
participants have at least $3+1=4$ different keys and they can open
our device.
Overall, we have that the minimal number of padlocks for a
$2$-out-of-$n$ threshold system is ${\mathcal O}(\log(n))$ with
${\mathcal O}(n\log(n))$ keys. Indeed,
the lower bound is given in~\cref{cor:sperner}, and it is realizable by
\cref{alg:twooutofn}. We thus have proven~\cref{thm:twoopt}.

\begin{restatable}{corollary}{thmtwoopt}\label{thm:twoopt}
For $n\geq{2}$, $\ell_{2,n}=\min\left\{t,~\text{s.t.}~\binom{t}{\lfloor{t/2}\rfloor}\geq{n}\right\}$.
\end{restatable}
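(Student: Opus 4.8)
The plan is to establish $\ell_{2,n}$ by sandwiching it between the lower bound from Sperner's theorem and the upper bound realized by Algorithm~\ref{alg:twooutofn}, and then observing that these two bounds coincide exactly at the quantity $\min\{t : \binom{t}{\lfloor t/2\rfloor}\geq n\}$. Concretely, let $t^\ast$ denote this minimal $t$. First I would argue the lower bound $\ell_{2,n}\geq t^\ast$: by \cref{cor:sperner}, any $2$-out-of-$n$ threshold system using $\ell_{2,n}$ padlocks must satisfy $\binom{\ell_{2,n}}{\lfloor \ell_{2,n}/2\rfloor}\geq n$, so $\ell_{2,n}$ belongs to the set $\{t : \binom{t}{\lfloor t/2\rfloor}\geq n\}$ and is therefore at least its minimum $t^\ast$.

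For the matching upper bound $\ell_{2,n}\leq t^\ast$, I would invoke the correctness of \cref{alg:twooutofn} as already established in \cref{thm:twoi}. The key point is that the central binomial coefficient $\binom{t}{\lfloor t/2\rfloor}$ is the largest among all $\binom{t}{i}$, so if $\binom{t^\ast}{\lfloor t^\ast/2\rfloor}\geq n$ then choosing $i=\lfloor t^\ast/2\rfloor$ satisfies the algorithm's hypothesis $\binom{t}{i}\geq n$ with $t=t^\ast$. Running \cref{alg:twooutofn} with these parameters then produces a valid $2$-out-of-$n$ threshold system using exactly $t^\ast$ padlocks, whence $\ell_{2,n}\leq t^\ast$. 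Combining the two inequalities gives $\ell_{2,n}=t^\ast$, which is precisely the claimed formula.

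The only subtlety I anticipate is a monotonicity check: for the upper bound argument to be airtight, I want to be sure that once $\binom{t}{\lfloor t/2\rfloor}\geq n$ holds at $t=t^\ast$, the algorithm is indeed applicable, and this rests on the fact that $\binom{t^\ast}{\lfloor t^\ast/2\rfloor}=\max_i\binom{t^\ast}{i}$, so the feasibility condition $\binom{t^\ast}{i}\geq n$ is satisfied by the balanced choice of $i$. A minor point worth noting is that the set $\{t : \binom{t}{\lfloor t/2\rfloor}\geq n\}$ is upward closed (the central binomial coefficient is increasing in $t$), so its minimum $t^\ast$ is well defined and the lower bound from \cref{cor:sperner} genuinely forces $\ell_{2,n}$ into this set rather than merely bounding it loosely. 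No step requires heavy computation; the proof is essentially the observation that the Sperner lower bound and the algorithmic upper bound are two sides of the same extremal inequality, so the hardest part is simply phrasing the equality of the two bounds cleanly rather than overcoming any real obstacle.
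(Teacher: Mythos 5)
Your proposal is correct and follows essentially the same route as the paper: the lower bound is exactly \cref{cor:sperner}, and the upper bound is the $(\lfloor t^\ast/2\rfloor+1)$-out-of-$t^\ast$ device of \cref{alg:twooutofn} with distinct $\lfloor t^\ast/2\rfloor$-subsets of keys, whose correctness is \cref{thm:twoi}. Your extra observations (maximality of the central binomial coefficient, upward closure of the feasible set) are sound and merely make explicit what the paper leaves implicit.
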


\begin{proof}
The lower bound is given by \cref{cor:sperner}.
For the upper bound consider as in \cref{thm:twoi}
an $(\lfloor{t/2}\rfloor+1)$-out-of-$t$ threshold system with $t$
padlocks and distribute $\lfloor{t/2}\rfloor$ keys to each participant,
such that all the hyperedges are distinct.
\end{proof}

  \subsection[A Trick for 2-out-of-n Padlock System with
    Exactly 2log(n) Padlocks]{A Trick for $2$-out-of-$n$ Padlock System with
    Exactly $2\lceil\log_2(n)\rceil$ Padlocks}

  It is possible to directly
  obtain a $2$-out-of-$n$ padlock system with exactly
  $2\lceil\log_2(n)\rceil$ padlocks and no external device. To see
  this, one can mix $2$-out-of-$2$ devices and a
  $1$-out-of-$\log_2{n}$.  Then, remark first that daisy chains are
  $1$-out-of-$t$ devices and second that setting $t$ padlocks on the
  same latch provides a $t$-out-of-$t$ device. For the illustration
  purpose, we describe the alternative construction in two steps:
  first with devices, then without any device.

The first construction is as follows:
\begin{enumerate}
\item Consider $2$-out-of-$2$ devices, similar to those
  of~\cref{fig:lock6o6}, and take $\lceil\log_2{n}\rceil$ of them;
  On each one of these $2$-out-of-$2$ devices set two padlocks, one
  black and one white. This is $2\lceil\log_2{n}\rceil$ distinct padlocks;
\item Attach these $\lceil\log_2{n}\rceil$ devices to each latch of a
  $1$-out-of-$\lceil\log_2{n}\rceil$ device;
\item Order the participants and give them a distinct number between
  $0$ and $2^{\lceil\log_2{n}\rceil}$. Then give each one of the
  participants $\lceil\log_2{n}\rceil$ keys, following the binary
  digits of her number.
  On the one hand, if the $i$-th bit of her number is zero, then give
  her the key of the white padlock of the $i$-th $2$-out-of-$2$
  device;
  on the other hand, if the $i$-th bit of her number is one, then give
  her the key of the black padlock of the $i$-th $2$-out-of-$2$
  device.
\end{enumerate}
This is a total of $2n\lceil\log_2{n}\rceil$ keys.
Now each participant alone cannot open any latch (she owns only one of
the two keys required for that), therefore she cannot open the door.
Differently, any two participants have at least one bit, $i$, of
difference. For this bit, the two of them thus have both keys of the
$i$-th $2$-out-of-$2$ device. They can thus open it, thus open the
$1$-out-of-$\lceil\log_2{n}\rceil$ device and open the door. This is
overall a $2$-out-of-$n$ system.

Now the second construction mimics the first one given above, but without any
particular device. It is shown in~\cref{alg:doubledaisy}.
\begin{algorithm}[htb]
\caption{Two-out-of-$n$ double daisy chain with only
  $2\lceil\log_2{n}\rceil$ padlocks}\label{alg:doubledaisy}
\begin{algorithmic}[1]
\State Setup a daisy chain with $\lceil\log_2{n}\rceil$ white padlocks;
\State For each white padlock, double it with a black padlock: that is
  form a daisy chain with double links;
\State Similarly, each participant receives $\lceil\log_2{n}\rceil$
  keys, white or black, according to the binary decomposition of her
  number.
\end{algorithmic}
\end{algorithm}

First, similarly, each participant alone cannot open the chain, as she cannot
open any link, having only one of the two keys required to open one
link. Second, similarly also, any two participants having different numbers
have at least both keys of one double link and can open the chain and
the door.

This is simple and does not require any additional device, apart from
$2\lceil\log_2{n}\rceil$ padlocks. Note that this usually uses more
padlocks than~\cref{alg:twooutofn}. For instance, for a $2$-out-of-$8$
system, the above method requires $2*3=6$ padlocks and for a
$2$-out-of-$10$ system, it requires $2*4=8$ padlocks,
while~\cref{thm:twoopt} uses only $5$ padlocks for both cases as
$\binom{5}{2}=10$.

\section{Access Structures}\label{sec:algebra}
In a secret sharing scheme, a datum $d$ is broken into shadows which are
shared by a set of trustees. The family $\{G\subseteq{P}:G~\text{can
  reconstruct}~d\}$ is called the \emph{access structure} of the scheme. A
$k$-out-of-$n$ scheme is a secret sharing scheme having the access
structure $\{G\subseteq{P}:|G|=k\}$~\cite{Ito:1993:accessstruct}. 
In this section we show how to physically implement access structures
defined by logic gates. 
Numeric solutions with interpolation usually use one evaluation point
for each literal and one polynomial per clause.
Our physical solution uses instead only one padlock for each distinct
variable and one device per whole normal form.

\subsection{Towards a Padlock Algebra with One Device per Normal
  Form}\label{ssec:normalforms}
A~generalization of threshold schemes is to be able to implement any
access scheme described by a logic formula.
This is possible by implementing AND and OR gates, as shown in
\cref{prop:cnf}
and \cref{alg:disjunctive,alg:conjunctive},
following~\cite{Benaloh:1990:crypto88,Ito:1993:accessstruct}.
A first idea is to use chains so that opening a padlock actually frees
a chain that can free several latches. Then a second idea is that
$1$-out-of-$n$ systems are just like a disjunction while
$n$-out-of-$n$ systems are just like a conjunction.

\cref{alg:disjunctive} shows how to generate a padlock system openable
by any satisfiable realization of a disjunction with $t$ clauses and
$n$ distinct variables. For this, a single $1$-out-of-$t$ device is set. It will
open if any of the $t$ conjunctive clauses is true.
Associate each latch of the device to one conjunctive clause.
Then associate one padlock for each variable. To simulate the
subjection of a clause to a variable, each padlock closes a chain
passing through  each latch corresponding to a clause containing that
variable (and thus preventing the opening of those latches if that
padlock is not open).

\begin{algorithm}[htbp]
\caption{Physical DNF  with one padlock for each variable}\label{alg:disjunctive}
\begin{algorithmic}[1]
\Require A disjunctive normal form with $t$ clauses and $n$ variables.
\Ensure A system with $n$ padlocks, openable by any satisfiable realization of
the normal form.
\State Set up a $1$-out-of-$t$ threshold system, with one
latch for each clause;
\State For each variable present in the formula: pass a chain through
the hole of each latch corresponding to a conjunction containing that
variable; close that chain with one padlock.
\end{algorithmic}
\end{algorithm}

\cref{fig:dnf} gives an example of \cref{alg:disjunctive} on the
logic formula $(A\wedge{B})\vee(A\wedge{C})\vee(B\wedge{D})\vee(E)$.
\begin{figure}[htbp]\centering
\includegraphics[width=.4\textwidth]{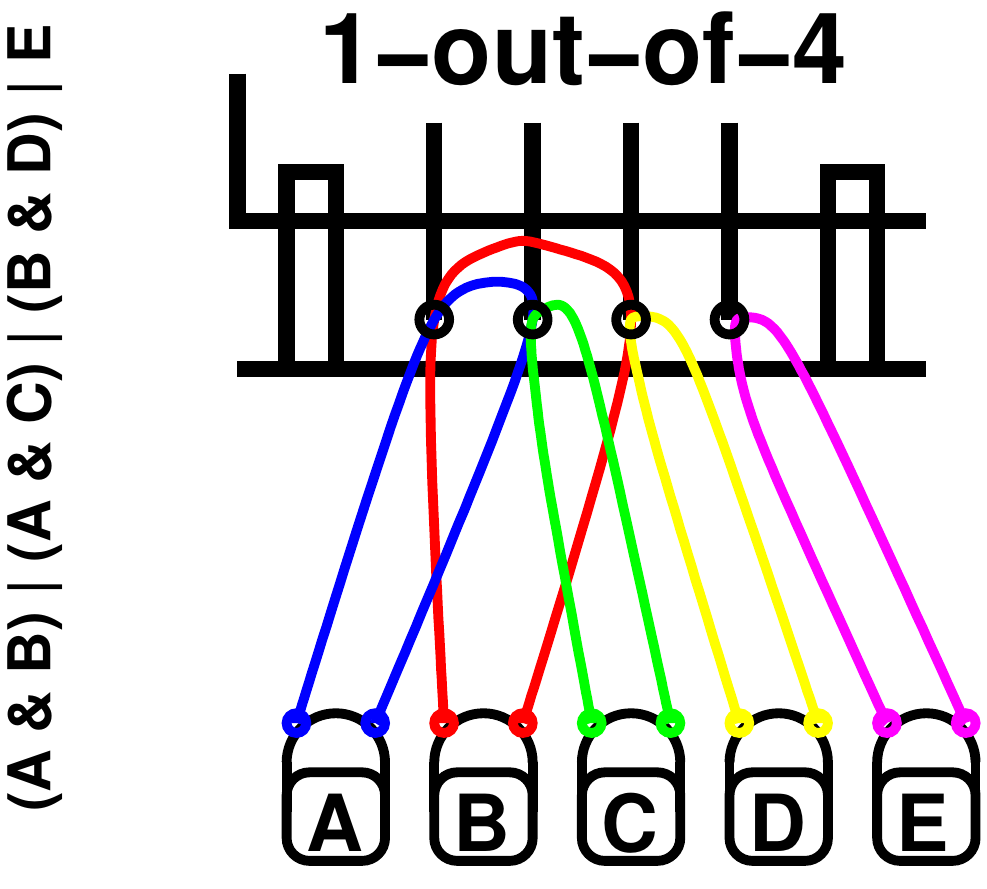}
\caption{\cref{alg:disjunctive} on a disjunctive normal form: one
  padlock closing one chain per term, a $1$-out-of-$4$ device using a
  single latch for each clause.}\label{fig:dnf}
\end{figure}

Now for conjunctions, we instead use a $t$-out-of-$t$ master structure
and several other $1$-out-of-$k$ systems, one for each conjunction in
the CNF, as shown in~\cref{alg:conjunctive}.
\begin{algorithm}[htbp]
\caption{Physical CNF with one padlock for each variable}\label{alg:conjunctive}
\begin{algorithmic}[1]
\Require A conjunctive normal form with $t$ clauses and $n$ variables.
\Ensure A system with $n$ padlocks, openable by any satisfiable realization of
the normal form.
\State Set up a $t$-out-of-$t$ threshold system;
\State Set up one $1$-out-of-$k_i$ threshold system for each
conjunctive clause with $k_i$ variables. Attach each $1$-out-of-$k_i$
system to one of the $t$ latches of the $t$-out-of-$t$ system.
\State For each variable present in the formula: pass a chain through
  a free latch of each $1$-out-of-$k_i$ system corresponding to a
  clause containing that variable; close that chain with one padlock.
\end{algorithmic}
\end{algorithm}

\cref{fig:cnf} gives an example of \cref{alg:conjunctive} on the
logic formula $(A\vee{B}\vee{C})\wedge(D)\wedge(C\vee{E})$.
\begin{figure}[htb]\centering
\includegraphics[width=.5\textwidth]{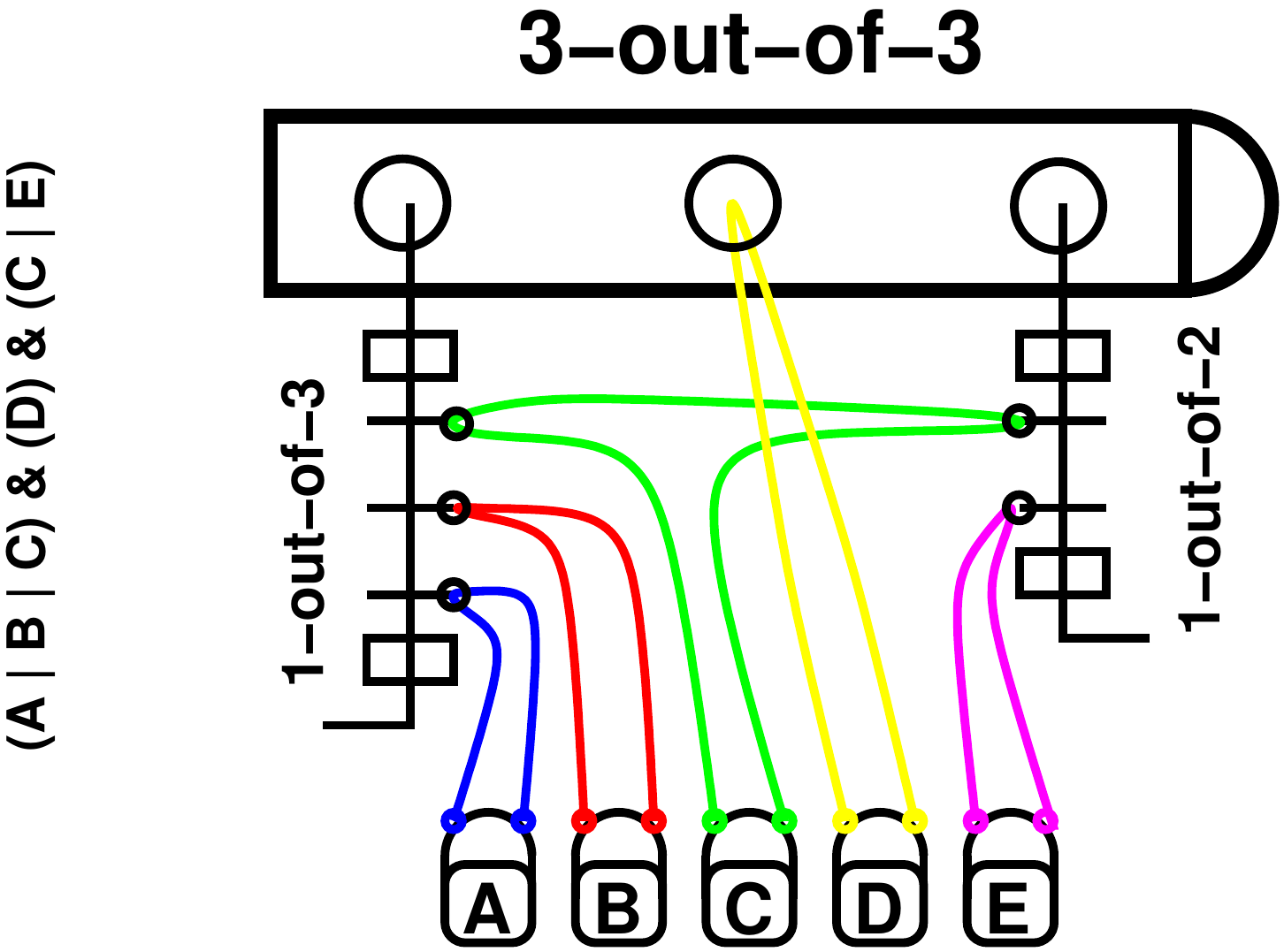}
\caption{\cref{alg:conjunctive} on a conjunctive normal form: one
  padlock closing one chain per term, a $4$-out-of-$4$ device using a
  single $1$-out-of-$k_i$ device for each clause (if $k_i>1$).}\label{fig:cnf}
\end{figure}

Both \cref{alg:disjunctive,alg:conjunctive} thus provide a way to
build systems with a number of padlocks equal to the number of
distinct variables in the normal form: this is \cref{prop:cnf}
thereafter.

\begin{proposition}\label{prop:cnf}
Any disjunctive or conjunctive normal form with $t$ clauses, $m$
distinct variables and no negation is realizable with $m$ padlocks
\end{proposition}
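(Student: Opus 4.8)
The plan is to prove the proposition as a correctness statement for the two constructions of \cref{alg:disjunctive,alg:conjunctive}, treating the disjunctive and conjunctive cases separately. In each case I would first fix the semantic dictionary: opening the padlock assigned to a variable encodes setting that variable to \emph{true}, and the whole device opening encodes the formula evaluating to \emph{true}. The hypothesis of \emph{no negation} is what makes this single monotone dictionary consistent, since no complementary padlock is ever needed. For the disjunctive form $C_1 \vee \dots \vee C_t$, I would argue that the latch associated with clause $C_j$ in the top $1$-out-of-$t$ device is freed exactly when every chain threaded through its hole is released, i.e.\ exactly when all padlocks for the variables appearing in $C_j$ are open; this is precisely the condition that $C_j$ is satisfied. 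Since the top device is $1$-out-of-$t$ (a daisy chain, as recalled earlier), it opens if and only if at least one latch is freed, hence if and only if at least one clause is satisfied, hence if and only if the formula is true.

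For the conjunctive form $C_1 \wedge \dots \wedge C_t$ I would use the dual argument. Each clause $C_j$ with $k_j$ variables is realized by a $1$-out-of-$k_j$ subdevice whose latches are in bijection with the variables of $C_j$; the chain of a variable passes through one free latch of this subdevice, so the subdevice opens if and only if at least one of its variables' padlocks is open, that is if and only if the disjunctive clause $C_j$ is satisfied. Attaching all $t$ subdevices to a $t$-out-of-$t$ master structure (obtained by stacking padlocks on a common latch, as recalled earlier), the master opens if and only if every subdevice opens, hence if and only if every clause is satisfied, hence if and only if the formula is true.

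It then remains to count padlocks. In both constructions a single variable is handled by a single padlock whose chain is routed through \emph{all} the latches where that variable occurs; consequently the number of padlocks equals the number $m$ of distinct variables, independently of the number of clauses $t$ or of the total number of literal occurrences. I expect the only genuinely delicate point to be the physical soundness of the chain-through-multiple-latches mechanism: one must argue that a latch is held closed by every chain passing through its hole, so that it frees only when all of them are released, yielding the conjunctive (AND) behaviour inside a disjunctive clause, and symmetrically that threading distinct chains through distinct free latches of a $1$-out-of-$k_j$ device yields the disjunctive (OR) behaviour inside a conjunctive clause. Once these two building blocks are granted, together with the already-justified facts that daisy chains implement $1$-out-of-$t$ and stacked padlocks implement $t$-out-of-$t$, both directions of the correctness equivalence follow and the bound of exactly $m$ padlocks is established.
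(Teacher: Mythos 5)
Your proposal is correct and follows essentially the same route as the paper: it verifies \cref{alg:disjunctive,alg:conjunctive} under the dictionary ``opening a padlock $=$ setting its variable to TRUE,'' with one padlock per distinct variable and chains routed through all latches where that variable occurs. You simply spell out in full the latch-level AND/OR semantics that the paper's own (much terser) proof leaves implicit, which is a welcome elaboration rather than a different argument.
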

\begin{proof}
First for disjunctive clauses: they require one $1$-out-of-$t$
threshold system and $m$ chains, as shown in
\cref{alg:disjunctive}. The ``door'' can be opened only by a
satisfiable interpretation where TRUE means opening the padlock and
FALSE means letting it closed.

Similarly one can create arrangements for conjunctive normal forms,
also with as many padlocks as there are distinct variables as shown in
\cref{alg:conjunctive}.
\end{proof}

\subsection{Further examples of logic formulae}\label{app:blocks}

In \cref{sec:algebra}, we show that any access scheme described
by a logic formula without negation can be implemented using simple
physical devices.
In this section, we show some other constructions that can simplify
the use of \cref{alg:disjunctive,alg:conjunctive} for normal forms.
We also show how our physical methods can implement some formulae
that are proven impossible with a single secret sharing scheme.

First, to implement \cref{alg:conjunctive} we need a
$1$-out-of-$k_i$ system for each clause. If this is simpler, one can
always build such a system by composing small $1$-out-of-$2$ systems.
For instance \cref{fig:etouet}, left, shows how to create one tree for
each disjunctive clause as used in \cref{alg:conjunctive}:
assemble U-shaped metal rods. It is also possible to create a daisy
chain of $1$-out-of-$2$ devices like the one in
\cref{fig:etouet}, right.
\begin{figure}[htbp]
\hfill \includegraphics[height=2.5cm]{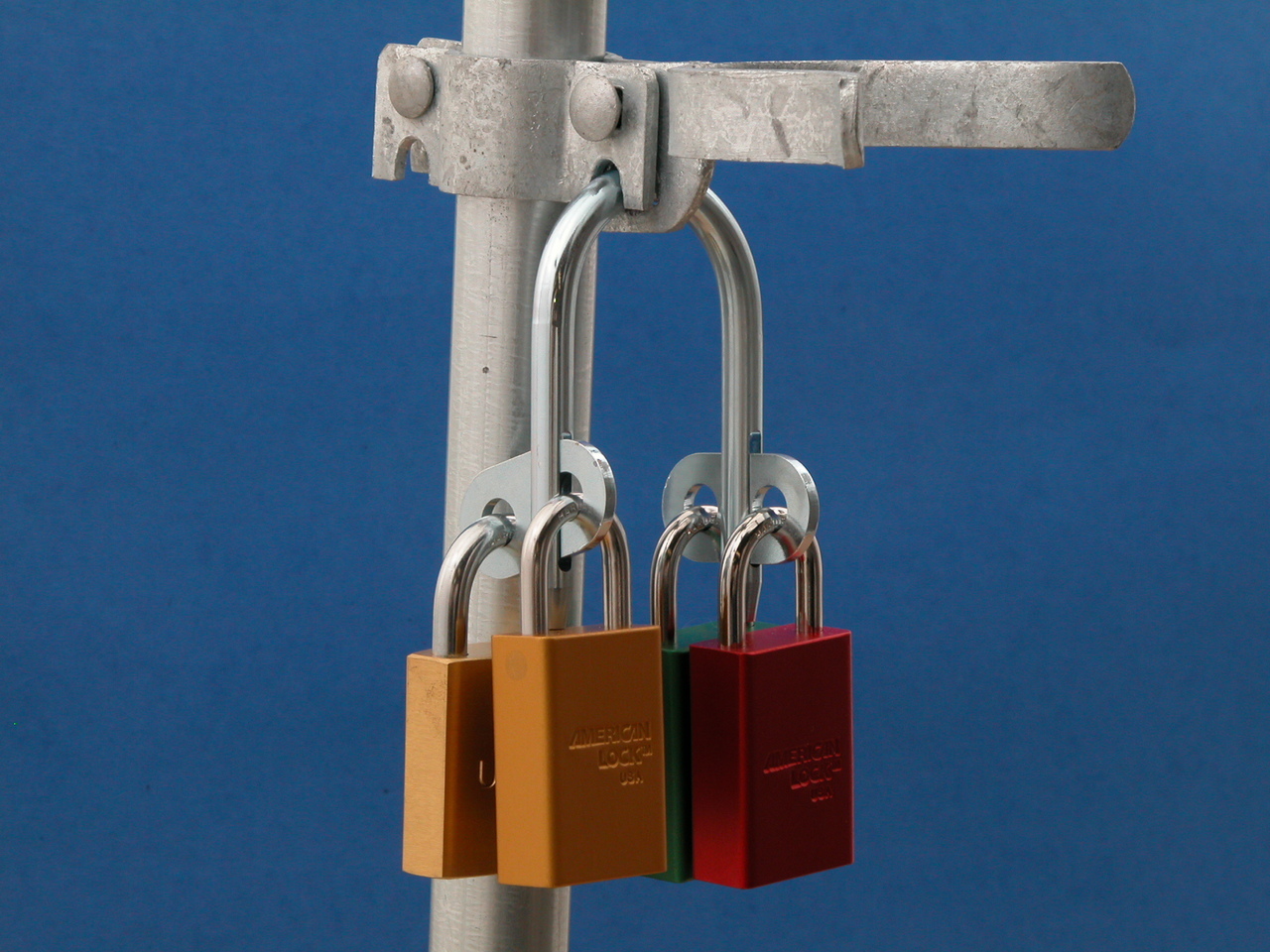}\hfill
\includegraphics[height=2.5cm]{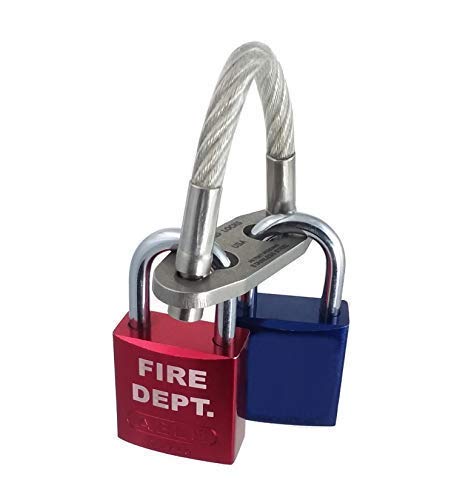}\hfill~
\caption{Left, a tree-like disjunctive clauses system by
  GateKeeper GM P6006 to combine 2, 3 or 4 locks; right, a physical
  $1$-out-of-$2$ disjunction Model Cb2 by Sharelox.}\label{fig:etouet}
\end{figure}

Now, second, we give examples of usage of all our devices and construction.
For this we offer physical solutions for two examples, proven
unrealisable using a single scheme. Indeed,
\cite{Benaloh:1990:crypto88} shows that the two following cases cannot
be solved if users must use the same system of shares:
\begin{enumerate}
\item $(A\wedge{}B)\vee(C\wedge{}D)$
\item $(A\wedge{}B)\vee(C\wedge{}D)\vee(B\wedge{}C)$
\end{enumerate}

%

%

%


However, with a physical system, we can implement such access
schemes with somewhat less devices as we have physical tools to
combine conjunctions and disjunctions:
\begin{itemize}
\item Conjunctions can be implemented with $n$-out-of-$n$ systems as in
\cref{fig:lock6o6};
\item Disjunctions can be implemented with $1$-out-of-$n$ systems as in
\cref{fig:lock1o6,fig:lock1o10T} or in \cref{sec:our}.
\end{itemize}

For the first formula:
$(A\wedge{}B)\vee(C\wedge{}D)$, we report no improvements.
One can implement \cref{alg:disjunctive} on this
formula, and provide a physical solution:
use a daisy chain of two $2$-out-of-$2$ classical
equivalent system, as in \cref{fig:lock6o6}, one for each
conjunction. This is not different from the solution of
\cite{Benaloh:1990:crypto88} with two distinct secret sharing schemes.

Now, for the second formula,
$(A\wedge{}B)\vee(C\wedge{}D)\vee(B\wedge{}C)$, a naive
implementation (resp. \cite{Benaloh:1990:crypto88} solution) would
require six padlocks (resp. $6$ shares for three systems of $2$
shares, one for each clause). But it is possible to use only four
padlocks, as shown in \cref{alg:Benaloh} and \cref{fig:Benaloh}.

\begin{algorithm}[htbp]
\caption{Physical realization with only $4$ padlocks
  of~\cite[Theorem~3]{Benaloh:1990:crypto88}}\label{alg:Benaloh}
\begin{algorithmic}[1]
\State Setup a~$1$-out-of-$3$ threshold system (thus with~$3$ latches);
\State Put a padlock~$A$ on the first latch and a padlock~$D$ on the
third latch;
\State Pass a chain in the holes of the first two latches and close
that chain with a padlock~$B$;
\State Pass a chain in the holes of the last two latches and close
that chain with a padlock~$C$.
\end{algorithmic}
\end{algorithm}

\begin{figure}[htbp]\centering
\includegraphics[width=.35\textwidth]{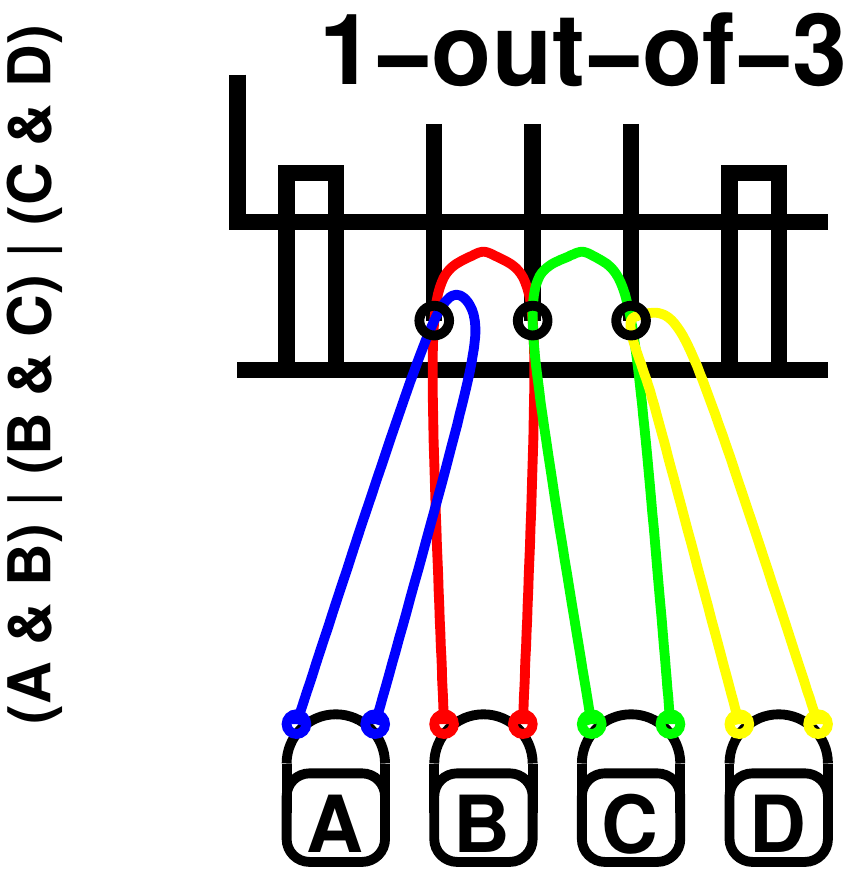}
\caption{\cref{alg:Benaloh} on
  $(A\wedge{}B)\vee(C\wedge{}D)\vee(B\wedge{}C)$ with one padlock per
  variable (and not per literal in the formula).}\label{fig:Benaloh}
\end{figure}

\begin{remark}
Finally, note that, with our novel design, \cref{prop:cnf} is not
optimal.
Consider for instance the DNF with a single participant able to open
the door or any two among five others:
$A
\vee
(B\wedge{}C)
\vee
(B\wedge{}D)
\vee
(B\wedge{}E)
\vee
(B\wedge{}F)
\vee
(C\wedge{}D)
\vee
(C\wedge{}E)
\vee
(C\wedge{}F)
\vee
(D\wedge{}E)
\vee
(D\wedge{}F)
\vee
(E\wedge{}F)
$. \cref{prop:cnf} would require~$6$ padlocks and a
$1$-out-of-$11$ design.
However, we can use \cref{thm:five} and our design for
a~$3$-out-of-$4$ lock with only~$4$ padlocks as described %
thereafter: set up a~$3$-out-of-$4$ design with~$4$ padlocks;
give pairs of distinct keys to each participant~$B,C,D,E,F$;
give any three distinct keys to~$A$.
\end{remark}

\newcommand{\anneau}{\ensuremath{\circledcirc}}
\subsection{Knotted padlocks}\label{ssec:knotted}
A post on
\href{https://crypto.stackexchange.com}{crypto.stackexchange.com} by
Ahle~\cite{Ahle:2012:knots} hints that one could create~$k$-out-of-$n$
threshold padlock systems using~$n$ padlocks and some wire.

His idea is to have the wire securing the door and going around the
rings of the padlocks in a certain configuration. If a padlock is
opened then it frees his part of the wire  and potentially more from
other padlocks. The example given is for a~$1$-out-of-$2$ system:
``\emph{\small Say you have one wire to which the [door] is fastened
  and two padlocks. You want that if either of the locks are opened,
  the wire is completely freed. You do this by letting the
  wire go first clockwise around [the ring of] the first [padlock],
  then clockwise around the second, then anticlockwise around the
  first and finally anticlockwise around the second. It can be thought
  of as~$aba^{-1}b^{-1}$. If you remove either, the other cancels
  out. It generalizes to any k out of n padlocks}''.

This is a neat idea, which however turns out to not generalize easily to any~$k$ out of~$n$ system, though.

First, associate a variable from a non-commutative group to each one
of $n$ padlocks. To simulate the opening of a padlock set this variable
to~$1$, the neutral element of the group, seen multiplicatively.
Suppose that this variable represents one clockwise wrapping of the
wire and that the inverse of that variable represents the anticlockwise wrapping.
Then, the sequential arrangement of the wrappings of the wire around
the rings of the padlocks is a sequential multiplication of these
variables and their inverses, just like a Knot group presentation.

For instance, if a clockwise wrapping is directly followed by an
anticlockwise wrapping then this is useless and represented by
$x x^{-1}=1= x^{-1} x$. Finally, if the door is opened when some padlocks are
opened then it means that the multiplication of the variables is equal
to~$1$ when the variables associated to the opened padlocks are set
to~$1$. We will say that padlocks are \emph{knotted} if there is a
sequential wrapping of a wire around the rings of its padlocks.

On the one hand, we see that the equation $aba^{-1}b^{-1}$ represents a
generic OR gate: for the two padlock case, if one variable is set
to~$1$, then either $aba^{-1}b^{-1}=bb^{-1}=1$ or
$aba^{-1}b^{-1}=aa^{-1}=1$. This also generalizes to creating the OR of
any \emph{independent} subsystems: if~$X$ and~$Y$ are two equations for two knotted
systems with \emph{distinct padlock sets}, then~$XYX^{-1}Y^{-1}$ is the
equation of the OR of these two systems.

If the equations are not independent, then some cancellations can
occur. Consider for instance the formula~$a\vee{a}$; then
$aaa^{-1}a^{-1}$ is always~$1$ even if~$a$ is not open.
Now, in order to prevent such cancellations, it is possible to
surround a set of equations by an independent padlock and its inverse. Then no
cancellation can happen. Even better, one can use a simple ring (this
is a padlock that nobody can open) and wrap around it one way before
the equation and the other way after the equation. For that simple
ring, denote by \anneau{} one clockwise wrapping of the wire
around it (and by~$\anneau^{-1}$ an anticlockwise
wrapping). Note that the latch of the door, if any, could be used as this ring too.
In any case, for instance,
$X{\anneau}Y{\anneau}^{-1}X^{-1}{\anneau}Y^{-1}{\anneau}^{-1}$ then represents a
generic OR gate where the subsystems need not be independent. Indeed, if,
and only if, any of~$X$ or~$Y$ is~$1$ then everything collapses.

On the other hand, to represent an AND gate between two independent padlock
systems, then simply multiplying both equations suffices, in any order
and with any inverse (i.e., independence ensures that~$XY$,~$YX$,
$X^{-1}Y^{-1}$, $Y^{-1}X^{-1}$, $XY^{-1}$, etc. all represent the
conjunction). Similarly, one can enclose dependent subsets of padlocks
with the simple ring \anneau.

For instance some access structures of the previous subsections can also
be realized this way with one padlock per literal:
\begin{itemize}
\item $(A\wedge{}B)\vee(C\wedge{}D)$ can be represented by $X=abcdb^{-1}a^{-1}d^{-1}c^{-1}$;
\item $(A\wedge{}B)\vee(C\wedge{}D)\vee(B\wedge{}C)$ by
  $X{\anneau}bc{\anneau^{-1}}X^{-1}{\anneau}c^{-1}d^{-1}\anneau^{-1}$.
\end{itemize}

There is a nice linear setup, for~$(n{-}1)$-out-of-$n$ threshold
systems, as shown in \cref{lem:nmoinsun}.
\begin{lemma}\label{lem:nmoinsun}
The knotted padlock system, setup with with~$n$ padlocks and a
wire, and wrapped~$2n$ times, following the presentation
$x_1x_2\ldots{x_n}x_1^{-1}x_2^{-1}\ldots{x_n^{-1}}$, is a
$(n{-}1)$-out-of-$n$ threshold padlock system.
\end{lemma}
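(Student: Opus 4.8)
The plan is to work in the free group $F_n$ on generators $x_1,\ldots,x_n$, following the correspondence already established in this subsection: the wrapping is the word $w = x_1 x_2 \cdots x_n x_1^{-1} x_2^{-1} \cdots x_n^{-1}$, opening padlock $i$ amounts to substituting $x_i = 1$ (the neutral element seen multiplicatively), and the door opens precisely when the resulting word reduces to the identity. I would therefore fix an arbitrary set $S \subseteq \{1,\ldots,n\}$ of opened padlocks and study the image of $w$ under the substitution sending $x_i \mapsto 1$ for every $i \in S$.

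The key computation is that this substitution simply deletes the letters indexed by $S$ from both halves of $w$. Writing $\{i_1 < i_2 < \cdots < i_m\}$ for the indices of the \emph{closed} padlocks, so that $m = n - |S|$, the word becomes $x_{i_1} x_{i_2} \cdots x_{i_m} x_{i_1}^{-1} x_{i_2}^{-1} \cdots x_{i_m}^{-1}$, since the surviving generators keep their original left-to-right order within each half. At this point the whole statement reduces to deciding when this word is trivial in $F_n$.

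I would then finish by inspection of three cases according to $m$. If $m = 0$ the word is empty and if $m = 1$ it is $x_{i_1} x_{i_1}^{-1} = 1$; in both cases the door opens, so opening all $n$ or all but one of the padlocks suffices. If $m \geq 2$, I claim the displayed word is already reduced: within each half consecutive letters carry distinct indices, and at the junction between the halves one meets $x_{i_m} x_{i_1}^{-1}$ with $i_m \neq i_1$, so no cancellation occurs anywhere. Hence the word has reduced length $2m \geq 4$ and is nontrivial, so the door stays shut. This shows that the door opens if and only if at most one padlock remains closed, i.e., if and only if at least $n-1$ padlocks are opened, which is exactly the $(n-1)$-out-of-$n$ threshold.

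The main obstacle is precisely this noncancellation argument at the junction of the two halves when $m \geq 2$; everything else is bookkeeping about how the substitution acts on $w$. The point requiring care is that the argument relies on the closed indices being pairwise distinct, which guarantees $i_m \neq i_1$; this is automatic for the independent presentation used here, but it is exactly the feature that fails for dependent presentations such as the $a \vee a$ example discussed earlier, so I would flag it explicitly.
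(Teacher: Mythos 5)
Your proposal is correct and follows the same route as the paper: substitute $1$ for the opened variables and examine the surviving word $x_{i_1}\cdots x_{i_m}x_{i_1}^{-1}\cdots x_{i_m}^{-1}$, which collapses for $m\leq 1$ and is nontrivial for $m\geq 2$. The only difference is one of rigor — the paper merely points at the residual pattern $x_ix_jx_i^{-1}x_j^{-1}$ with $i\neq j$, whereas you verify explicitly that the whole surviving word is reduced in the free group (no cancellation within each half or at the junction), which is a welcome tightening but not a different argument.
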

\begin{proof}
Set any subset of size~$n-1$ of the variables to~$1$, there remains
$x_jx_j^{-1}$.
For any subset of size~$n-2$ or less, there would remain at least
$x_ix_jx_i^{-1}x_j^{-1}$ with~$i\neq{j}$ and the door is not freed.
\end{proof}

This setup actually is optimal as shown in \cref{lem:knottedlower}

\begin{lemma}\label{lem:knottedlower} Let $k\geq{1}$ and $n\geq{k+1}$,
A knotted padlock $k$-out-of-$n$ threshold system, setup with with~$1$
padlock per participant and a wire, requires an even number of
wrappings, and at least $2n$ of them.
\end{lemma}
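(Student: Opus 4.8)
The plan is to model the wire as a word in a free group and to translate the threshold condition into purely algebraic statements about that word. Concretely, following the discussion preceding the lemma, I would assign to each of the $n$ padlocks a generator $x_i$, read a clockwise wrap as the letter $x_i$ and an anticlockwise wrap as $x_i^{-1}$, and record the whole wrapping of the wire as a word $W$ in the free group on $x_1,\dots,x_n$. With one padlock per participant, opening the padlocks of a coalition $S$ is exactly setting $\{x_i : i\in S\}$ to the identity, i.e.\ applying the retraction onto the free group generated by the remaining variables; the door is freed if and only if the image of $W$ is the identity. The number of wrappings is then precisely the length $|W|$ (the number of letters counted with multiplicity), so the two claims become: $|W|$ is even, and $|W|\ge 2n$.

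For the parity claim, the key observation is to open every padlock but one. Fix an index $j$ and take the coalition $S=[n]\setminus\{j\}$, which has size $n-1\ge k$ and therefore must free the wire. After setting every $x_i$ with $i\ne j$ to $1$, the word $W$ lives in the rank-one free group $\langle x_j\rangle\cong\mathbb{Z}$, where it equals $x_j^{a_j}$ with $a_j$ the exponent sum of $x_j$ in $W$. Freeing the wire forces $a_j=0$, and since $j$ was arbitrary, every generator occurs exactly as often with a positive as with a negative sign. Hence $|W|=\sum_i (p_i+q_i)=\sum_i 2p_i$, where $p_i=q_i$ count the positive and negative occurrences of $x_i$; this is even.

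For the lower bound $|W|\ge 2n$, I would next show that each generator must actually appear. Suppose some $x_i$ is absent from $W$. Because $1\le k\le n$, I can pick a coalition $S\subseteq[n]\setminus\{i\}$ of size $k-1$; opening $S$ leaves the wire locked (size $<k$), whereas opening $S\cup\{i\}$ (size $k$) must free it. But $x_i$ not occurring in $W$ means that adding $i$ to the opened set changes nothing, so both substitutions give the same image of $W$, a contradiction. Thus $p_i+q_i\ge 1$ for every $i$; combined with $p_i=q_i$ from the parity step this yields $p_i=q_i\ge 1$, i.e.\ $p_i+q_i\ge 2$. Summing over the $n$ generators gives $|W|=\sum_i(p_i+q_i)\ge 2n$.

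I expect the only real subtlety to be the first paragraph: justifying that a faithful formalisation of the \emph{knotted} system is a single word in the free group, that the ``number of wrappings'' is exactly its letter length, and that ``opening'' really is the retraction killing a generator, so that no cancellation outside the free-group reduction can free the wire. Once this dictionary is fixed, both halves are short: parity follows from opening $n-1$ padlocks, and the bound follows from the fact that, with all exponent sums zero, every participant who matters must contribute at least one clockwise and one anticlockwise wrap.
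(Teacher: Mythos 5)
Your proof is correct and follows essentially the same route as the paper's: show that for each padlock $i$ there is an opening coalition avoiding $i$, so the wrappings around $i$ must cancel in pairs (giving $a_i$ even), and that no padlock can be unwrapped (giving $a_i\geq 2$), whence the total is even and at least $2n$. Your version merely formalises the paper's informal argument via the free-group/exponent-sum dictionary and spells out the "participant $i$ is useless" contradiction with an explicit coalition of size $k-1$; no gap.
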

\begin{proof}
Let $a_i$ be the number of wrappings around padlock $i$.
Suppose that $k$ participants not including $i$ open their
padlock.
This is possible since $n>k$.
Then the system must be freed. Therefore, any clockwise
wrapping around $i$ must be accompanied by an anticlockwise one. This
shows that $a_i=2\alpha_i$ and that the total number of wrappings is
even.
Now suppose that there are no wrappings around padlock $i$. Then the
participant $i$ is useless in opening the system. This contradicts
the notion of a threshold system. Finally, we have that $a_i\neq{0}$
and thus that $a_i\geq{2}$. The total number of wrappings is thus
larger than $2n$.
\end{proof}

By
\cref{lem:nmoinsun,lem:knottedlower} we have an optimal linear
knotted system for $(n{-}1)$-out-of-$n$ threshold padlock systems.
But, unfortunately, we have no simple candidate for other thresholds.
Generic threshold system can be implemented with
simple gates, but then they must use an exponential number of
them~\cite{Smolensky:1987:lowcircuits}. So this method of knotting the
padlocks might not be directly practical.
For instance, an exhaustive search of the $\sum_{j=3}^4
6^{2j}=1\,726\,272$ formulas with $3$ variables and their $3$ inverses
(since the number of terms must be even and larger than $2*3$ by
\cref{lem:knottedlower}), showed that no formula exists for a
$1$-out-of-$3$ threshold system with strictly less than $10$
terms. The smallest one is thus a permutation of
$abcb^{-1}c^{-1}a^{-1}cbc^{-1}b^{-1}$, with $10$ wrappings.

For more generic thresholds we were only able to devise a solution with an exponential
number of wrappings, as shown in \cref{alg:knotted}, again loosing
practicality for most of these knotted systems.

\begin{algorithm}[htb]
\caption{Knotted padlock threshold system}\label{alg:knotted}
\begin{algorithmic}[1]
\Require $k\geq{1}$, a wire and $n\geq{k}$ padlocks.
\Ensure A $k$-out-of-$n$ threshold padlock system with $n$ knotted
padlocks.
\If{$k==n$}
\State\label{lin:noon}\Return the wiring of all the padlocks together, with
presentation $x_1x_2\ldots{x_{n-1}}\ldots{x_n}$.
\EndIf
\If{$k==n-1$}
\State\label{lin:nmuoon}\Return the wiring of all the padlocks together, with
presentation $x_1x_2\ldots{x_n}x_1^{-1}x_2^{-1}\ldots{x_n^{-1}}$.
\EndIf
\If{$k==1$}
\State Recursively compute a presentation $X$ for a
$1$-out-of-$(n{-}1)$ system with the last $n-1$ padlocks;
\State\label{lin:1oon}\Return the wiring of all the padlocks together, with
presentation $x_1Xx_1^{-1}X^{-1}$.%
\EndIf
\State Recursively compute a presentation $X$ for a
$(k{-}1)$-out-of-$(n{-}1)$ system with the last $n-1$ padlocks;
\State Recursively compute a presentation $Y$ for a
$k$-out-of-$(n{-}1)$ system with the last $n-1$ padlocks;
\State\Return the wiring of all the padlocks together, with
presentation $x_1X{\anneau}Y{\anneau}^{-1}X^{-1}x_1^{-1}{\anneau}Y^{-1}{\anneau}^{-1}$.
\end{algorithmic}
\end{algorithm}

\begin{theorem} Let $k\geq{1}$ and $n\geq{2}$,
\cref{alg:knotted} is correct and requires a number of wrappings
$W(k,n)$ that satisfies:
\begin{itemize}
\item $W(1,n)=\frac{3}{2}2^n-2$;
\item $W(n-1,n)=2n$;
\item $W(n,n)=n$;
\item For $k\in[2..(n-2)]$, $W(k,n)\geq{\frac{3}{2}2^{n}+6}$.
\end{itemize}
\end{theorem}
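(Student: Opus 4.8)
The plan is to prove the theorem by establishing correctness and the wrapping counts separately, with correctness treated first as an induction on $n$ over the recursive structure of \cref{alg:knotted}, and the count formulas following as a second, largely mechanical induction driven by the same recursion.

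\textbf{Correctness.} First I would handle the base/special cases. The case $k=n$ is immediate: the presentation $x_1\cdots x_n$ collapses to $1$ precisely when all $n$ padlocks are opened, giving an $n$-out-of-$n$ system. The case $k=n-1$ is exactly \cref{lem:nmoinsun}, which I may invoke directly. For $k=1$, I would argue that $x_1 X x_1^{-1} X^{-1}$ is the generic OR (using the enclosing-by-an-independent-padlock idea already discussed in the knotted-padlocks subsection, here with $x_1$ playing the role of the independent wrapper): opening padlock~$1$ sets $x_1=1$ and collapses the word to $X X^{-1}=1$, while opening any single padlock among the last $n-1$ collapses $X$ (by the inductive $1$-out-of-$(n-1)$ hypothesis) and then the whole word cancels; conversely if no padlock is opened nothing collapses. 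The crux is the main recursive step. Here I would set up the induction hypothesis that $X$ correctly realizes $(k-1)$-out-of-$(n-1)$ and $Y$ realizes $k$-out-of-$(n-1)$ on the last $n-1$ padlocks, and then analyze the word $x_1 X \anneau Y \anneau^{-1} X^{-1} x_1^{-1} \anneau Y^{-1} \anneau^{-1}$ in two regimes. If padlock~$1$ is among those opened, then $x_1=1$ and the surviving group of size $\geq k-1$ among the last $n-1$ makes $X\to 1$ (by the $(k-1)$-out-of-$(n-1)$ hypothesis), after which the remaining $\anneau Y \anneau^{-1}\cdot \anneau Y^{-1}\anneau^{-1}=\anneau Y Y^{-1}\anneau^{-1}=1$ regardless of $Y$; so any $k$ openings including padlock~$1$ suffice. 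If padlock~$1$ is \emph{not} opened, then I would show collapse happens exactly when the $k$ openings among the last $n-1$ make $Y\to 1$ (the $k$-out-of-$(n-1)$ hypothesis), which cancels the two $\anneau Y \anneau^{-1}$-type factors through the ring, leaving $x_1 X X^{-1} x_1^{-1}=1$; and that with only $k-1$ or fewer openings among the last $n-1$ (and padlock~$1$ closed) neither $X$ nor $Y$ collapses, so the ring-separated blocks cannot cancel.

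\textbf{Wrapping counts.} The recursion for the total length is read directly off \cref{alg:knotted}: writing $|w|$ for word length, the general step gives $W(k,n)=2 + |X| + |Y| + (\text{ring wrappings}) + (\text{inverse copies})$. Counting the $\anneau^{\pm1}$ symbols (four of them) and the fact that $X,X^{-1},Y,Y^{-1}$ each appear once plus the two $x_1^{\pm1}$, I would extract the recurrence $W(k,n)=W(k-1,n-1)+W(k,n-1)+6$ for $k\in[2,\,n-2]$, with the boundary values $W(1,n)=\frac32 2^n-2$, $W(n-1,n)=2n$, and $W(n,n)=n$ supplied by the special branches. The formula $W(1,n)=\frac32 2^n-2$ itself follows from the $k=1$ recurrence $W(1,n)=2|X|+2=2W(1,n-1)+2$ with base $W(1,2)=4$ (the word $x_1 x_2 x_1^{-1} x_2^{-1}$), solved by unrolling the linear recurrence. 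For the final bound on $k\in[2,n-2]$, the plan is to prove $W(k,n)\geq \frac32 2^n+6$ by induction on $n$: the recurrence $W(k,n)=W(k-1,n-1)+W(k,n-1)+6$ combines two terms each of which is, by induction (or by the $k=1$ base formula when one index slides to~$1$), at least on the order of $\frac32 2^{n-1}$, so their sum dominates $\frac32 2^n$ and the $+6$ survives; the delicate part is handling the boundary of the range, i.e.\ when a recursive call lands on $k=1$ or $k=n-1$ where the explicit (smaller, $2n$-type) values apply, and verifying the inequality still closes there.

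\textbf{Main obstacle.} I expect the genuinely hard step to be the correctness of the main recursive branch, specifically proving the \emph{converse} direction: that fewer than $k$ openings (with padlock~$1$ closed) do \emph{not} free the wire. This requires arguing that the ring symbol $\anneau$ genuinely prevents spurious cancellations between the $X$- and $Y$-blocks even when $X$ and $Y$ share padlocks (they act on overlapping sets of the last $n-1$ variables), so that no collapse can occur unless one of the designated sub-blocks collapses on its own. Making this rigorous means relying carefully on the ring-separation argument from \cref{ssec:knotted} — that wrapping an independent uncancellable symbol around a block isolates it in the free-group word — rather than on any accidental algebraic coincidence; the rest of the proof is then bookkeeping on the recurrence.
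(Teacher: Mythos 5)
Your correctness argument follows essentially the same case analysis as the paper's proof and is sound in outline, though note one small slip: in the regime where padlock~$1$ stays closed and at most $k-1$ of the last $n-1$ padlocks are opened, it is \emph{not} true that ``neither $X$ nor $Y$ collapses'' --- $X$ may well collapse, since by induction it needs only $k-1$ openings, and you must then argue separately that the residual word $x_1\anneau Y\anneau^{-1}x_1^{-1}\anneau Y^{-1}\anneau^{-1}$, with $x_1$ and $Y$ non-vanishing, does not reduce to~$1$. The paper's proof makes exactly this three-way split (nothing vanishes; only $X$ vanishes; and $x_1$ and $X$ cannot vanish simultaneously because that would already account for $k$ openings).

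The genuine gap is in the wrapping count. In the word $x_1X\anneau Y\anneau^{-1}X^{-1}x_1^{-1}\anneau Y^{-1}\anneau^{-1}$ the sub-words $X$ and $X^{-1}$ \emph{each} contribute $W(k-1,n-1)$ wrappings, and likewise $Y$ and $Y^{-1}$ each contribute $W(k,n-1)$; the correct recurrence is therefore $W(k,n)=2\bigl(3+W(k-1,n-1)+W(k,n-1)\bigr)$, not $W(k,n)=W(k-1,n-1)+W(k,n-1)+6$ as you state (you correctly kept the doubling in the $k=1$ branch, $W(1,n)=2W(1,n-1)+2$, but dropped it in the generic branch). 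This is not cosmetic: from your recurrence the claimed lower bound is simply false, e.g.\ $W(2,4)=W(1,3)+W(2,3)+6=10+6+6=22<\frac{3}{2}2^4+6=30$. With the doubled recurrence the bound does hold ($W(2,4)=2(3+10+6)=38$); the paper proves it by first establishing the auxiliary estimate $W(k,n)\geq 2^{n-k}$ by induction and then unrolling the recurrence in $k$ down to the $W(1,\cdot)$ boundary to extract the leading $\frac{3}{2}2^{n}$ term. A direct induction on $n$ as you propose, with care at the $k=1$ and $k=n-1$ boundaries, can also be made to work, but only once the factor of $2$ is restored.
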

\begin{proof}
For the correctness, we look at the cases.
If $k=n$ then this is a $n$-out-of-$n$ system.
All padlocks are wired, one after the other a single time. Therefore
no simplification can occur by opening padlocks. This means that all
padlocks must be opened to free the system and that no strict subset of
owners can open the system.

The $k=n-1$ case is settled by \cref{lem:nmoinsun}.

If $k=1$. Then we proceed by induction. We have seen that a
$1$-out-of-$2$ system is indeed represented by a formula
$aba^{-1}b^{-1}$. Now suppose that we have a presentation $X$ valid
for a $1$-out-of-$(n-1)$ system with $n-1$ padlocks. Then an additional
participant, numbered $1$, uses a new padlock and the overall
presentation is $E=x_1Xx_1^{-1}X^{-1}$. If $x_1$ is opened then
$E=XX^{-1}$ cancels out. If any of $x_2,\ldots,x_{n}$ is opened then
$X$ cancels and $E=x_1x_1^{-1}$ also cancels out. Therefore any
participant alone can open the system. Conversely, the system does not
collapse: by induction, first, neither $X$ nor $X^{-1}$ is $1$ if no
padlock is opened. Second $x_1$ and $X$ are using independent sets of
padlocks so no cancellation can occur between $x_1X$, $Xx_1^{-1}$, nor
$x_1^{-1}X^{-1}$.

Finally the generic case, with now
$E=x_1X{\anneau}Y{\anneau}^{-1}X^{-1}x_1^{-1}{\anneau}Y^{-1}{\anneau}^{-1}$,
is also handled by induction.
$k$ participants are either $x_1$ and $k-1$ others or $k$ other than
$x_1$.
In the first case, if $x_1$ and, by induction, $X$ cancel out, then
$E={\anneau}Y{\anneau}^{-1}{\anneau}Y^{-1}{\anneau}^{-1}=1$.
In the second case, also by induction, $Y$ cancels out and
$E=x_1X{\anneau}{\anneau}^{-1}X^{-1}x_1^{-1}{\anneau}{\anneau}^{-1}=1$.
Therefore any $k$ or more participants can open the system.
Conversely, suppose only at most $k-1$ padlocks are opened.
Then by induction, $X$ can vanish, but not $Y$.
Further, $X$ can vanish only with the opening of at least $k-1$
padlocks. Therefore $x_1$ and $X$ cannot vanish simultaneously.
Thus either nothing vanishes or $E$ has one of two forms,
$E=X{\anneau}Y{\anneau}^{-1}X^{-1}{\anneau}Y^{-1}{\anneau}^{-1}$, with
non-vanishing $X$ and $Y$,
or
$E=x_1{\anneau}Y{\anneau}^{-1}x_1^{-1}{\anneau}Y^{-1}{\anneau}^{-1}$,
with non-vanishing $x_1$ and $Y$. In both cases, the system is not
opened.
We have proven that the system created by \cref{alg:knotted} is a
$k$-out-of-$n$ threshold system.

Now for the complexity bound with $n\geq{2}$.
Let $W(k,n)$ be the number of wrappings for a $k$-out-of-$n$ system
created by \cref{alg:knotted}.
We have that $W(2,2)=2$, the AND gate, and $W(1,2)=4$, the OR gate.
Next, for $k=n$, by the construction of Line~\ref{lin:noon},
we have that $W(n,n)=n$.
For $k=n-1$, by the construction of Line~\ref{lin:nmuoon},
we have that $W(n-1,n)=2n$.
For $k=1$, by the construction of Line~\ref{lin:1oon},
we have that $W(1,n)=2(1+W(1,n-1))$. This is
$W(1,n)=(\sum_{i=1}^{n-2} 2^i)+(2^{n-2}W(1,2))=\frac{3}{2}2^n-2$.

Otherwise, we have that $n\geq{4}$ and $k\in[2..(n-2)]$.
There, we first show by induction that $W(k,n)$
satisfies:
\begin{equation}\label{eq:twotoknu}
W(k,n)\geq{2^{n-k}}.
\end{equation}
Indeed, this is true for both
$W(2,2)=2\geq{2^{2-2}}$ and $W(1,2)=4\geq{2^{2-1}}$.
Next, for $k=n$, $W(n,n)=n\geq{1}=2^{n-n}$;
for $k=n-1$, $W(n-1,n)=2n\geq{2}=2^{n-(n-1)}$ and
for $k=1$, $W(1,n)=\frac{3}{2}2^n-2\geq{2^{n-1}}$.
Otherwise, $W(k,n)$ satisfies:
\begin{equation}\label{eq:wkn}
W(k,n)=2(3+W(k-1,n-1)+W(k,n-1)).
\end{equation}
Thus by induction and \cref{eq:wkn}, we can lower out the two
recursive calls, and have that
$W(k,n)\geq{2(3+2^{n-1-(k-1)}+2^{n-1-k})}\geq{2^{n-k}}$ and
\cref{eq:twotoknu} is proven.

Then, second, we refine this analysis by lowering $W(k,n-1)$ in
\cref{eq:wkn} with
\cref{eq:twotoknu}.
This gives $W(k,n)\geq{2(3+W(k-1,n-1))+2\cdot{2^{n-1-k}}}$.
Then we recurse for $W(k-1,n-1)$, to obtain that:
\[\begin{split}
W(k,n)&\geq{2(3+2(3+W(k-2,n-2)+2^{n-2-k}))+2^{n-k}}\\
&=\left(6\sum_{i=0}^{k-2}2^i\right)+2^{k-1}W(1,n-k+1)+2^{n-k}(k-1)\\
&=6(2^{k-1}-1)+2^{k-1}(\frac{3}{2}2^{n-k+1}-2)+2^{n-k}(k-1)\\
&=\frac{3}{2}\cdot{2^{n}}+2^{k+1}-6+2^{n-k}(k-1).
\end{split}\]
This concludes the proof, using the fact that
$\left(2^{k+1}+2^{n-k}(k-1)-6\right)\geq{6}$ for $k\geq{2}$ and $n-2\geq{k}$.
%
\end{proof}

To realize this solution in practice, one could for instance to use a
high security cable seal: once fastened the wire cannot be taken out
of the seizing device, see \cref{fig:cablelock}, left.
Before closing the seal, wrap it around the door latch clockwise;
then install the $k$-out-of-$n$ knotted padlock threshold system on
the wire;
finally wrap the wire around the door latch anticlockwise and then seal
it. This is shown in \cref{fig:cablelock}, right.

\begin{figure}[htbp]
\centering
\hfill
\includegraphics[height=4cm]{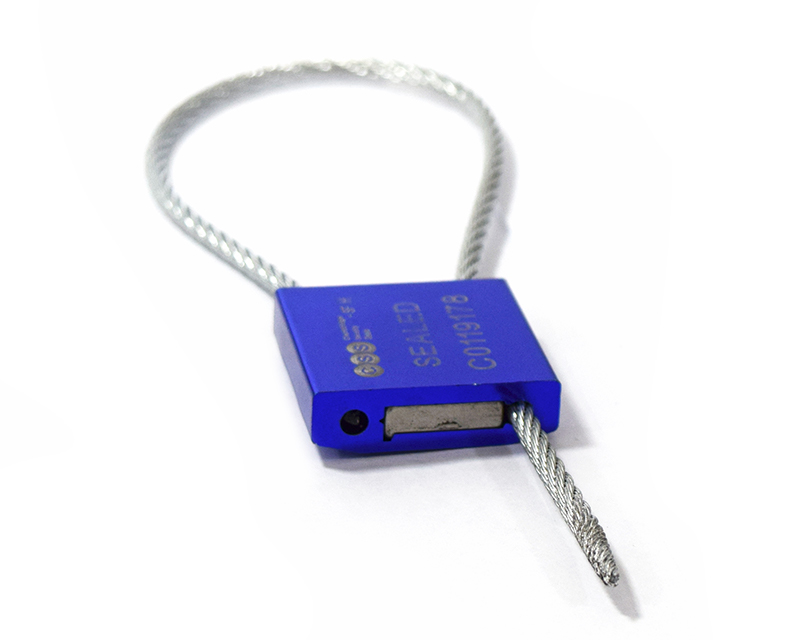}%
\hfill
\includegraphics[height=5cm]{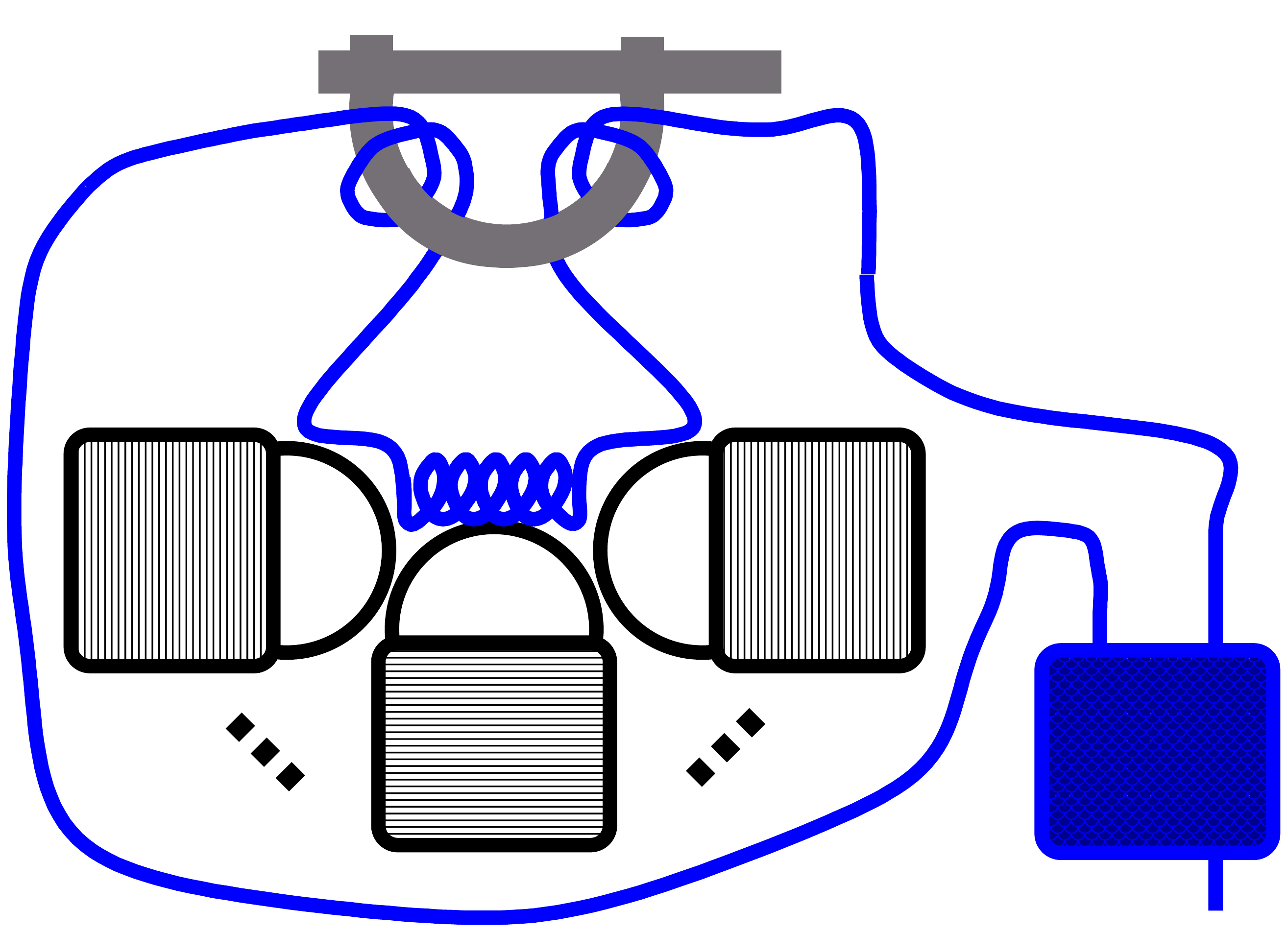}%
\hfill
\caption{Secure cable wire seal (left) and knotted padlock threshold system (right).}\label{fig:cablelock}
\end{figure}

Now, if the latch is smaller than the seizing part and than the
padlocks, the door cannot be opened unless all the locks are removed.
This can happen if at least $k$-out-of-$n$ participants open the
padlocks: then the other ones are freed by construction.

\begin{remark} Unless $k=n$ from the start, \cref{alg:knotted} will
  never encounter the case that the threshold equals the number of
  remaining participants. All the other cases perform exactly as
  many clockwise and anticlockwise wrappings. Therefore it is not
  mandatory to use a sealable cable. Any wire loop with a large enough
  part will do, for instance an already sealed cable. The setup is
  more cumbersome, but it is sufficient to pass a curl inside the
  latch or the ring to simultaneously wrap clockwise and
  anticlockwise.
\end{remark}

Thus, we have another possibility for a $k$-out-of-$n$ physical
threshold system with exactly $n$ padlocks. Unfortunately, we can make
it work only with an exponential number of wrappings. For instance,
\cref{alg:knotted} requires $279\,038$ initial wrappings for the
$6$-out-of-$11$ case. For now, the setup of this solution is therefore
not really practical.

\section{Square Root Bounds for Threshold Systems}\label{sec:sqrt}
We now have tools to deal with larger thresholds. First we give a
necessary condition for systems using less than~$n$ padlocks. Knotted
designs are not needed, but some small access structure arrangements
can help.
For instance, we can show that our physical device is optimal when~$k$
is larger than~$\sqrt{2n}$. For instance, we fully answer Liu's
question about the smallest number of locks needed to implement a
$6$-out-of-$11$ threshold system: this is~$11$ padlocks.
Then, the necessary condition, together with block design theory and
our padlock algebra of \cref{sec:algebra}, enables us to build
padlock systems with strictly less than~$n$ padlocks: for instance systems with
only about~$2.5\sqrt{n}$ padlocks for~$3$-out-of-$n$ thresholds.

\subsection{A Necessary Condition and the Answer to Liu's Problem}
We first begin with a necessary condition, analyzing the set
difference cardinality of their sets of keys.

\begin{restatable}{proposition}{propnec}\label{prop:nec}
$\forall{n}$ and $\forall{k}\geq{3}$, if a $k$-out-of-$n$ threshold
system uses strictly less than~$n$ padlocks, then
apart from participants owning the single key of a given
padlock, the other participants must satisfy:
\begin{enumerate}
\item The cardinality of their~$2$ by~$2$ set difference is bigger or equal to~$k-1$;
\item Each of them owns at least~$k$ distinct keys.
\end{enumerate}
\end{restatable}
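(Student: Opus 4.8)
The plan is to work with an abstract model of any padlock system: index the padlocks by a ground set, give each participant $p$ the subset $S_p$ of keys she owns, write $K_G=\bigcup_{p\in G}S_p$ for the keys available to a group $G$, and encode the door by a monotone Boolean function $f$ on subsets of padlocks, so that $G$ opens the door iff $K_G$ contains a \emph{minimal opening set} (an inclusion-minimal subset on which $f$ is true). The threshold hypothesis then reads: $f(K_G)=1$ exactly when $|G|\ge k$. First I would record two preliminary facts. By \cref{lem:nosubsets} no key set is contained in another; in particular the key of a single-key owner is held by nobody else, which is precisely why such participants are set aside in the statement. Second, since $\ell_{n,n}=n$, the hypothesis $\ell_{k,n}<n$ forces $k<n$, so there is always room to build coalitions of size $k$ or $k-1$ avoiding any one or two prescribed participants.

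The engine of both parts is a \emph{pivotality lemma}: because every $k$-subset wins and every $(k-1)$-subset loses, every $k$-subset is a \emph{minimal} winning coalition. Hence, fixing a $k$-group $G$ and a minimal opening set $M\subseteq K_G$ that it realizes, each member $p\in G$ must own a key of $M$ that no other member of $G$ owns (deleting $p$ leaves a losing $(k-1)$-group, so $M\not\subseteq K_{G\setminus\{p\}}$). These private keys are distinct, so $|M|\ge k$, recovering \cref{lem:ktwo}. For the second claim (each such participant owns at least $k$ keys) I would exploit that a multi-key participant $A$ is pivotal for \emph{every} $(k-1)$-coalition $T$ with $A\notin T$: one has $f(K_T)=0$ while $f(K_T\cup S_A)=1$. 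Choosing $T$ so that $K_T$ covers as many padlocks outside $S_A$ as possible, I would argue that $S_A$ must meet every minimal opening set used to complete these coalitions, and then combine this transversal property with the bound $|M|\ge k$ to force $|S_A|\ge k$.

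For the pairwise bound I would argue by contradiction, assuming two multi-key participants $A,B$ satisfy $|S_A\setminus S_B|\le k-2$. Here the private-key analysis applied to a $k$-group containing both $A$ and $B$ already shows that $A$'s private key lies in $S_A\setminus S_B$ and $B$'s in $S_B\setminus S_A$, so both difference sets are non-empty; the task is to amplify this single guaranteed key into $k-1$ of them. The route I would take is to cover $A$'s contribution to an opening using $B$ together with owners of the at most $k-2$ distinguishing keys in $S_A\setminus S_B$, while simultaneously showing that the shared block $S_A\cap S_B$ carries the bulk of the opening, so as to expose an under-sized group that nonetheless opens the door.

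The hard part, and the place where care is essential, is exactly this last amplification. The naive move of shrinking a winning coalition is \emph{unavailable}: since every $k$-subset is a minimal winning coalition, every member is always pivotal and no winning coalition can be reduced below size $k$ by deleting a redundant participant. Consequently the contradiction cannot be purely local to a single coalition; it must draw on the global scarcity of padlocks, $\ell_{k,n}<n$. I therefore expect the crux to be a counting/packing argument — in the spirit of the Sperner bounds of \cref{cor:sperner} and of Steiner-system necessary conditions — showing that if some pair of multi-key key sets differed in fewer than $k-1$ padlocks, the antichain of $n$ key sets could not be laid out over only $\ell_{k,n}<n$ padlocks while keeping the induced function an exact $k$-threshold. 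Converting the local pivotality into this global obstruction is the step I would spend the most effort on.
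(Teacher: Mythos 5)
There is a genuine gap, and you have in fact flagged it yourself: the amplification from ``one private key per member of a winning coalition'' to the bounds $|S_A\setminus S_B|\ge k-1$ and $|S_A|\ge k$ is left unresolved, and the route you propose for it (a global Sperner/packing count over all $n$ key sets, drawing on $\ell_{k,n}<n$) is not carried out and is not the mechanism that makes the statement true. The paper's argument is purely local and hinges on one observation you never exploit: for the participants in question, \emph{every} key they own is duplicated, i.e.\ owned by at least one other participant. This is exactly what the exclusion of single-key owners buys. Given two such participants $A,B$ with $d=|S_A\setminus S_B|$, each of those $d$ keys is held by some other participant, so there exist $d'\le d$ participants $C_1,\dots,C_{d'}$ with $S_A\setminus S_B\subseteq\bigcup_j S_{C_j}$; hence the group $\{A,B,C_1,\dots,C_{d'}\}$ of size $d'+2$ has the same key union as $\{B,C_1,\dots,C_{d'}\}$ of size $d'+1$. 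If $d'\le k-2$ one completes to a $k$-coalition in which $A$ is redundant, so a $(k-1)$-coalition opens the door --- contradiction, giving $d\ge d'\ge k-1$. The same covering trick applied to all $i$ keys of a single participant $P$ (covered by $i'\le i$ co-owners) gives $|S_P|\ge k$. Your pivotality lemma is consistent with this but does not substitute for it: pivotality only certifies one distinguishing key per coalition member, and no amount of choosing coalitions cleverly turns that into $k-1$ without the co-owner covering step.

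A secondary misdiagnosis follows from the same omission: you assert that the contradiction ``must draw on the global scarcity of padlocks, $\ell_{k,n}<n$.'' It does not. The hypothesis $t<n$ plays no role in the deduction itself; the proof works for the designated class of participants regardless of $t$, and the hypothesis merely ensures the class is the relevant one. Your proposed transversal argument for the second claim ($S_A$ meeting every minimal opening set used to complete $(k-1)$-coalitions) is likewise not developed to the point where $|S_A|\ge k$ follows, whereas the covering argument settles it in two lines.
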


\begin{proof}
Apart from participants owning the single key of a given
padlock, the others own only keys that are duplicated and owned by
several users. In the following we say that these duplicated keys
owned by several users are \emph{shared}, and we identify any
duplicated keys of the same padlock (we also thus say that shared keys
are \emph{reused} when we encounter the duplicate of an already
used key).

We thus consider the subgroup of participants owning only shared keys.
First, let $A, B, C$ be the sets of keys of three users from this subgroup, such
that $(A\setminus{B})\subset{C}$; then $(A\cup{B}\cup{C})=(B\cup{C})$.
More generally, suppose $|A\setminus{B}|=d$. Then these $d$ keys are
reused (as all the keys in $A$ are shared keys). Thus there exist $d' \leq d$ other participants with sets of keys~$C_j$ such that
$(A\setminus{B})\subset{\bigcup_{j=1}^{d'} C_j}$. Let
${\mathcal{C}}={\bigcup_{j=1}^{d'} C_j}$, then
$A\cup{B}\cup{\mathcal{C}}=B\cup{\mathcal{C}}$.
In other words, there is a group of~$d'+2$ users with
the same keys as a group  of~$d'+1$ users.
Therefore~$d'\geq{k-1}$ (and~$d \geq k-1$, as~$d \geq d'$): otherwise complete these~$d'+2$ participants
with~$k-d'-2$ others. Those~$k$ participants can open the door, as well
as the~$k-1$ participants obtained when removing~$A$. This would
contradict the fact that we have a~$k$-threshold system.
So we can restrict the analysis to groups of people having sets of
shared keys, with minimal~$2$ by~$2$ set difference cardinality larger
than~$k-1$.

Second, within such a group of participants owning only shared keys, suppose that one participant~$P$ owns a
number~$i$ of distinct keys strictly lower than the threshold
$k$. Then at least one of his keys cannot be reused. Otherwise there
exists a group of~$i' \leq i$ participants owning the same keys as
these~$i'$ participants plus the initial one~$P$. As~$i'<k$, complete
these~$i'+1$ participants with~$k-i'-1$ others. Those~$k$ participants
can open the door, as well as the~$k-1$ participants obtained when
removing~$P$ from the group. This would contradict the fact that we
have a~$k$-threshold system. %
We have proven: to build a threshold system with strictly less than
$n$ padlocks, apart from participants owning the single key of a given
padlock, the other participants must satisfy that
both,
their~$2$ by~$2$ set difference has more than~$k-1$ keys
and
each of them owns at least~$k$ distinct keys.
\end{proof}

We can now fully answer Liu's question with \cref{thm:2k},
$\ell_{6,11}=11$. The theorem also shows that our system is optimal
for all~$k\geq{\frac{\sqrt{8n+1}-1}{2}}$.
\begin{theorem}\label{thm:2k}
$\ell_{k,n}\geq\min\left\{n;\frac{k(k+1)}{2}\right\}$.
\end{theorem}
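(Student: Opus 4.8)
The plan is to reduce the theorem to a combinatorial statement about the participants' key-sets, using \cref{prop:nec} as the main structural input. Write $t=\ell_{k,n}$ for the number of padlocks in an optimal system. If $t\ge n$ there is nothing to prove, since then $t\ge n\ge\min\{n,k(k+1)/2\}$. The cases $k\le 2$ are already settled by \cref{lem:ktwo,lem:nthree} (there $k(k+1)/2\le 3$), so I may assume $k\ge 3$ and, for the remainder, that the optimal system uses $t<n$ padlocks; the goal then becomes showing $t\ge k(k+1)/2$.

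First I would count how many participants are forced into the ``shared'' regime of \cref{prop:nec}. Call a padlock \emph{unique} if exactly one participant holds its key, and call a participant a \emph{shared participant} if all of her keys are shared (held by at least two people). Writing $u$ and $s$ for the numbers of unique and shared padlocks, so $t=u+s$, each non-shared participant can be injected into a distinct unique padlock, whence $n-m\le u$, where $m$ is the number of shared participants. By \cref{prop:nec} every shared participant owns at least $k$ shared keys, so as soon as $m\ge 1$ we get $s\ge k$; and $m=0$ would force $u\ge n>t$, a contradiction. Now if $m\le k-1$, then $t=u+s\ge(n-m)+k\ge n+1>n$, again contradicting $t<n$. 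Hence there are at least $k$ shared participants.

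The heart of the argument is then the following combinatorial claim, which I would isolate as a lemma: if $A_1,\dots,A_k$ are sets with $|A_i|\ge k$ for every $i$ and $|A_i\setminus A_j|\ge k-1$ for every $i\ne j$ (exactly the two properties guaranteed by \cref{prop:nec}), then $|\bigcup_i A_i|\ge k(k+1)/2$. Applying this to the key-sets of $k$ of the shared participants, whose keys all lie among the $s$ shared padlocks, yields $t\ge s\ge k(k+1)/2$, completing the proof. To prove the claim I would argue greedily: process the sets in order and track $U_j=A_1\cup\dots\cup A_j$. When all sets have size exactly $k$, the condition $|A_i\setminus A_j|\ge k-1$ forces $|A_i\cap A_j|\le 1$, so $A_j$ meets $U_{j-1}$ in at most $j-1$ padlocks and contributes at least $k-(j-1)$ new ones; summing $\sum_{j=1}^{k}(k-j+1)$ gives exactly $k(k+1)/2$.

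The main obstacle is that \cref{prop:nec} only guarantees $|A_i|\ge k$, not equality: a participant holding more than $k$ keys may share more than one padlock with another, so the clean ``intersection at most one'' bound breaks and a naive union bound over-subtracts. I expect the real work to be here, either by a size-reduction step that trims each key-set down to exactly $k$ keys while preserving every pairwise difference $\ge k-1$ (removing only padlocks that lie in every currently tight difference), or by a direct extremal argument showing that enlarging any key-set can only enlarge the union. Once the size-$k$ case is established and this reduction is justified, the theorem follows; in particular $\ell_{6,11}=11$ since $6\cdot 7/2=21>11$, and our device is optimal whenever $k(k+1)/2\ge n$, that is, for $k\ge(\sqrt{8n+1}-1)/2$.
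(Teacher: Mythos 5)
Your reduction to the shared-key participants is sound and matches the paper's opening moves: the cases $t\ge n$ and $k\le 2$ are dispatched identically, and your count showing that fewer than $k$ shared participants forces $t\ge n$ is essentially the paper's observation that $i\le k$ implies $t\ge n-i+k\ge n$. The problem is the combinatorial lemma you then rest the whole argument on. You prove it only when every key-set has size exactly $k$, and you yourself flag the extension to $|A_i|\ge k$ as ``where the real work is'' --- but that work is not done, and neither of your suggested repairs is secured. The trimming step can fail as described: to delete a key from $A_i$ without breaking some tight constraint $|A_i\setminus A_j|=k-1$ you need an element of $A_i$ lying in \emph{every} such $A_j$, and nothing guarantees one exists. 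And the union bound genuinely degrades for large sets: from the pairwise data alone, the Bonferroni estimate $|\bigcup_i A_i|\ge\sum_i|A_i|-\sum_{i<j}|A_i\cap A_j|$ gives $ka-\binom{k}{2}\bigl(a-(k-1)\bigr)$ for sets of common size $a$, which is \emph{decreasing} in $a$ once $k\ge 3$. So the claim ``pairwise differences $\ge k-1$ plus sizes $\ge k$ imply union $\ge k(k+1)/2$'' is an unproven extremal statement, and your proof is incomplete at precisely its central step.

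The paper avoids this entirely by not restricting itself to the pairwise information recorded in \cref{prop:nec}. It re-runs the ``shared keys must be reused'' argument against the \emph{union} of all previously considered participants: if $B_1,\dots,B_h$ and $A$ are key-sets of members of the shared group and $d=|A\setminus\bigcup_{j}B_j|$, those $d$ keys are covered by $d'\le d$ further participants, producing a group of $1+h+d'$ people with the same keys as a group of $h+d'$; the threshold property forces $1+h+d'>k$, hence $d\ge k-h$. Summing $k+(k-1)+\dots+1$ over the first $k$ shared participants yields $k(k+1)/2$ directly, with no extremal set theory needed. If you want to salvage your route, derive this stronger ``difference from the union'' inequality (its proof is the same reuse argument you already invoke for \cref{prop:nec}) instead of trying to squeeze the bound out of the pairwise conditions alone.
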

\begin{proof}
For~$k=2$, \cref{lem:nthree} gives the result.
Now, for~$k\geq{3}$, let~$G$ be the set of players only owning shared keys, let~$i$ be the number of these players and let~$t$ be the
number of padlocks in a~$k$-out-of-$n$ threshold system.
If~$i=0$, then~$n$ players own the single key of a padlock and
$t\geq{n}$.
If~$i=1$, then that player has at least~$k$ new distinct and shared keys by
\cref{prop:nec}. Those shared keys are by definition not among the
singly owned keys and thus~$t=n-1+k\geq{n}$.
More generally, if~$i\leq{k}$, then~$t\geq{n-i+k}\geq{n}$.

Now for~$i>k$, one of the~$i$ players has at least~$k$ distinct, but
shared, keys.
Then the next participant has also all his shared keys not among the
singly used, and at least~$k-1$ keys not shared with the previous
player (otherwise their set difference is not larger than~$k-1$).
More generally, let~$B_1$,~$\ldots$,~$B_h$ and~$A$ be the sets of keys of
$h+1$ distinct members of~$G$, with~$h\leq{k}$.
Let~${\mathcal{B}}=\bigcup_{j=1}^h B_j$ and let
$|A\setminus{\mathcal{B}}|=d$. Those~$d$ keys are shared (as all the
keys in~$A$) and, therefore, there exist~$d' \leq d$ other participants with
sets of keys~$C_j$
such that $(A\setminus{\mathcal{B}})\subset{\bigcup_{j=1}^{d'} C_j}$.
Let~${\mathcal{C}}={\bigcup_{j=1}^{d'} C_j}$, then
$A\cup{\mathcal{B}}\cup{\mathcal{C}}={\mathcal{B}}\cup{\mathcal{C}}$.
But then~$1+h+d'>k$: otherwise a group of~$k$ participants has the same set
of keys as a group of~$k-1$. In other words, we have shown that
$d'>k-1-h$, or more precisely that~$d'\geq{k-h}$ (and thus~$d\geq{k-h}$ as~$d\geq{d'}$).
Therefore, up to the~$k$-th person in group~$G$ (a group of~$i$
participants with~$i>k$), each person must have
at least~$k-h$ keys not in the sets of the~$h$ previous ones.
Since~$i>k$, this is at least $k+k-1+\sum_{h=2}^k (k-h)=k(k+1)/2$ keys.
Then we have that the total number of keys satisfies
$t\geq{n-i+k(k+1)/2}\geq{k(k+1)/2}$.
\end{proof}

\subsection[Packings, Johnson Bound and a 3-threshold Realization for
up to 12 Participants with only 9 Padlocks]{Packings, Johnson Bound
  and a~$3$-threshold Realization for up to~$12$ Participants with
  only~$9$ Padlocks}\label{ssec:nine}
A sufficient condition to satisfy
\cref{prop:nec} for a~$3$-threshold system
with less than~$n$ padlocks is that a given pair of keys is never
given to more than one person.
Indeed, then, two persons never share a pair of keys and thus if they each
own more than two keys, then their set difference is at least~$2=k-1$.

This is thus sufficient for such a system to contain a~$(2,1)-$packing, as defined
thereafter:
\begin{definition}[See e.g.,~\cite{Chee:2013:packings}]
Let~$t$,~$k$, and~$p$ be integers with~$t > k > p\geq{2}$. Let
$\lambda$ be a positive integer.
A {\em~$(p,\lambda)-$packing} of order~$t$, and blocksize~$k$ is a set V of
$t$ elements, and a collection B of~$k$-element subsets (blocks) of V,
so that every~$p$-subset of V appears in at most~$\lambda$ blocks.
\end{definition}

With this, we have Johnson's bound~\cite{Johnson:1962:bound}, that
states that a maximal packing has a number of blocks upper bounded by:
\begin{equation}\label{eq:johnson}
\left\lfloor\frac{t}{k}\left\lfloor\frac{t-1}{k-1}\right\rfloor\right\rfloor.
\end{equation}

\Cref{eq:johnson} then suggests that systems with
$t={\mathcal{O}}(k\sqrt{n})$ padlocks might be possible.

Unfortunately, \cref{prop:nec} is probably not sufficient itself:
it might be possible to fulfill its conditions while still having some set
of players of size strictly lower than~$k$ having the same set of keys
as some set of players of size~$k$. 
However, we can at least prove that for~$k=3$ we can
always use Steiner triad systems to build~$3$-threshold systems.
A Steiner triad system is a $(2,1)$-packing with blocksize~$3$.
In other words, it is a pair of sets such that every pair of
elements of the first set appears together in a unique {\em triad}
(or a triangle, or a triplet) of the second one~\cite{Bose:1939:bibd}.
As a consequence, it is possible to build a~$3$-threshold padlock system
with only~${\mathcal O}(\sqrt{n})$ padlocks:
\begin{enumerate}
\item in a Steiner triad system for a set of keys, no pair of keys is shared by two
  triads; therefore giving a triad of keys to each player will
  satisfy the necessary~\cref{prop:nec};
\item then, the following~\cref{prop:steiner} shows that for the
  particular case of $k=3$ this is also sufficient;
\item finally, with Johnson's bound, a Steiner triad system with
$t={\mathcal O}(\sqrt{n})$ will have sufficiently many triads
to give one to each of the $n$ players.
\end{enumerate}

\begin{restatable}{proposition}{propsteiner}\label{prop:steiner}
Any Steiner triad system gives rise to a~$3$-threshold system.
\end{restatable}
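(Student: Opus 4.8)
The plan is to show that distributing the triads of a Steiner triad system as key-sets (one triad of keys per player) yields a valid $3$-out-of-$n$ threshold padlock system. Concretely, I place one padlock per element of the Steiner system, and each player receives the three keys corresponding to the elements of one triad. I must verify two things: that any three players can open the door, and that no two players can. The upper direction (feasibility) follows by choosing an appropriate device; the delicate part is the lower direction, i.e. ruling out any opening by a subthreshold group.

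**Why two players cannot open.** First I would invoke the defining property of a Steiner triad system, which is a $(2,1)$-packing with blocksize $3$: every pair of elements lies in exactly one triad. Hence two distinct players hold two distinct triads that share at most one common key (if they shared two keys, those two keys would lie in two triads, contradicting uniqueness). Therefore the union of any two players' key-sets has at least $3+3-1 = 5$ distinct keys, and their $2$-by-$2$ set difference has cardinality at least $2 = k-1$, exactly as demanded by \cref{prop:nec}. So the packing condition guarantees the \emph{necessary} condition; the content of this proposition is that for $k=3$ this necessary condition is also \emph{sufficient}, which is what I must establish.

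**The core argument: sufficiency.** The key step is to design the physical arrangement so that two players, holding between them at most $5$ distinct keys, cannot open it, while any three players, holding enough distinct keys, can. I would realize the system on a $\smod$-type counting device from \cref{sec:our}: set up a device that opens precisely when a sufficient number $r$ of the $t$ padlocks have been opened, choosing $r$ larger than the maximum number of padlocks two players can jointly open (which is $5$) but no larger than the minimum number three players can jointly open. The main obstacle is proving that \emph{every} group of three players opens at least $r$ padlocks while \emph{every} pair opens at most $r-1$. For the pair bound this is the ``at most $5$'' count above. For the triple bound I would need a lower bound on the size of the union of three triads in a Steiner system; using the pairwise-intersection bound (each pair of triads meets in at most one point) and inclusion–exclusion, three triads cover at least $3+3+3 - 3 = 6$ distinct elements, so three players always open at least $6$ padlocks. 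Taking $r = 6$ then separates the two cases cleanly, and the counting device opens for every triple but for no pair.

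**The anticipated difficulty.** The hard part will be handling the ``shared key'' bookkeeping carefully: I must confirm that the inclusion–exclusion lower bound of $6$ for three triads genuinely holds for \emph{every} triple in \emph{any} Steiner triad system, including degenerate-looking configurations where triads pairwise overlap, and that no three players collapse to fewer than $6$ opened padlocks. I would also double-check the boundary: that a threshold-$6$-opening device faithfully implements a $3$-out-of-$n$ \emph{player} threshold here only because the key distribution forces the $5$-versus-$6$ gap. Once this gap is secured, the remaining claim — that Johnson's bound \eqref{eq:johnson} supplies enough triads and that $t = \BO{\sqrt{n}}$ suffices — is a routine consequence of the counting already set out before the statement, so I would close by citing that chain of reasoning rather than recomputing it.
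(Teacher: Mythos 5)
There is a genuine gap, and it sits exactly at the point you flagged as ``the anticipated difficulty.'' Your pair bound is inverted: the Steiner property (every pair of elements in exactly one triad) gives that two distinct triads share \emph{at most} one element, hence their union has \emph{at least} $3+3-1=5$ distinct keys --- not \emph{at most} $5$. Two players whose triads are disjoint hold exactly $6$ distinct keys (e.g.\ the triads $\langle 1,2,3\rangle$ and $\langle 4,5,6\rangle$ in \cref{tab:nine}), so the $5$-versus-$6$ gap on which your single counting device with threshold $r=6$ relies does not exist: such a pair would open your $6$-out-of-$t$ device. This is precisely why the paper does not use a single threshold device. Both pairs and triples of triads can attain exactly $6$ distinct keys, and the crux of the paper's proof is a finer statement: the particular $6$-element key sets realized by a \emph{triple} of triads are never equal to any $6$-element set realized by a \emph{pair} of triads. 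The argument normalizes a $6$-key triple as $(a,b,c);(a,d,e);(b,d,f)$ and shows that a pair of triads covering $\{a,\dots,f\}$ would force $(a,b,c)$ and $(d,e,f)$ both to be blocks, contradicting uniqueness of the block through the pair $(d,e)$ since $(a,d,e)$ is already a block.

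Consequently the realization also has to be richer than a counting device: \cref{alg:threeoutofn} uses a $7$-out-of-$t$ device (catching all triples with $\geq 7$ distinct keys, which no pair can match) \emph{disjoined}, via the DNF machinery of \cref{alg:disjunctive}, with one conjunctive clause for each of the exceptional triples having exactly $6$ distinct keys. Your inclusion--exclusion lower bound of $6$ keys for any triple is correct and is used by the paper, and your observation that the packing property yields the necessary condition of \cref{prop:nec} is also fine; but without the ``no pair equals a $6$-key triple'' lemma and the accompanying two-tier device, the construction as you describe it is broken rather than merely incomplete.
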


\begin{proof}
By construction, a Steiner triad system satisfies the
necessary condition of \cref{prop:nec}.  Second, in order
to use it as a~$3$-threshold system, we need to differentiate triples
of triads from pairs of triads (as a two participants should not be able to open the door, but three participants should).
On the one hand, all triples that have~$7$, or more,
distinct values all together, cannot be equated by pairs of triads.
On the other hand, by the condition
on pairs of elements being uniquely found in a single triad, triples of triads
have at least~$6$ distinct values overall.  So the only remaining case
is to prove that the~$6$ distinct values of triples of triads with
only~$6$ distinct values, in any construction, cannot be found in
pairs of triads of the system.

To have only~$6$ distinct values, any two of the triple of triads
must share one value, and the third one must share a value with each
of the two others.
W.l.o.g., this is triads~$(a,b,c);(a,d,e);(b,d,f)$, with distinct
values~$a,b,c,d,e,f$. Now suppose that these~$6$ values are contained
in a pair of triads. Then, among~$a,b,c$, at least two of them must
be in one of the pair. But by the unicity of triads containing a
given pair this means that~$(a,b,c)$ {\em is} one of the pairs. The other
pair must now be~$(d,e,f)$. But the triad~$(a,d,e)$ is in the
system so the pair~$(d,e)$ is shared by two different blocks. This is
a contradiction and no pair of triads can share the~$6$ distinct
values of a triple.
\end{proof}
Finally, by setting up a minimal Steiner system for any number of
players, for instance using a Bose construction~\cite{Bose:1939:bibd},
we have the following \cref{alg:threeoutofn} to setup a~$3$-out-of-$n$
system. This provides an upper bound of~${\mathcal O}(\sqrt{n})$
for the number of padlocks for such a system. How to open such a
system is then described in~\cref{alg:freethreeoutofn}.

\begin{algorithm}[htbp]
\caption{Bose Three-out-of-$n$ threshold system with shared
  keys}\label{alg:threeoutofn}
\begin{algorithmic}[1]
\Require $n\geq{2}$.
\Ensure A $3$-out-of-$n$ threshold system with
$t=6\left\lceil\frac{\sqrt{24n+1}-5}{12}\right\rceil+3$
padlocks.
\State Let
$v=\displaystyle\left\lceil\frac{\sqrt{24n+1}-5}{12}\right\rceil$, $m=2v+1$ and $t=6v+3$;
\State Setup a $7$-out-of-$t$ threshold system;
\LeftComment{Bose construction~\cite{Bose:1939:bibd}}
\For{$x=1..m$}\label{lin:startbose}
\State Give the triad of keys $<x,x+m,x+2m$ to next player;
\EndFor
\For{$x=1..m$}
\For{$y=1..m$}
\For{$k=0..2$}
\State $a=x+km$;
\State $b=y+km$;
\State $c=((x+y)2^{-1}-1)\mod{m}$;
\State $d=((k+1)\mod{3})$;
\State Give the triad of keys $<a,b,1+c+dm>$ to next player;
\EndFor
\EndFor
\EndFor\label{lin:endbose}
\LeftComment{CNF for the particular triads with only $6$ distinct values}
\State $h=0$;
\For{each triple of triads of keys}
\If{this triple of triads contains only $6$ distinct keys}
\State Setup a conjonctive clause with these $6$ values;
\State increment $h$;
\EndIf
\EndFor
\LeftComment{Setup a DNF via~\cref{alg:disjunctive}}
\State Setup a $1$-out-of-$(h+1)$ device \Comment{one latch for each
conjonction, plus an additional one}
\For{each of the $t$ padlocks}
\State Pass a chain through the hole of each latch corresponding to a
conjunction containing that padlock;
\State Pass the chain through the hole of one free latch of the
$7$-out-of-$t$ device;
\State Close the chain with that padlock.
\EndFor
\State Attach the $7$-out-of-$t$ device to remaining latch of the
$1$-out-of-$(h+1)$ device.
\end{algorithmic}
\end{algorithm}

\begin{algorithm}[htb]
\caption{Opening the device of \cref{alg:threeoutofn}}\label{alg:freethreeoutofn}
\begin{algorithmic}[1]
\Require $n\geq{2}$;
\Require A $3$-out-of-$n$ device made by~\cref{alg:threeoutofn};
\Require The $3$ sets of keys give to $3$ participants
by~\cref{alg:threeoutofn}.
\Ensure The system is opened.
\If{the $3$ users own more than $7$ distinct keys}
\State They open the $7$-out-of-$t$ device, which then opens the
$1$-out-of-$(h+1)$ DNF.
\Else\Comment{Thus they own one of the $h$ groups of $6$ distinct keys}
\State They open their padlocks, this frees one clause of the DNF.
\EndIf
\end{algorithmic}
\end{algorithm}

\begin{restatable}{theorem}{thmbose}\label{thm:bose}
\cref{alg:threeoutofn} is correct and
\[\ell_{3,n} \leq
\min\left\{
  6\displaystyle\left\lceil\frac{\sqrt{24n+1}-1}{12}\right\rceil+1;
  6\displaystyle\left\lceil\frac{\sqrt{24n+1}-5}{12}\right\rceil+3\right\}.
\]
\end{restatable}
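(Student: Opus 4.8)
The plan is to prove the theorem in two parts: first establishing the correctness of \cref{alg:threeoutofn}, and then deriving the stated upper bound on $\ell_{3,n}$ from the $\min$ of two quantities. For correctness, I would verify that the Bose construction on lines~\ref{lin:startbose}--\ref{lin:endbose} produces a valid Steiner triad system on $t=6v+3$ elements, distributing exactly $n$ (or at least $n$) triads among the players. By \cref{prop:steiner}, any Steiner triad system gives rise to a $3$-threshold system, so the combinatorial heart is already available. The remaining work is to confirm that the physical device faithfully implements this: the $7$-out-of-$t$ threshold device handles all triples of triads with $7$ or more distinct keys, while the CNF clauses (set up via \cref{alg:disjunctive}) handle precisely the exceptional triples with only $6$ distinct keys. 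I would check that the DNF wrapping of the $1$-out-of-$(h+1)$ device, chaining each padlock through the latches of every clause it belongs to, correctly frees the door when and only when $3$ valid participants are present, exactly as described in \cref{alg:freethreeoutofn}.

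Next I would handle the counting that produces the two terms in the $\min$. The second term, $6\lceil(\sqrt{24n+1}-5)/12\rceil+3$, comes directly from the parameter choice $v=\lceil(\sqrt{24n+1}-5)/12\rceil$ and $t=6v+3$ in \cref{alg:threeoutofn}. The key is to show this $v$ is large enough: a Steiner triad system on $6v+3$ points (a system of order $\equiv 3 \pmod 6$) has exactly $\binom{6v+3}{2}/3 = (6v+3)(3v+1)/3$ triads, and I would verify that solving for the smallest $v$ making this at least $n$ yields precisely the stated ceiling via the quadratic formula on $24n+1$. The first term, $6\lceil(\sqrt{24n+1}-1)/12\rceil+1$, should correspond to using Steiner triple systems of the other admissible residue, namely order $\equiv 1 \pmod 6$, that is $t=6v+1$; I would redo the same triad count $\binom{6v+1}{2}/3 = (6v+1)v$ and invert it to obtain the second ceiling expression. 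Taking the better of the two admissible orders for a given $n$ gives the $\min$.

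The main obstacle I anticipate is the careful bookkeeping of the ceiling functions and the two residue classes modulo~$6$: Steiner triple systems exist exactly when $t \equiv 1$ or $3 \pmod 6$, and the theorem's $\min$ reflects choosing whichever of these two families yields the smaller number of padlocks while still supplying at least $n$ triads. I would need to ensure that for every $n$ at least one admissible order is covered by the construction and that the algebraic inversion of the triad-count formulas lands exactly on the stated ceilings rather than being off by one. A secondary, more subtle point is justifying that \cref{prop:steiner} (which guarantees the abstract $3$-threshold property) combines correctly with the extra overhead padlocks introduced by the CNF/DNF wrapping without increasing the padlock count beyond $t$: since \cref{prop:cnf} and \cref{alg:disjunctive} use one padlock per distinct variable and here the variables are exactly the $t$ Steiner points, the devices only add \emph{latches} and \emph{chains}, not padlocks, so the total stays at $t$. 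Confirming this invariant is what ties the correctness argument to the bound.
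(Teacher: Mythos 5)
Your proposal follows essentially the same route as the paper's proof: correctness via \cref{prop:steiner} applied to the Bose construction (with the $7$-out-of-$t$ device plus the disjunction of the exceptional $6$-key conjunctive clauses adding only latches and chains, not padlocks), and the two terms of the $\min$ obtained by inverting the triad counts $(2v+1)(3v+1)$ for orders $t\equiv 3\pmod 6$ and $(6\mu+1)\mu$ for orders $t\equiv 1\pmod 6$ (the paper realizes the latter via Skolem's construction). The algebraic inversions you describe land exactly on the stated ceilings, so the plan is sound and matches the paper.
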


\begin{proof}
Any construction of a Steiner triad block
design works.
For instance, the Bose construction~\cite{Bose:1939:bibd} provides
such a design for any~$t=6\nu+3$. It is given in lines
\ref{lin:startbose} to \ref{lin:endbose} of \cref{alg:threeoutofn}.
\cref{prop:steiner} proves that these constructions can be
used as~$3$-threshold systems: use a~$7$-out-of-$t$ design and a large
DNF with all the possible groups of~$6$ distinct values never attained
by pairs of participants.
Further, the Bose construction attain the bound of \Cref{eq:johnson} for
$t=6\nu+3$ and~$k=3$, that
is~$t/3(t-1)/2=(2\nu+1)(3\nu+1)$. Thus for~$n$ players
with~$n\leq(2\nu+1)(3\nu+1)$ one can set up a Bose construction with
$t=6\nu+3$ and discard the blocks between~$n+1$ and~$t(t-1)/6$.
In other words, for a given~$n$, use
$\nu=\left\lceil\frac{\sqrt{24n+1}-5}{12}\right\rceil$ and only
$t=6\nu+3$ padlocks. This proves that \cref{alg:threeoutofn} is
correct.

To achieve the sometimes slightly better bound of the theorem, one needs
to use the Steiner triad system construction by
Skolem~\cite[Lemma 2.5]{Colbourn:1999:triple}.
There use~$t'=6\mu+1$ and~$t'/3(t'-1)/2=(6\mu+1)\mu\geq{n}$, so that now
$\mu=\left\lceil\frac{\sqrt{24n+1}-1}{12}\right\rceil$ and use
$t'=6\mu+1$ padlocks. Without the ceilings,~$t=t'$, for any~$n$. But
like this,~$t'$ and~$t$ are alternatively slightly better than the
other (a difference of~$2$ or~$4$).
\end{proof}

Finally, \cref{fig:lk11_zones} summarizes our current knowledge on the example
of $n=11$.

\begin{figure}[htbp]
\centering%
\input{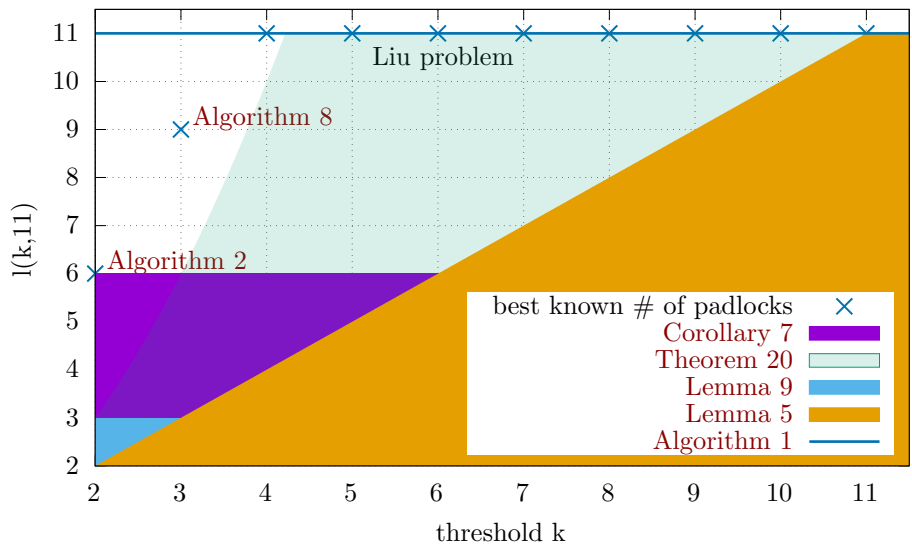}
\caption{n=11: number of padlocks for known k-out-of-11 algorithms
  (only the points above the regions are attainable).}\label{fig:lk11_zones}%
\end{figure}%

\subsection[3-threshold realizations with fewer than n
padlocks]{$3$-threshold realizations with fewer than $n$
  padlocks}\label{app:cex}
We give the smallest example realizing \cref{prop:steiner}:
a $3$-out-of-$12$ system (thus also a $3$-out-of-$11$ system), with
only $9$ padlocks, $36$ keys and $82$ latches, and an example using
normal forms to reduce the number of latches for a $3$-out-of-$13$
system with only $11$ padlocks, $36$ keys and $33$ latches. Indeed,
consider the first terms of \Cref{eq:johnson} for $k=3$ and
$t=0,1,2,\ldots$, this is:
$0, 0, 0, 1, 1, 3, 4, 7, 8, 12, 13, 18, 20, 26, 28,\ldots$~\cite{oeis:johnson}.

\subsubsection[An example realization]{An example realization of \cref{alg:threeoutofn}}

The smallest $t$ such that \Cref{eq:johnson} is strictly
larger than $t$ is for $t=9$ with a bound of $12$ subsets.
Hence, packing with $9$ padlocks is realizable, for instance
with the Steiner triads of keys given in \cref{tab:nine}.

\begin{table}[htbp]
\centering\small
\caption{A maximal $(2,1)-$ packing of order $9$ and blocksize $3$. It has
  $12$ blocks.}\label{tab:nine}
\begin{tabular}{crrrrrrrrrrrr}
\hline
1 & 2 & 3 & 1 & 4 & 3 & 1 & 4 & 2 & 2 & 5 & 1\\
4 & 5 & 6 & 2 & 5 & 7 & 3 & 6 & 7 & 3 & 6 & 8\\
7 & 8 & 9 & 6 & 9 & 8 & 5 & 8 & 9 & 4 & 7 & 9\\
\hline
\end{tabular}%
\end{table}

By inspection, there are $72$ triples of triads (so $3$ participants
owning each $3$ keys) with only $6$ distinct keys (for instance the
triads $<1,2,3>$, $<1,4,8>$, $<2,4,7>$).  The $148$ other triples of
triads have at least $7$ distinct keys (if a triple have a total of
less than $5$ distinct keys it would mean that at least two of them
share a pair).
But \cref{prop:steiner} shows that none of the $72$ sets of six keys obtained with
three triads can be obtained with only a pair of
triads (for instance the triple $(1,2,3);(1,4,8);(2,4,7)$ contains
only the six distinct keys $1,2,3,4,7,8$, but no pair of triples
contain the same set of keys).
The latter ensures that no subset of $2$ participants can unlock the
door. Further, all these $72$ sets of $6$ keys are distinct.

Therefore, it is possible to set up a $3$-out-of-$12$ system using
only $9$ padlocks. The idea of \cref{alg:threeoutofn} is that either a
group owns $7$ distinct keys or it owns one of the $72$ sets of $6$
keys not reachable by a pair of participants.
Overall, that solution uses $9$ padlocks, $9$ chains, $36$ keys,
a $7$-out-of-$9$ and a $1$-out-of-$73$ design (that is $9+73=82$ latches).
The following process gives the instance of \cref{alg:threeoutofn} for
this system:
\begin{enumerate}
\item Set up $9$ padlocks and make $4$ copies of each key;
\item Give $3$ keys to each of the $12$ participants following the packing
  of \cref{tab:nine};
\item Set up a $1$-out-of-$73$ design;
\item Set up a $7$-out-of-$9$ design and attach it to one the latches
  of the $1-73$ design;
\item Use \cref{alg:disjunctive} to complete the $72$ other
  latches: pass a chain through the hole of each latch corresponding
  to a disjunction containing that key; close that chain with the
  associated padlock.
\end{enumerate}

\subsubsection[Bose bound is not enough]{The bound of \cref{thm:bose} is not enough}
Next, we give a small example where there exists a shortcut to use less
latches than with the latter construction. We use some results of
\cref{sec:algebra} to help for the construction.
For $13$ participants, \cref{thm:bose} would provide a system with
either
$15=6\left\lceil\frac{\sqrt{24*13+1}-5}{12}\right\rceil+3$ or
$13=6\left\lceil\frac{\sqrt{24*13+1}-1}{12}\right\rceil+1$ 
padlocks. This is already not better than $13$ padlocks, directly
attainable with our $3$-out-of-$13$ device.

But we even show next a $3$-threshold system for $12$ or $13$ participants
with only $11$ padlocks, $36$ or $39$ keys and only $5$ additional devices for
a total of $33$ latches. We give in \cref{tab:realize},
afterwards, a realization of a packing with $3$-subsets.
Then we proceed by inspection of the triples and pairs of triads of
keys.
\begin{table}[htbp]\centering\small
\caption{Distribution of $11$ keys to $13$ participants without any
  reused pair.}\label{tab:realize}
\begin{tabular}{crrrrrrrrrrrrr}
\hline
Player & 1 & 2 & 3 & 4 & 5 & 6 & 7 & 8 & 9 & 10 & 11 & 12 & 13\\
\hline
\multirow{3}{*}{keys}
& 1 & 1 & 1 & 1 & 1 & 2 & 2 & 2 & 2 & 3 & 3 & 3 & 3\\
& 2 & 4 & 6 & 8 & 10& 4 & 5 & 8 & 9 & 4 & 5 & 8 & 9\\
& 3 & 5 & 7 & 9 & 11& 6 & 7 & 10& 11& 7 & 6 & 11& 10\\
\hline
\end{tabular}%
\end{table}
There are $\binom{13}{3}=286$ triples of triads and among them
$56$ have only $6$ distinct keys. All the other triples have at least
$7$ distinct keys.
Also, there are $\binom{13}{2}=78$ pairs of triads and among them
$24$ have exactly $6$ distinct keys. All the other pairs have at most
$5$ distinct keys.
Further, on the one hand, all those $24$ pairs contain no more and no
less than $2$ keys among $8,9,10,11$.
On the other hand, among the $56$ triples either they contain more
than $3$ keys among $8,9,10,11$ or their $6$ distinct keys are lower
than $7$. This is summarized by \Cref{eq:cex}.
\begin{multline}\label{eq:cex}
(7~\text{out-of}~1{\ldots}11)
~~~\text{\bf OR}~~~\\
\left(~~ (6~\text{out-of}~1{\ldots}11)
 \quad \text{\bf AND } \quad
 ((3~\text{out-of}~8{\ldots}11) ~\text{\bf OR}~ (5~\text{out-of}~1{\ldots}7) )
\right)
\end{multline}

So, by luck, the following construction realizes a $3$-threshold
system for $13$ participants with $11$ padlocks.
We need a $7$-out-of-$11$ device as well as a $6$-out-of-$11$, a
$5$-out-of-$7$ and a $3$-out-of-$4$ of our designs. Finally a
classical $2$-out-of-$2$ device is needed for the {\bf AND} part.
All of these are organized as follows, in order to realize the formula
of \Cref{eq:cex}.

\begin{figure}[htbp]\centering
\includegraphics[width=0.85\textwidth]{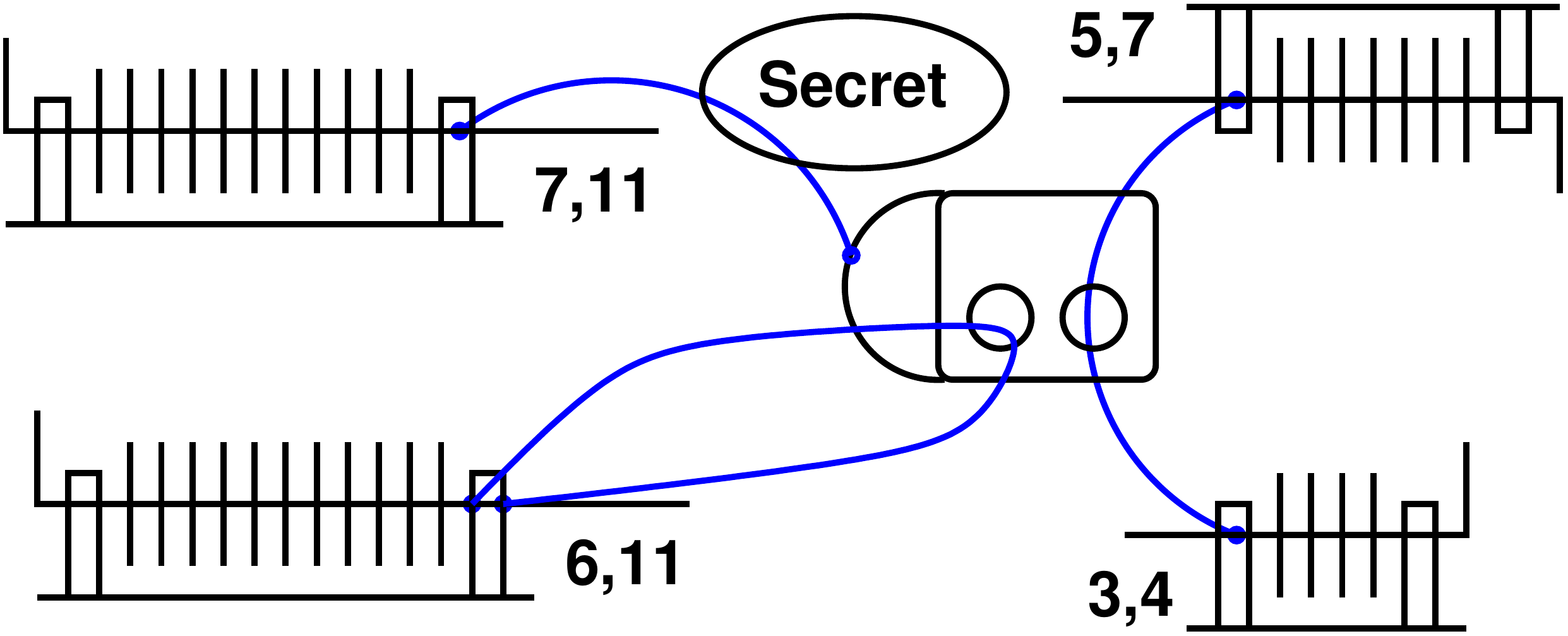}
\caption{A $3$-threshold realization
for $13$ participants with $11$ padlocks.
Each participant owns $3$ keys with the distribution of
\cref{tab:realize}.
On the one hand,
any~$3$ participants have either at
least~$7$ distinct keys or if they have only $6$ keys then they have
at least~$5$ for padlocks numbered~$1$ to~$7$ or at least~$3$ for
padlocks numbered~$8$ to~$11$.
On the other hand, no pair of participants has a total of $6$ distinct
keys and either $5$ of the first seven ones or $3$ for the last four
ones.}\label{fig:realize}%
\end{figure}

Each of the eleven padlocks is used once to close a chain as in
\cref{alg:disjunctive}. For each padlock its associated chain
will go through the hole of each of up to the four devices
(the devices $(7,11)$ and $(6,11)$ have each~$11$ latches so are linked to
all the padlock; while device $(5,7)$ is for the padlocks numbered~$1$
to~$7$ and device $(3,4)$ is for the padlocks numbered~$8$ to~$11$).
This will realize the disjunctions {\bf OR} in
\Cref{eq:cex}.
Finally, the disjunction of the devices $(5,7)$ and $(3,4)$ is
linked via a chain, and that together with the $(6,11)$ device are
associated via a $2$-out-of-$2$ device, as in
\cref{alg:conjunctive}.
The whole system is shown in \cref{fig:realize}.
Overall, it requires fewer padlocks, but quite a bunch of other
devices.

The same system works also for a $3$-threshold realization
for $12$ participants with $11$ padlocks.
Just use the $12$ first triads of keys of~\cref{tab:realize}
with the same system. Yet this solution uses more padlocks than \cref{alg:threeoutofn}.

\cref{fig:l3n_zones} summarizes what we know for $3$-out-of-$n$
systems. We see that for a threshold of three the minimal number of
padlocks is in between ${\mathcal O}(\log(n))$ and
${\mathcal O}(\sqrt{n})$.

\begin{figure}[htbp]
\centering%
\input{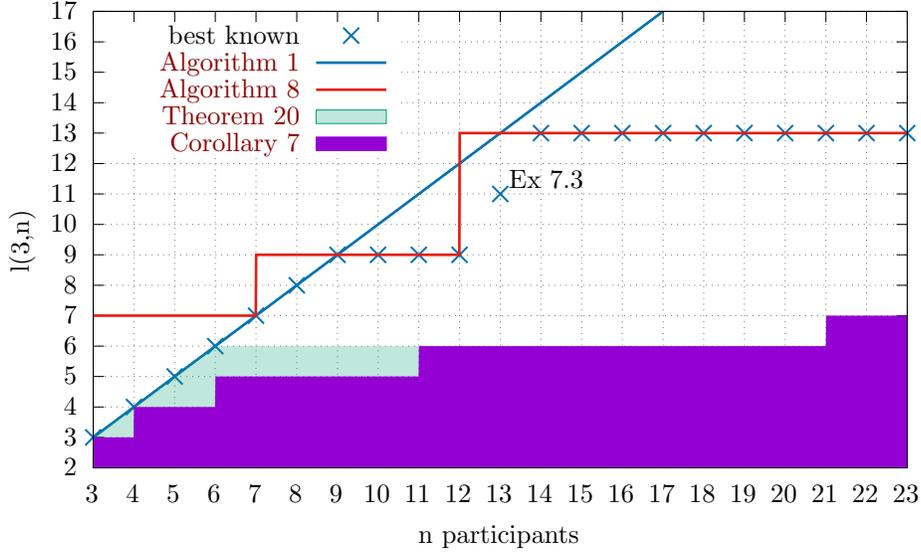}
\caption{k=3: number of padlocks for known 3-out-of-n algorithms (only
  the points below the algorithms lines and above the regions could use less padlocks).}\label{fig:l3n_zones}%
\end{figure}%

\section{A recursive asymptotic construction}\label{sec:rec}
For a larger number of participants, asymptotically, one can reduce
the number of padlocks by making subgroups.
For instance, consider building a $3$-out-of-$n$ system.
Create two subgroups $G$ and $H$ of
$n_0=\left\lceil\frac{n}{2}\right\rceil$
and
$n_1=\left\lfloor\frac{n}{2}\right\rfloor$
participants.
Setup a $3$-out-of-$n_0$ system for the participants of the subgroup
$G$.
Then duplicate all the distributed keys and give them to the members
of subgroup $H$, except potentially for the one supernumerary member of
subgroup $G$.
Then any $3$ participants all in one of the two subgroups can open the
system. Only some triads where one participant is in one subgroup,
and the two others in the other subgroup, cannot open the system yet.
But then, for these cases, we can build a conjunction of a
$1$-out-of-$n_i$ system with a
$2$-out-of-$n_{1-i}$ system.
Any triad of participants are either in a single subgroup or in a
one and two configuration and can open the system.
Now any single or pair of participants cannot open the $3$-out-of-$n_0$
system, nor both of the $1$-out-of-$n_i$ and $2$-out-of-$n_{1-i}$ systems.
Denote by $\PTS{3,n}$ the $3$-out-of-$n$ systems, we have thus
proven that:
\begin{equation}\label{eq:pts3}\begin{split}
\forall{n_0\geq{n_1}\geq{3}},n_0+n_1=n,
\text{OR}(~~ & \PTS{3,n_0};\\
 & \text{AND}\left(\PTS{1,n_0};\PTS{2,n_1}\right);\\
 & \text{AND}\left(\PTS{2,n_0};\PTS{1,n_1}\right)\\
)& \in\PTS{3,n}
\end{split}
\end{equation}

To count the number of keys and padlocks, we first need Faulhaber's
formula:
\begin{equation}\label{eq:faulhaber}
\sum_{j=1}^s j^k = \frac{1}{k+1} s^{k+1} +\BO{s^k}
\end{equation}

Then we need the following formula:
\begin{equation}\label{eq:binomial}\begin{split}
\sum_{k=0}^i (-1)^{i-k} \binom{i}{k}\frac{1}{i-k+1} &=
\frac{1}{i+1}\sum_{k=0}^i (-1)^{i-k} \binom{i+1}{k} \\
&= \frac{1}{i+1}\left(-(-1)^{-1}\binom{i+1}{i+1}+\sum_{k=0}^{i+1}
  (-1)^{i-k} \binom{i+1}{k} \right) \\
&=\frac{1}{i+1}(1+ (1-1)^{i+1})=\frac{1}{i+1}.
\end{split}
\end{equation}

Finally, we need the following
variant of the master theorem.
\begin{lemma}
  \label{lem:masterthm}
For $n\in\N$ and $i\in\N^*$, let $T(n)$ be a function defined by the recurrence $T(n)=T(\lceil{n/2}\rceil)+c\left(\log_2\left(\frac{n}{2}\right)\right)^i + \LO{\log(n)^i}$.
Then
$T(n)= \frac{c}{i+1}\log_2(n)^{i+1} + \LO{\log(n)^{i+1}}$.
\end{lemma}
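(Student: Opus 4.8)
The plan is to unroll the recurrence into a finite telescoping sum over the levels of the recursion, evaluate the dominant sum with Faulhaber's formula \eqref{eq:faulhaber}, and treat both the ceiling rounding and the accumulated $\LO{\log(n)^i}$ terms as lower-order contributions. Concretely, set $n_0=n$ and $n_{j+1}=\lceil n_j/2\rceil$, and let $D$ be the first index for which $n_D$ reaches the constant base value, so that $T(n_D)=\BO{1}$. Unrolling gives
\begin{equation*}
T(n)=T(n_D)+\sum_{j=0}^{D-1}\Bigl[\,c\bigl(\log_2(n_j/2)\bigr)^i+\varepsilon(n_j)\,\Bigr],
\qquad \varepsilon(n_j)=\LO{\log(n_j)^i}.
\end{equation*}
The first step is to control $n_j$: since $n_{j+1}\le n_j/2+\tfrac12$, an easy induction yields $n/2^j\le n_j<n/2^j+1$, whence $D=\Theta(\log n)$ and $\log_2 n_j=(\log_2 n-j)+\theta_j$ with $0\le\theta_j<\log_2(1+2^j/n)$.

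Writing $m=\log_2 n$ and $s=m-1-j$, so that $\log_2(n_j/2)=s+\theta_j$, the dominant term is $c\sum_j(s+\theta_j)^i$. On the head levels where $n_j$ exceeds a fixed constant $C\ge 2$ one has $2^j/n<1/(C-1)$, hence $\theta_j=\BO{2^{-(s+1)}}$; the mean-value bound $\lvert(s+\theta_j)^i-s^i\rvert\le i(s+1)^{i-1}\theta_j$ then makes the total ceiling discrepancy a convergent series $\BO{\sum_s (s+1)^{i-1}2^{-(s+1)}}=\BO{1}$, and the $\BO{1}$ tail levels with $n_j\le C$ contribute only $\BO{1}$ as well. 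What remains is the clean sum $c\sum_s s^i$, with $s$ running essentially from $1$ to $m-1$, which \eqref{eq:faulhaber} evaluates to $\frac{c}{i+1}m^{i+1}+\BO{m^i}$. Thus the dominant part of $T(n)$ is $\frac{c}{i+1}(\log_2 n)^{i+1}+\BO{(\log n)^i}$.

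The step I expect to be the main obstacle is bounding the accumulated error $\sum_{j=0}^{D-1}\varepsilon(n_j)$, since summing $\Theta(\log n)$ individual $\LO{\log(n)^i}$ terms is not automatically $\LO{\log(n)^{i+1}}$. I would argue by $\varepsilon$--$\delta$: fix $\delta>0$; there is a constant $N$ with $\lvert\varepsilon(n_j)\rvert\le\delta\,\log_2(n_j)^i\le\delta\,(\log_2 n)^i$ whenever $n_j\ge N$. At most $D=\BO{\log n}$ levels satisfy this, so they contribute at most $\delta\,(\log_2 n)^{i+1}$; the remaining levels have $n_j<N$, ranging over finitely many values (the last $\BO{\log N}=\BO{1}$ levels), each contributing a bounded amount, for a total of $\BO{1}$. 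Hence the whole accumulated error is at most $\delta\,(\log_2 n)^{i+1}+\BO{1}$ for every $\delta>0$, i.e.\ it is $\LO{(\log n)^{i+1}}$. The same device absorbs the $\BO{(\log n)^i}$ pieces coming from the Faulhaber estimate and the ceiling corrections, yielding $T(n)=\frac{c}{i+1}\log_2(n)^{i+1}+\LO{\log(n)^{i+1}}$ and completing the proof.
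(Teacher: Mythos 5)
Your proof is correct and follows the same top-level strategy as the paper's --- unroll the recurrence into a sum over the $\Theta(\log n)$ levels and evaluate the dominant sum with Faulhaber's formula \Cref{eq:faulhaber} --- but the execution differs in two ways, both to your advantage. First, by substituting $s=\log_2(n)-1-j$ you reduce the main term directly to $c\sum_s s^i$ and apply Faulhaber once with exponent $i$; the paper instead expands $\left(\log_2(n)-j\right)^i$ by the binomial theorem, applies Faulhaber termwise to each $\sum_j j^{i-k}$, and then needs the alternating-sum identity \Cref{eq:binomial} to recombine the coefficients into $\frac{1}{i+1}$. Your change of variable makes that identity unnecessary. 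Second, you explicitly control two points the paper silently elides: the effect of the ceiling in $\lceil n/2\rceil$ (via $n/2^j\le n_j<n/2^j+1$ and a mean-value bound showing the total discrepancy is $\BO{1}$), and the accumulation of $\Theta(\log n)$ individual $\LO{\log(n)^i}$ error terms, which you correctly identify as the delicate step and resolve with an $\varepsilon$--$\delta$ argument giving $\LO{\log(n)^{i+1}}$ in total. (The paper's intermediate displays in fact carry the accumulated error as $\LO{\log(n)^i}$ even after summing over all levels, which is not literally justified; your treatment repairs this.) The two approaches buy different things: the paper's binomial route is shorter on the page because it reuses \Cref{eq:binomial}, while yours is more elementary, self-contained, and rigorous about the rounding and error-accumulation issues.
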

\begin{proof} Expanding $log_2(n)$ times the recurrence, we obtain:
  \begin{equation}
    \begin{split}
      T(n)
      &= T(1)+ \sum_{j=1}^{\log_2 n} c\left(\log_2\left(\frac{n}{2^{j+1}}\right)\right)^i+ \LO{\left(\log\left(\frac{n}{2^{j}}\right)\right)^i}\\
      &= c\sum_{j=1}^{\log_2 n}\left( \sum_{k=0}^i (-1)^{i-k}\binom{i}{k}
      \log_2(n)^k j^{i-k}\right) + \LO{\log(n)^i}\\
       &= c\sum_{k=0}^i\left( (-1)^{i-k}\binom{i}{k}
      \log_2(n)^k \sum_{j=1}^{\log_2 n} j^{i-k}\right) + \LO{\log(n)^i}
    \end{split}
  \end{equation}
Using~\Cref{eq:faulhaber}, with $s=\log_2(n)$, this is:
\begin{equation}\begin{split}
  T(n) &= c\sum_{k=0}^i\left( (-1)^{i-k}\binom{i}{k}
      \log_2(n)^k \frac{1}{i-k+1}\log_2(n)^{i-k+1}\right) + \BO{\log(n)^i}\\
      &= c\log_2(n)^{i+1} \sum_{k=0}^i\left(
      (-1)^{i-k}\binom{i}{k}\frac{1}{i-k+1}\right) + \BO{\log(n)^i}
\end{split}
\end{equation}
Finally, with \Cref{eq:binomial}, we have that:
\begin{equation}\begin{split}
  T(n) &= \frac{c}{i+1}\log_2(n)^{i+1} + \LO{\log(n)^{i+1}}
\end{split}
\end{equation}
\end{proof}

With these, we can now count padlocks and keys for the strategy with
two subgroups of \Cref{eq:pts3}:
\begin{lemma}\label{lem:3log} For $n\geq{6}$,
\[\ell_{3,n}\leq{2\log_2(n)^2+\LO{\log(n)^2}}\]
and the upper bound is attained with an average of
$\frac{1}{2}\log_2(n)^2+\LO{\log(n)^2}$ keys per participant.
\end{lemma}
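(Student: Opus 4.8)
The plan is to convert the recursive membership of \Cref{eq:pts3} into two recurrences of exactly the shape treated by \cref{lem:masterthm} (with $i=1$), one for the number of padlocks and one for the number of keys per participant, and then read off the leading constants. Throughout I take $n_0=\lceil n/2\rceil$ and $n_1=\lfloor n/2\rfloor$, so that one application of \Cref{eq:pts3} replaces a $3$-out-of-$n$ system by a recursively built $3$-out-of-$n_0$ system together with the two conjunctive branches $\text{AND}(\PTS{1,n_0};\PTS{2,n_1})$ and $\text{AND}(\PTS{2,n_0};\PTS{1,n_1})$, all combined by a disjunction. I write $L_3(n)$ for the resulting padlock count and $K(n)$ for the per-participant key count; the condition $n\geq 6$ guarantees $n_0\geq n_1\geq 3$, and the small cases $n\in\{3,4,5\}$ (handled directly by our device) play the role of the $T(1)$ term.

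First I would count padlocks. Since only an upper bound is required, I may assume that no padlock is shared between the three top-level subsystems, so that
\[
L_3(n)\leq L_3(\lceil n/2\rceil)+\bigl(\ell_{1,n_0}+\ell_{2,n_1}\bigr)+\bigl(\ell_{2,n_0}+\ell_{1,n_1}\bigr).
\]
Using $\ell_{1,m}=1$ and the bound $\ell_{2,m}\leq 2\lceil\log_2 m\rceil$ from \cref{thm:twoi}, together with $\lceil\log_2 n_0\rceil=\lceil\log_2 n_1\rceil=\log_2(n/2)+\LO{\log(n)}$, the non-recursive term is $4\log_2(n/2)+\LO{\log(n)}$. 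Thus $L_3$ obeys the recurrence of \cref{lem:masterthm} with $c=4$ and $i=1$, whose solution $\frac{c}{i+1}\log_2(n)^{i+1}+\LO{\log(n)^{i+1}}$ becomes $2\log_2(n)^2+\LO{\log(n)^2}$, the claimed padlock bound.

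Next I would count keys, tracking a single participant and checking that the count is asymptotically the same in either subgroup, so that the per-participant value coincides with the average. A participant of $G$ receives $K(\lceil n/2\rceil)$ keys from the recursive $3$-out-of-$n_0$ system, exactly one key from the $1$-out-of-$n_0$ factor, and $\lceil\log_2 n_0\rceil$ keys from the $2$-out-of-$n_0$ factor of \cref{thm:twoi} (she does not appear in the $n_1$ factors). A participant of $H$ instead gets the duplicated $K(\lceil n/2\rceil)$ keys of the $3$-out-of-$n_0$ system, plus $\lceil\log_2 n_1\rceil$ keys from the $2$-out-of-$n_1$ factor and one key from the $1$-out-of-$n_1$ factor. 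In both cases $K(n)=K(\lceil n/2\rceil)+\log_2(n/2)+\LO{\log(n)}$, i.e. the recurrence of \cref{lem:masterthm} with $c=1$ and $i=1$, yielding $K(n)=\frac{1}{2}\log_2(n)^2+\LO{\log(n)^2}$.

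The routine parts are the estimate $\lceil\log_2\lceil n/2\rceil\rceil=\log_2(n/2)+\LO{\log(n)}$ and the substitution into \cref{lem:masterthm}. The main obstacle is bookkeeping rather than analysis: I must verify that the duplication of the $3$-out-of-$n_0$ keys onto subgroup $H$ (as prescribed by the construction preceding \Cref{eq:pts3}) neither changes the per-participant key count nor creates extra padlocks, and that assigning each $G$- and $H$-participant to exactly the subsystems in which it actually appears makes both branches contribute the single $+1$ and the single $+\log_2(n/2)$ term claimed above. Once this per-participant uniformity is established, the two recurrences have identical form and the constants $2$ and $\frac{1}{2}$ follow immediately from \cref{lem:masterthm}.
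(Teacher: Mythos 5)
Your proposal is correct and follows essentially the same route as the paper's own proof: the same decomposition via \Cref{eq:pts3}, the same bounds $\ell_{1,m}=1$ and $\ell_{2,m}\leq 2\lceil\log_2 m\rceil$ from \cref{thm:twoi}, and the same application of \cref{lem:masterthm} with $i=1$ and $c=4$ (padlocks) resp. $c=1$ (keys). Your extra bookkeeping checking that $G$- and $H$-participants each contribute the same $+1+\log_2(n/2)$ term is a slightly more explicit version of what the paper states, not a different argument.
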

\begin{proof}
To realize \Cref{eq:pts3} we need
1 padlock for \PTS{1,n_0} and another one for \PTS{1,n_1}.
We also need less than $2\lceil\log_2(n_0)\rceil$ padlocks
for \PTS{2,n_0} and similarly
$2\lceil\log_2(n_0)\rceil$ padlocks
for \PTS{2,n_1}, using \cref{thm:twoi}.
finally, \PTS{3,n_0} is realized recursively.
Therefore the number of padlocks for \PTS{3,n}
satisfies
$P_3(n)\leq P_3(n_0)+2+4\log_2(n/2)+\BO{1}$.
\cref{lem:masterthm} then gives
$P_3(n)=\frac{4}{2}\log_2(n)^2+\LO{\log(n)^2}$.

Similarly the participants of subgroup $G$ get $1$ key for
\PTS{1,n_0} and $\log_2(n_0)$ keys for \PTS{2,n_0}.
The participants in the other subgroup get
$1$ key for \PTS{1,n_1} and
 $\log_2(n_1)$ keys for \PTS{2,n_1}.
Then they each get the keys needed for $\PTS{3,n_0}$.
Thus the average number of keys per participant satisfies
$K_3(n)\leq K_3(n/2)+1+\log_2(n/2)+\BO{1}$.
\cref{lem:masterthm} then gives
$K_3(n)=\frac{1}{2}\log_2(n)^2+\LO{\log(n)^2}$.
\end{proof}

Now, this scheme can be generalized for any threshold $k$ as shown in
\cref{alg:recursive}.

\begin{algorithm}[htb]
\caption{Recursive $k$-out-of-$n$ threshold system with shared keys}\label{alg:recursive}
\begin{algorithmic}[1]
\Require $n\geq{k\geq{2}}$.
\Ensure A recursively build $k$-out-of-$n$ threshold padlock system.
\If{$k==2$}
\State\Return \cref{alg:twooutofn}.
\ElsIf{$k(k+1)/2\geq{n}$}
\State\Return a $k$-out-of-$n$ system with $n$ padlocks.\Comment{\cref{thm:2k}}
\Else
\State Let $n_0=\left\lceil\frac{n}{2}\right\rceil$
and $n_1=\left\lfloor\frac{n}{2}\right\rfloor$;
\State Separate the participants in two groups $G$ and $H$ with $n_0$ and $n_1$ members;
\State Recursively setup a $k$-out-of-$n_0$ system;
\State Duplicate the keys of this system and distribute one set to
members of $G$ and the other set to members of $H$;
\State Setup an $1$-out-of-$k$ OR system without padlocks nor keys;
\State Attach the $k$-out-of-$n_0$ system to one of the latches of the
OR system;
\For{$i=1..k-1$}
\State Setup a $2$-out-of-$2$ AND system and attach it to the OR system
\State Setup a $i$-out-of-$n_0$ system for members of the subgroup $G$
and attach it to this AND system;
\State Setup a $(k-i)$-out-of-$n_1$ system for members of the subgroup
$H$ and attach it to this AND system;
\EndFor
\State\Return the OR system openable either via the $k$-out-of-$n_0$
system or via one of the $(k-1)$ AND Systems.
\EndIf
\end{algorithmic}
\end{algorithm}

\begin{theorem}\label{thm:recursive}
\cref{alg:recursive} is correct
and asymptotically requires
\begin{equation}
\begin{cases}
\frac{2^{k-1}}{(k-1)!}\log_2(n)^{k-1}+\LO{\log(n)^{k-1}}~\text{padlocks}\\
\frac{1}{(k-1)!}\log_2(n)^{k-1}+\LO{\log(n)^{k-1}}~\text{keys per participants}\\
\end{cases}
\end{equation}
\end{theorem}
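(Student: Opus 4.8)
The plan is to establish correctness and the two counts simultaneously by induction on the threshold~$k$, recalling that \cref{alg:recursive} sets $n_0=\lceil n/2\rceil$ and $n_1=\lfloor n/2\rfloor$. The base cases are $k=1$ (a single padlock with duplicated keys, so $P_1(n)=K_1(n)=1$), $k=2$ (\cref{thm:twoi}, giving at most $2\lceil\log_2 n\rceil$ padlocks and $\lceil\log_2 n\rceil$ keys per participant), and the terminating branch $k(k+1)/2\ge n$, whose correctness and optimality are guaranteed by \cref{thm:2k}.

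\emph{Correctness.} I would generalize the argument behind \cref{eq:pts3}. Take any group of $k$ participants, with $a$ of them in~$G$ and $b=k-a$ in~$H$. If $a=k$ or $b=k$ the group lies in a single subgroup and thus holds $k$ distinct key-sets of the duplicated recursive $k$-out-of-$n_0$ device, opening it; note that recursing requires $n>k(k+1)/2$, which forces $n_1\ge k$ so the all-in-$H$ case is well defined. Otherwise $1\le a\le k-1$ and the group opens the AND-branch indexed by $i=a$: its $a$-out-of-$n_0$ part is freed by the $a$ members of~$G$ and its $(k-a)$-out-of-$n_1$ part by the $b$ members of~$H$. In each case the top $1$-out-of-$k$ OR device opens. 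For soundness, take at most $k-1$ participants, $a$ in~$G$ and $b$ in~$H$ with $a+b\le k-1$: they cover at most $a+b\le k-1$ of the $n_0$ positions of the recursive device (a member of~$G$ and the matching member of~$H$ share one position), so that device stays closed; and opening AND-branch~$i$ would force $a\ge i$ and $b\ge k-i$, i.e. $a+b\ge k$, impossible. Hence no OR-branch opens. Since every invoked sub-system has a strictly smaller threshold or strictly fewer participants, the induction closes.

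\emph{Counts.} Let $P_k(n)$ be the number of padlocks and $K_k(n)$ the maximal number of keys per participant. Reading off \cref{alg:recursive} (the recursive device together with the $2(k-1)$ sub-systems attached to the AND gates),
\[
P_k(n)=P_k(n_0)+\sum_{i=1}^{k-1}\bigl(P_i(n_0)+P_{k-i}(n_1)\bigr)+\BO{1},
\]
while a participant of~$G$ collects keys for the recursive device and for every $i$-out-of-$n_0$ sub-system, and a participant of~$H$ for the recursive device and every $(k-i)$-out-of-$n_1$ sub-system, whence
\[
K_k(n)\le K_k(n_0)+\sum_{i=1}^{k-1}K_i(n_0)+\BO{1}.
\]
These specialize to the recurrences of \cref{lem:3log} when $k=3$.

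\emph{Asymptotics and main obstacle.} Assuming the stated bounds for all $j<k$, the highest power of $\log$ in $\sum_{i=1}^{k-1}\bigl(P_i(n_0)+P_{k-i}(n_1)\bigr)$ comes only from the two threshold-$(k-1)$ terms $P_{k-1}(n_0)$ (summand $i=k-1$) and $P_{k-1}(n_1)$ (summand $i=1$); every other summand is $\BO{\log(n)^{k-3}}$ and is absorbed. As $n_0$ and $n_1$ are both $n/2+\BO{1}$, these two terms sum to $\tfrac{2^{k-1}}{(k-2)!}\bigl(\log_2(n/2)\bigr)^{k-2}+\LO{\log(n)^{k-2}}$, so
\[
P_k(n)=P_k(\lceil n/2\rceil)+\tfrac{2^{k-1}}{(k-2)!}\bigl(\log_2(n/2)\bigr)^{k-2}+\LO{\log(n)^{k-2}}.
\]
This is exactly the hypothesis of \cref{lem:masterthm} with $i=k-2$ and $c=\tfrac{2^{k-1}}{(k-2)!}$, which yields $P_k(n)=\tfrac{c}{k-1}\log_2(n)^{k-1}+\LO{\log(n)^{k-1}}=\tfrac{2^{k-1}}{(k-1)!}\log_2(n)^{k-1}+\LO{\log(n)^{k-1}}$; the identical computation with leading constant $c=\tfrac{1}{(k-2)!}$ (from $K_{k-1}$) gives the key bound $\tfrac{1}{(k-1)!}\log_2(n)^{k-1}+\LO{\log(n)^{k-1}}$. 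The delicate point is the bookkeeping of this double induction: the induction on~$k$ fixes the per-level constant~$c$, whereas the recursion in~$n$ is resolved by \cref{lem:masterthm} at fixed~$k$, and one must verify that all lower-threshold summands together with the $\BO{1}$ per level, accumulated over the $\BO{\log n}$ levels, stay within $\LO{\log(n)^{k-1}}$. The fact making the constant double at each increment of~$k$ is precisely that only the two threshold-$(k-1)$ sub-systems survive in the leading term.
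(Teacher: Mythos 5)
Your proposal is correct and follows essentially the same route as the paper's proof: induction on $k$ with the recursion in $n$ resolved by \cref{lem:masterthm}, the leading term coming from the two threshold-$(k-1)$ summands ($i=1$ and $i=k-1$), giving the doubling of the constant at each increment of $k$. The only cosmetic differences are that you anchor the induction at $k=2$ via \cref{thm:twoi} where the paper starts at $k=3$ via \cref{lem:3log}, and you bound the \emph{maximal} rather than the \emph{average} number of keys per participant — which happens to yield the same leading term, since each participant only collects keys from the subsystems of her own subgroup.
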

\begin{proof}
  For the correctness, consider a group of at most $k-1$ participants.
  They cannot open the $k$-out-of-$n_0$ system. Then they are
  distributed with $j\in{0..k-1}$ of them in group $G$ and $k-1-j$ in
  group $H$. They can thus open any of the \PTS{\alpha,n_0} for $\alpha=0..j$,
  but none of the corresponding \PTS{k-\alpha,n_1} since $k-\alpha>k-1-j$.
  They can also open any of the \PTS{\beta,n_1} for $\beta=0..k-1-j$,
  but none of the corresponding \PTS{k-\beta,n_0} since $k-\beta>j$.
  So they can never open the system.
  On the contrary, consider a group of at least $k$ participants.
  They are distributed with $j\in{0..k}$ in group G and at least $k-j$
  in group H. Thus they can either open the $k$-out-of-$n_0$ system or
  one of the \PTS{j,n_0} AND \PTS{k-j,n_1} group.

  Now, for the complexity bound, we proceed by induction on
  $k\leq{n}$. The formulae are correct for $k=3$ by \cref{lem:3log}.
  Now suppose that the formulae are correct $\forall{i\leq{k}}$
  and consider \cref{alg:recursive} at $k+1$.
  Then the number of padlocks used by the Algorithm is
  $P_{k+1}(n)=P_{k+1}(n_0)+\sum_{i=1}^{k}P_i(n_0)+P_{k+1-i}(n_1)$
  and the average number of keys per participant is
  $K_{k+1}(n)=K_{k+1}(n_0)+\frac{1}{2}\left(\sum_{i=1}^{k}K_i(n_0)+K_{k+1-i}(n_1)\right)$.
  By the induction hypothesis, the number of padlocks thus
  satisfy:
  \begin{equation}\begin{split}
      P_{k+1}(n) \leq & P_{k+1}(n/2)+\sum_{i=1}^k\left(
      \frac{2^{i-1}}{(i-1)!}\log_2(n/2)^{i-1} +
      \frac{2^{k-i}}{(k-i)!}\log_2(n/2)^{k-i}\right)\\
      &  + \LO{\log(n)^{k-1}}\\
      = &P_{k+1}(n/2) + 2\frac{2^{k-1}}{(k-1)!}\log_2(n/2)^{k-1}
      + \LO{\log(n)^{k-1}};
\end{split}\end{equation}
  and the number of keys satisfies:
\begin{equation}\begin{split}
    K_{k+1}(n)\leq &K_{k+1}(n/2)+\frac{1}{2}\left(\sum_{i=1}^k
      \frac{1}{(i-1)!}\log_2(n/2)^{i-1}\right.\\
    & + \left.\frac{1}{(k-i)!}\log_2(n/2)^{k-i}\right)
    + \LO{\log(n)^{k-1}}\\
    = & K_{k+1}(n/2) + \frac{2}{2}\frac{1}{(k-1)!}\log_2(n/2)^{k-1}
    + \LO{\log(n)^{k-1}}.
\end{split}\end{equation}
  Finally, \cref{lem:masterthm}, applied on both relations shows that:
\begin{align}
  P_{k+1}(n)&=\frac{2^k}{(k-1)!k}\log_2(n)^{k-1+1}+\LO{\log(n)^k}\\
  K_{k+1}(n)&=\frac{1}{(k-1)!k}\log_2(n)^{k-1+1}+\LO{\log(n)^k}
\end{align}
  These establish that the hypothesis is true for $k+1$. Therefore it is
  inductive and the theorem is proven.
\end{proof}

Note that \cref{alg:recursive} is useful only for a large number of
participants. For instance with a threshold of three,
$\lceil{2\log_2(n)^2}\rceil$ is lower than  \cref{thm:bose} only for
$n\geq{33\,922}$.
This is overestimated, but, more precisely, the smallest case where
\cref{alg:recursive} yields less padlocks than \cref{alg:threeoutofn}
is only at $n=1248$. There, we have $63$ padlocks for a
$3$-out-of-$624$ system with \cref{alg:threeoutofn},
then $12$ padlocks for a $2$-out-of-$624$ system via
\cref{alg:twooutofn}. This is a total of $63+2(1+12)=89$ padlocks
where \cref{alg:threeoutofn} alone yields $91$ padlocks.
Now, for $k=4$, and using \cref{thm:bose} when $i=3$,
the smallest case where \cref{alg:recursive} yields less than $n$
padlocks is at $n=114$.
There, we have $57$ padlocks for a $4$-out-of-$57$
system,
then $8$ padlocks for a $2$-out-of-$57$ system via
\cref{alg:twooutofn} and $19$ padlocks for a $3$-out-of-$57$ system
via \cref{alg:threeoutofn}. This is a total of $57+2(1+8+19)=113$ padlocks
for a $4$-out-of-$114$ system via \cref{alg:recursive}.

\section{Secret sharing with reduced field size}\label{sec:fieldsize}
Consider Shamir's secret sharing via interpolation over a finite
field. For a secret value within a finite field $\F_q$, set it
as the evaluation at zero of a degree $k-1$ polynomial whose other
coefficient are randomly sampled. Then
distribute an evaluation of the polynomial at distinct non-zero points
to $n$ participants. This is a $k$-out-of-$n$ threshold system.
It requires that there are enough evaluation points for all the
participants and thus that $q>n$.

We in fact have shown that this is optimal in certain cases, but that one can
use smaller fields in others: instead of a degree $k$ polynomial, use
a degree $t$ polynomial, where $t$ is the number of padlocks in one of
our systems.
This number of padlocks $t$ is in fact the number of available
evaluation points. Then the identical keys for a given padlock are the
evaluations of the polynomial at the points. Thus participants have
several evaluations instead of a single one.
We have thus proposed a $k$-out-of-$n$ secret sharing scheme where the
field size is reduced. For instance, from \cref{thm:recursive}, if
$n\geq{2k}$, then it is sufficient to take the field size $q=\BO{\log(n)^{k-1}}$.

\section{Conclusion}\label{sec:conclusion}

We designed a physical $k$-out-of-$n$ threshold lock that can be used
for various applications, including physical access control, voting or
secret sharing.
Our system only uses $n$ padlocks, showing that previous exponential
answers to Liu's problem were far too pessimistic.
For $k=2$, we were even able to identify an optimal
solution using our device, which 
needs less than $2\lceil\log_2(n)\rceil$ padlocks, but requires duplicating keys.
We also show that for $k\geq\sqrt{2n}$ the minimal
number of padlocks is $n$ (and our device also reaches this).

There are many open questions left, for example 
we have shown that reducing the number of padlocks is equivalent to
reducing the size of the fields for interpolation-based secret
sharing, but further exploration of the links with digital systems could be
envisioned.
Another future work is to find minimal solutions in terms of
padlocks for small cases, in particular for $k$ between $3$ and
$\sqrt{2n}$. For instance, when $k\geq{3}$, Johnson's bound suggests
that it might be possible to build systems with only ${\mathcal
O}(k\sqrt{n})$ padlocks and we were able to prove this for $k=3$.

We also devised algorithms
that can implement more complex access policies beyond simple
thresholds, expressed as disjunctive or conjunctive Boolean formulas.
It is yet unclear for us whether there are general solutions using
less locks than the number of variables. 

We proposed one variant using sealed wire and wrappings to provide
an alternative solution to our device with exactly $n$ padlocks.
The threshold systems we found with this approach unfortunately use an
exponential number of wrappings. It is unclear to us if this could be
improved.

Differently, on the asymptotic side, we have found an algorithm,
recursively combining several of our devices,
requiring only $\BO{log(n)^{k-1}}$ padlocks for $k$-out-of-$n$
threshold systems but we have only a lower bound of $\BO{log(n)}$.

Finally, if we do not only count the number of padlocks, but more
generally the number of keys or of latches, then clearly a lower bound
on the number of devices is $n$: each player must at least have
something. Otherwise groups of $k$ players with an empty player would
have the same abilities of a group of $k-1$ players. 
With this model of complexity, our $k$-out-of-$n$ designs are
asymptotically optimal as they require just $n$ padlocks, $n$ latches
and $n$ keys.

\bibliographystyle{plainurl}
\bibliography{biblio}

\begin{thebibliography}{10}

\bibitem{Ahle:2012:knots}
Thomas~Dybdahl Ahle.
\newblock K out of n encryption, June 2012.
\newblock accessed 08/01/2021.
\newblock URL:
  \url{https://crypto.stackexchange.com/questions/2783/k-out-of-n-encryption}.

\bibitem{10.5555/578775}
Alfred~V. Aho and John~E. Hopcroft.
\newblock {\em The Design and Analysis of Computer Algorithms}.
\newblock Addison-Wesley Longman Publishing Co., Inc., USA, 1st edition, 1974.

\bibitem{10.1007/978-3-540-30576-7_32}
Amos Beimel, Tamir Tassa, and Enav Weinreb.
\newblock Characterizing ideal weighted threshold secret sharing.
\newblock In {\em Proceedings of the Second International Conference on Theory
  of Cryptography}, TCC’05, page 600–619, Berlin, Heidelberg, 2005.
  Springer-Verlag.

\bibitem{Benaloh:1990:crypto88}
Josh Benaloh and Jerry Leichter.
\newblock Generalized secret sharing and monotone functions.
\newblock In Shafi Goldwasser, editor, {\em CRYPTO'88}, pages 27--35, New York,
  NY, 1990.

\bibitem{DBLP:journals/ipl/BlundoSN00}
Carlo Blundo, Alfredo~De Santis, and Moni Naor.
\newblock Visual cryptography for grey level images.
\newblock {\em Inf. Process. Lett.}, 75(6):255--259, 2000.

\bibitem{Bose:1939:bibd}
R.~C. Bose.
\newblock On the construction of balanced incomplete block designs.
\newblock {\em Annals of Eugenics}, 9(4):353--399, 1939.

\bibitem{Chee:2013:packings}
Yeow~Meng Chee, Charles~J. Colbourn, Alan~C.H. Ling, and Richard~M. Wilson.
\newblock Covering and packing for pairs.
\newblock {\em Journal of Combinatorial Theory, Series A}, 120(7):1440 -- 1449,
  2013.

\bibitem{10.1109/SFCS.1985.64}
Benny Chor, Shafi Goldwasser, Silvio Micali, and Baruch Awerbuch.
\newblock Verifiable secret sharing and achieving simultaneity in the presence
  of faults.
\newblock In {\em Proceedings of the 26th Annual Symposium on Foundations of
  Computer Science}, SFCS ’85, page 383–395, USA, 1985.

\bibitem{Colbourn:1999:triple}
Charles~J Colbourn, Alexander Rosa, et~al.
\newblock {\em Triple systems}.
\newblock Oxford University Press, 1999.

\bibitem{Desmedt2011}
Yvo Desmedt.
\newblock Threshold cryptography.
\newblock In Henk C.~A. van Tilborg and Sushil Jajodia, editors, {\em
  Encyclopedia of Cryptography and Security}, pages 1288--1293. Springer US,
  2011.

\bibitem{Everlock:SLX2}
{Everlock}.
\newblock System {SLX2}, 2019.
\newblock \url{https://vimeo.com/180052349}.

\bibitem{6138912}
O.~{Farras} and C.~{Padro}.
\newblock Ideal hierarchical secret sharing schemes.
\newblock {\em IEEE Transactions on Information Theory}, 58(5):3273--3286, May
  2012.

\bibitem{10.1109/SFCS.1987.4}
Paul Feldman.
\newblock A practical scheme for non-interactive verifiable secret sharing.
\newblock In {\em Proceedings of the 28th Annual Symposium on Foundations of
  Computer Science}, SFCS ’87, page 427–438, USA, 1987. IEEE Computer
  Society.

\bibitem{10.1007/s10623-005-6342-0}
Gwoboa Horng, Tzungher Chen, and Du-Shiau Tsai.
\newblock Cheating in visual cryptography.
\newblock {\em Des. Codes Cryptography}, 38(2):219–236, February 2006.

\bibitem{HOU20031619}
Young-Chang Hou.
\newblock Visual cryptography for color images.
\newblock {\em Pattern Recognition}, 36(7):1619 -- 1629, 2003.

\bibitem{Ito:1993:accessstruct}
Mitsuru Ito, Akira Saito, and Takao Nishizeki.
\newblock Multiple assignment scheme for sharing secret.
\newblock {\em J. Cryptol.}, 6(1):15--20, March 1993.
\newblock \href {https://doi.org/10.1007/BF02620229}
  {\path{doi:10.1007/BF02620229}}.

\bibitem{Johnson:1962:bound}
Selmer~M. {Johnson}.
\newblock A new upper bound for error-correcting codes.
\newblock {\em IRE Transactions on Information Theory}, 8(3):203--207, 1962.

\bibitem{10.5555/270146}
Donald~E. Knuth.
\newblock {\em The Art of Computer Programming, Volume 2 (3rd Ed.):
  Seminumerical Algorithms}.
\newblock Addison-Wesley Longman Publishing Co., Inc., USA, 1997.

\bibitem{LEUNG2009929}
Bert~W. Leung, Felix~Y. Ng, and Duncan~S. Wong.
\newblock On the security of a visual cryptography scheme for color images.
\newblock {\em Pattern Recognition}, 42(5):929 -- 940, 2009.

\bibitem{nla.cat-vn2101463}
Chung~Laung Liu.
\newblock {\em Introduction to combinatorial mathematics}.
\newblock McGraw-Hill New York, 1968.

\bibitem{everlock1}
Randy McNeil.
\newblock Multiple padlock latch.
\newblock US Patent Number US6857299B2, 2003.

\bibitem{everlock2}
Randy McNeil.
\newblock Multiple padlock lock system.
\newblock US Patent Number US7503194B2, 2008.

\bibitem{everlock3}
Randy McNeil.
\newblock Multiple padlock locking device.
\newblock US Patent Number US7503194B2, 2014.

\bibitem{everlock4}
Randy McNeil.
\newblock Multiple padlock locking system.
\newblock US Patent Number US9702169B2, 2015.

\bibitem{DBLP:conf/eurocrypt/NaorS94}
Moni Naor and Adi Shamir.
\newblock Visual cryptography.
\newblock In {\em {EUROCRYPT} '94, Workshop on the Theory and Application of
  Cryptographic Techniques}, pages 1--12, 1994.

\bibitem{DBLP:conf/spw/NaorS96}
Moni Naor and Adi Shamir.
\newblock Visual cryptography {II:} improving the contrast via the cover base.
\newblock In {\em Security Protocols, International Workshop, Cambridge, United
  Kingdom, April 10-12, 1996, Proceedings}, pages 197--202, 1996.

\bibitem{nist:threshold}
{NIST}.
\newblock Threshold schemes for cryptographic primitives: Challenges and
  opportunities in standardization and validation of threshold cryptography.
\newblock {\em Federal Register}, 69(8214), 2019.
\newblock \url{https://csrc.nist.gov/Projects/threshold-cryptography}.

\bibitem{oeis:johnson}
{OEIS}.
\newblock The on-line encyclopedia of integer sequences, 2020.
\newblock \url{https://oeis.org/A182079}.

\bibitem{Schoenmakers99asimple}
Berry Schoenmakers.
\newblock A simple publicly verifiable secret sharing scheme and its
  application to electronic voting.
\newblock In {\em CRYPTO'99}, pages 148--164. Springer-Verlag, 1999.

\bibitem{DBLP:journals/cacm/Shamir79}
Adi Shamir.
\newblock How to share a secret.
\newblock {\em Commun. {ACM}}, 22(11):612--613, 1979.

\bibitem{Smolensky:1987:lowcircuits}
Roman Smolensky.
\newblock Algebraic methods in the theory of lower bounds for boolean circuit
  complexity.
\newblock In {\em Proceedings of the Nineteenth Annual ACM Symposium on Theory
  of Computing}, STOC '87, pages 77--82, New York, NY, USA, 1987. Association
  for Computing Machinery.
\newblock \href {https://doi.org/10.1145/28395.28404}
  {\path{doi:10.1145/28395.28404}}.

\bibitem{Sperner:1928:clutter}
Emanuel Sperner.
\newblock Ein {S}atz {\"u}ber die {U}ntermengen einer endlichen {M}enge.
\newblock {\em Mathematische Zeitschrift}, 27:544--548, 1928.

\bibitem{Tayhope:multilock}
{Tayhope}.
\newblock Multi-locking systems, 2020.
\newblock
  \url{http://sancy.univ-bpclermont.fr/~lafourcade/VIDEOS/secu/tayhope_voiceover.mp4}.

\bibitem{10.1023/A:1008280705142}
Eric~R. Verheul and Henk C.~A. Van~Tilborg.
\newblock Constructions and properties of k-out-of-n visual secret sharing
  schemes.
\newblock {\em Des. Codes Cryptography}, 11(2):179–196, May 1997.

\end{thebibliography}

\end{document}